\documentclass[a4paper,11pt]{article}

\usepackage{fullpage}
\usepackage{times}
\usepackage{soul}
\usepackage{url}
\usepackage[utf8]{inputenc}
\usepackage[small]{caption}
\usepackage{graphicx}
\usepackage{xcolor}
\usepackage{amsmath}
\usepackage{booktabs}
\usepackage{algorithm}
\usepackage{enumerate}
\usepackage{dsfont}
\usepackage{tikz}
\usepackage{natbib}
\usepackage{hyperref}
\hypersetup{
    unicode=false,          % non-Latin characters in Acrobat’s bookmarks
    colorlinks=true,        % false: boxed links; true: colored links
    linkcolor=red,          % color of internal links (change box color with linkbordercolor)
    citecolor=blue,    % color of links to bibliography
    filecolor=magenta,      % color of file links
    urlcolor=blue           % color of external links
}
 
\usepackage{breakurl}

\usepackage{subcaption} 
\usepackage{todonotes}
\usepackage{etoc}
\usepackage{xspace,xcolor}
\usepackage{amsthm}
\usepackage{amssymb}
\usepackage{algpseudocode}
\etocdepthtag.toc{main}

\usepackage{bbm}

\newtheorem{theorem}{Theorem}[section]

\newtheorem{proposition}[theorem]{Proposition}
\newtheorem{lemma}[theorem]{Lemma}

\theoremstyle{definition}
\newtheorem{definition}[theorem]{Definition}

\newcommand{\OPT}{\mathrm{OPT}}

\newcommand{\eps}{\varepsilon}

\usepackage[capitalize, nameinlink]{cleveref}

\usepackage{mathtools}

\usepackage{todonotes}

\allowdisplaybreaks

\usepackage{authblk}

\title{\bf Efficient Ensemble Selection from Binary and \\ Pairwise Feedback}

\author[1,*]{Tzeh Yuan Neoh}
\author[2,*]{Nicholas Teh}
\author[1]{Je Qin Chooi}
\author[2]{Paul W. Goldberg}
\author[1]{Milind Tambe}

\affil[1]{Harvard University, USA}
\affil[2]{University of Oxford, UK}

\date{\vspace{-10mm}}
\begin{document}

\maketitle

\begin{abstract}
    Organizations increasingly deploy multiple AI systems across task domains, but selecting a small, high-performing ensemble can require costly model calls, benchmark runs, and human evaluation. We study this selection problem as a distributional variant of multiwinner voting: tasks are drawn from an unknown domain distribution, each task induces feedback over candidate experts, and a committee's value on a task is determined by its best-performing member. We analyze both binary feedback, for tasks with correct/incorrect outcomes, and pairwise feedback, for tasks where candidate outputs are compared by preference. In the binary setting, the induced objective is coverage. We give exhaustive-elicitation baselines and matching worst-case query lower bounds, and we design a failure-conditioned greedy algorithm that preserves the standard $(1-1/e)$ guarantee while obtaining instance-dependent query savings. In the pairwise setting, we study $\theta$-winning committees. We show that full-information optimization admits a PTAS but no EPTAS under Gap-ETH, and that the objective is monotone but not submodular. This motivates a weighted ordinal coverage relaxation, which is submodular and supports a failure-conditioned greedy oracle under pairwise feedback. We then convert this oracle back into $\theta$-type guarantees through finite-family auditing or a minimax wrapper. We also provide small-scale LLM experiments illustrating the predicted query savings and the role of complementarity in committee selection.
\end{abstract}

\def\thefootnote{*}\footnotetext{These authors contributed equally to this work and are co-first authors.}

\section{Introduction}
Organizations are increasingly incorporating LLMs into everyday workflows across business functions \citep{mckinsey2025stateai}. 
However, choosing the right AI system is non-trivial: required capabilities and organizational priorities are multifaceted, and no single model dominates across all criteria~\citep{ni2025survey,wang2023decodingtrust}. 
Moreover, the relevant choice set is often much larger than the set of base models, since a single model can give rise to many candidate systems through different deployment choices~\citep{khattab2023dspy}. Selecting a system mainly by reputation, or without task-specific testing, can therefore mismatch the system to the use case, with failures appearing only after deployment~\citep{aws2025bedrockevaluation}. 
The problem is further complicated by the fact that many organizations now deploy multiple such systems across use cases, rather than relying on a single model~\citep{a16z2025enterprise}.
Motivated by these developments, we study how to choose an ensemble
$S\subseteq C$ from a set $C$ of $m$ experts for a domain $D$ of
tasks. We restrict attention to ensembles of size at most $k$. This
budget captures deployment-time costs: larger ensembles may require more
model calls, more routing or scoring, and more downstream validation. Our
query complexity results concern only the pre-deployment selection phase.
Human judgments are counted only when they are used to evaluate candidate
outputs during this phase; we do not assume human labeling after deployment.

In our model, the value of an ensemble on a task is the value of its best
member on that task. This captures objective tasks, such as proof
construction or error detection, where candidate outputs can be checked for
correctness. It also captures subjective tasks, such as writing or image
design, where the user may care mainly about the best output produced by the
ensemble.

To identify a suitable ensemble, we model an internal trial phase in which tasks are drawn from the domain distribution $D$ and candidate experts are evaluated on these tasks. A query is an evaluation action: in the binary model it reveals whether a candidate solves a sampled task, and in the pairwise model it reveals which of two candidate outputs is preferred on a sampled task. We aim to minimize the number of such queries needed to select a high-performing ensemble, since they correspond to selection-time costs such as model calls, benchmark execution, and human or automated assessment.

\medskip
\noindent
\textbf{Our approach and contributions.} 
We use multiwinner voting as a framework for ensemble selection. 
In classical multiwinner voting, a finite set of voters reports preferences
over candidates, and the rule selects a committee of size $k$~\citep{brandt2016handbook}.
Here, a voter is replaced by a task drawn from $D$. The task population may be large or infinite, and the algorithm observes only sampled tasks together
with the feedback it chooses to query. This gives a sampled-task version of
multiwinner voting with partial feedback. To our knowledge, this exact model
has not been studied before; see Section~\ref{sec:relatedworks} for nearby and concurrent work.

The distinction between objective and subjective tasks matches two standard
ballot types in computational social choice~\citep{brandt2016handbook}. Objective tasks resemble
approval ballots: an expert either solves the task or does not. Subjective
generation tasks resemble ordinal ballots: each task induces a ranking over
experts. This mirrors common AI evaluation practice. Benchmark evaluations
such as MMLU-Pro score models by correctness~\citep{wang2024mmlu}, while preference evaluations
such as Chatbot Arena use pairwise comparisons~\citep{chiang2024chatbotarena}.
This gives rise to the following research questions (RQ):

% The objective is not unique a priori; different deployment semantics could lead to different formalizations. We focus on the objectives induced by the best-expert ensemble semantics above. Under binary feedback, a task is covered if at least one selected expert solves it. Under pairwise feedback, a task is covered relative to a rival if at least one selected expert is ranked at least as highly as that rival. This leads to the following questions:
\begin{itemize}
    \item[\textbf{RQ1.}] How should we formalize the notion of a good ensemble for a task domain?
    \item[\textbf{RQ2.}] Given full information about a domain, how well can an optimal ensemble be approximated efficiently?
    \item[\textbf{RQ3.}] Given only sampled tasks from the domain, how does the number of samples affect the approximation guarantees we can obtain, independently of computational constraints?
\end{itemize}

In Section~\ref{sec:approval}, we study binary feedback. The objective is coverage: the probability that at least one selected expert solves a task drawn from $D$.
This is the approval Chamberlin-Courant  objective~\citep{SkowronFaliszewski2017} on sampled tasks. The finite fully observed problem is \text{MaxCover}, so greedy gives the standard $(1-1/e)$-approximation~\citep{NemhauserWolseyFisher1978}, and this ratio is tight unless $\mathrm{P}=\mathrm{NP}$~\citep{Feige1998}. Our main contribution in this setting is a query-efficient greedy algorithm: it focuses candidate queries on tasks missed by the current committee and therefore saves queries on favorable instances while matching the usual worst-case limits.

In Section~\ref{sec:pairwise}, we study pairwise feedback. Since majority cycles can rule out a deterministic committee that beats every other size-$k$ committee, we use $\theta$-winning committees, which compare a committee against each outside candidate. With full rankings, optimizing $\theta$ admits a PTAS but no EPTAS under Gap-ETH. The function $\theta$ is monotone but not submodular, so greedy does not directly apply. We therefore introduce a weighted ordinal coverage objective $\Phi_\lambda$, which is submodular for every rival distribution $\lambda$. We then convert this fixed-$\lambda$ oracle back into $\theta$-type guarantees by auditing finite families of committees or
by using a minimax wrapper.

In Section~\ref{sec:experiments}, experiments on multilingual QA and score-derived pairwise LLM evaluations, covering both the binary and ordinal settings, show that failure-conditioned methods match or improve budgeted baselines while using queries on the residual failures of the current committee, as predicted by the instance-dependent theory. They also show that selecting ensembles based on complementarity can
significantly outperform top-$k$ individual-model baselines, echoing the idea
that diversity can beat individual strength in team
selection~\citep{hong2004groups,licalzi2012power}. This is especially relevant
for arena-style AI leaderboards \citep{chiang2024chatbotarena}, which rank individual models rather than
complementary teams.
\section{Related Work}\label{sec:relatedworks}

\textbf{Committee selection from binary feedback.}
Our binary feedback model is closely related to approval-based committee voting, especially the approval Chamberlin-Courant objective: choose a size-$k$ committee that represents as many voters as possible
\citep{chamberlin1983representative,lackner2023multiwinner,SkowronFaliszewski2017}.
Equivalently, it is a maximum-coverage problem, for which the greedy algorithm gives the standard $(1-1/e)$ approximation and this factor is essentially tight under standard complexity assumptions
\citep{Feige1998,NemhauserWolseyFisher1978}.
The key difference in our
setting is that the ``voters'' are not a fixed, fully observed population.
They are tasks drawn from an unknown distribution, and observing whether an
expert solves a task requires a query.  Thus the main challenge is not only
computational, but also statistical and query-efficient: we must decide which
expert-task outcomes to observe in order to select a good committee.

Prior work has studied committee selection when ballots are incomplete or only
partly revealed \citep{halpern2026representation,imber2025approval,lindeboom2025diverse,lu2013multiwinner}.
These works usually start from a fixed election and ask what can be inferred
from missing or uncertain ballots.  
Recent work on query-based committee selection also studies how to ask structured questions about voters'
preferences under a limited budget \citep{zimet2026query}. Query-efficient elicitation has also been studied for randomized (socially acceptable) choices in AI deployment: \citet{choo2026learnqueries} consider the problem of learning whether there exists a lottery over options that is acceptable to all stakeholders, using only binary accept/reject responses to proposed lotteries. Our model is different from these fixed-election and acceptability-query settings in two ways. First, the objective is distributional: the committee is evaluated on future tasks sampled from the same population, rather than on a fixed electorate.
Second, our algorithms exploit the special structure of ensemble coverage.  In
particular, the failure-conditioned greedy algorithm queries new candidates
mainly on tasks not already covered by the current committee.  This preserves
the classical approximation guarantee while reducing the number of model calls
needed to find a strong ensemble.

\medskip
\noindent
\textbf{Pairwise preferences and Condorcet-style committees.}
Our pairwise feedback model connects to the literature on learning and
aggregating preferences from comparisons.  Pairwise comparisons are common in modern AI evaluation, including Bradley-Terry style models and arena-based leaderboards \citep{amelistatistical,bradley1952rank,chiang2024chatbotarena}.  Several recent
works argue that a single global ranking can hide important disagreement across tasks or users, and propose more robust or pluralistic evaluation procedures \citep{haghtalab2026pluralistic,khalaf2026robust,lanctot2025soft}.  These works
focus mainly on ranking or evaluating individual systems.  Our goal is different:
we select a small set of systems whose best member performs well on each task.

The closest social choice notion for our pairwise setting is the
$\theta$-winning set of \citet{elkind2011choosing}, which asks whether a committee can beat every outside candidate with sufficiently high probability.
Recent work has sharpened the size of such sets needed in worst-case majority tournaments \citep{charikar2025six,song2026few}.  We use this idea in a distributional
setting where comparisons are sampled from tasks and must be queried.  Our
results show both the promise and the limitation of this objective.  Maximizing
$\theta$ admits a PTAS under full information, but the objective is not
submodular, so the direct analogue of greedy coverage fails.  To obtain
query-efficient algorithms, we introduce a weighted ordinal coverage surrogate
that is monotone and submodular, optimize it with failure-conditioned queries,
and then convert the resulting coverage guarantee into a $\theta$ guarantee by
auditing only a finite family of possible outside challengers.

\medskip
\noindent
\textbf{LLM ensembles and routing.}
There is a growing systems literature on combining large language models.
Representative examples include output-level ensembling and ranking
\citep{jiang2023llmblender}, cost-aware model cascading
\citep{chen2024frugalgpt}, hybrid use of stronger and weaker models
\citep{ding2024hybrid}, learned routing between models
\citep{ong2025routellm}, and benchmarks for router training and evaluation
\citep{hu2024routerbench}.  These methods show that different models have
different strengths and that routing can reduce cost while maintaining quality.
Our contribution is complementary.  We do not assume a trained router, calibrated
scores, or full evaluations of all models on all tasks.  Instead, we give
query complexity guarantees for the upstream problem of selecting a small
ensemble from binary or pairwise feedback.  The experiments show why this matters:
the committees chosen by our algorithms exploit complementarity across models
and can outperform simply taking the individually strongest models, while using
far fewer evaluations than exhaustive search.

\medskip
\noindent
An extended discussion on additional related work is deferred to Appendix~\ref{app:additional-related-work}.

\section{Preliminaries}
For $z \in \mathbb{N}$, denote $[z] := \{1, \dots, z\}$. For a finite set $A$, let $\Delta(A)$ denote the probability simplex over $A$. 
Let $C= [m]$ be a set of $m$ \emph{experts} (also called \emph{candidates}).
An \emph{ensemble} (or \emph{committee}) is a subset $S\subseteq C$. 
For
$k\in\{0,\dots,m\}$, let $\mathcal S_k := \{S\subseteq C : |S|=k\}$ denote the family of size-$k$ committees. Tasks are drawn i.i.d. from an
underlying task distribution $\mathcal D$; we write $d\sim\mathcal D$ for a
sampled task.

\medskip
\noindent
\textbf{Binary feedback.}
In the binary feedback model, a task $d$ induces one bit for each expert:
$u(d,c)\in\{0,1\}$. The value $u(d,c)=1$ means that expert $c$ solves
task $d$, and $u(d,c)=0$ means that it does not. The binary feedback oracle
takes a sampled task-expert pair $(d,c)$ and returns $u(d,c)$.

For a committee $S\subseteq C$, define its task-level binary utility by $U(d,S) := \mathbf{1}\{\exists c\in S : u(d,c)=1\}$.
Equivalently, $U(d,S)=\max_{c\in S}u(d,c)$ for $S\neq\varnothing$, with the
convention $U(d,\varnothing)=0$. The \emph{domain-level coverage} of $S$ is $v(S) := \Pr_{d\sim\mathcal D}[U(d,S)=1]$, and the optimal size-$k$ binary feedback value is $\OPT_k := \max_{S\in\mathcal S_k} v(S)$.

\medskip
\noindent
\textbf{Pairwise feedback.}
In the pairwise feedback model, a task $d$ induces a strict ranking $\pi_d$
of the experts in $C$. Equivalently, $\mathcal D$ induces a distribution
$P_{\mathcal D}$ over the set $\mathfrak S_m$ of all permutations of $C$.
We denote $\pi\sim P_{\mathcal D}$ the latent ranking induced by a sampled
task. 
For $c \in C$, let $\mathrm{rank}_{\pi}(c) \in \{1,\dots,m\}$ denote
the position of $c$ in $\pi$, where rank $1$ is best. We write
$a \succ_{\pi} b$ as shorthand for
$\mathrm{rank}_{\pi}(a) < \mathrm{rank}_{\pi}(b)$.
The learner does not observe $\pi$ directly. The pairwise feedback oracle,
given a sampled task with latent ranking $\pi$ and two distinct experts
$a,b\in C$, returns $\mathrm{Query}(a,b;\pi)
    := \mathbf{1}\{\mathrm{rank}_{\pi}(a) < \mathrm{rank}_{\pi}(b)\}$.
Thus the oracle reports whether $a$ is ranked above $b$ on that task.

Committees are compared through their best-ranked members. For
$S\subseteq C$, define $r_{\pi}(S) := \min_{c\in S}\mathrm{rank}_{\pi}(c)$ if $S \neq \varnothing$ and $r_{\pi}(S) := m+1$ if $S = \varnothing$.
Thus $r_{\pi}(S)$ is the rank of the best member of $S$, with the empty
committee placed below every expert. For nonempty $A\subseteq C$, let
$\top_{\pi}(A)$ denote the unique expert in $A$ attaining $r_{\pi}(A)$.

For two committees $S,S'\subseteq C$, define the pairwise win rate of $S$
against $S'$ by $\mathrm{WIN}(S,S')
    :=
    \mathbb E_{\pi\sim P_{\mathcal D}}
    [\mathbf{1}\{r_{\pi}(S)<r_{\pi}(S')\}
        + \frac12\mathbf{1}\{r_{\pi}(S)=r_{\pi}(S')\}]$.
This gives half credit to ties, which can arise when the two committees overlap.

The formal optimization problems use exact size committees $\mathcal S_k$.
This is the same benchmark as an at-most-$k$ budget for the committee
performance notions above: adding an expert cannot decrease $U(d,S)$ in the
binary model and cannot increase $r_{\pi}(S)$ in the pairwise model. Hence a
smaller committee can be padded to size $k$ without worsening its performance.
Whenever $C\setminus S$ appears in a definition, $S$ is assumed to be a
proper committee; for $S\in\mathcal S_k$, this requires $k<m$.

\medskip
\noindent
\textbf{Query access.}
A sampled task is observed only as a handle to its latent feedback profile. In the binary model, the latent profile is the vector $(u(d,c))_{c \in C}$; in the pairwise model, it is the ranking $\pi_d$. The selection algorithm may store sampled tasks and query them later, choosing each query adaptively from the feedback observed so far. For a fixed sampled task, the latent profile is fixed, so repeating the same query returns the same answer and is redundant. In the pairwise model, the answer to $(a,b)$ also determines the answer to $(b,a)$. Query complexity counts oracle calls. We use ``committee evaluation'' for queries used to test the current committee on a task and ``candidate-evaluation'' for queries used to evaluate possible additions; when both appear, their costs are reported separately.

\section{Binary Feedback} \label{sec:approval}
In the binary feedback setting, the goal is to find $S\in\mathcal S_k$
with maximum coverage $v(S)$. For binary utilities, the task-wise win rate
of $S$ against $S'$ is $\mathrm{WIN}(S,S')
=
\Pr[U(d,S)>U(d,S')]
+\frac12\Pr[U(d,S)=U(d,S')]$ where $d\sim D$. 
Since $U(d,S),U(d,S')\in\{0,1\}$, $\mathrm{WIN}(S,S')
=
\frac12+\frac12 (v(S)-v(S'))$.
Thus a coverage-maximizing committee does not lose, under this comparison,
to any other committee of the same size.

With full information, maximizing $v$ is approval Chamberlin-Courant on
sampled tasks~\citep{SkowronFaliszewski2017}. Each task acts as a voter and approves exactly the experts that solve it. A committee covers a task if at least one of its members is
approved. For a finite fully observed sample, the problem is weighted
MaxCover. Hence greedy achieves the optimal polynomial-time
$(1-1/e)$-approximation, and improving this factor is NP-hard~\citep{Feige1998,NemhauserWolseyFisher1978}. The contribution of this section is a query complexity analysis for learning a high-coverage committee from sampled tasks.

\subsection{Baseline: Exhaustive Elicitation}
\label{subsec:approval-baseline}

A direct benchmark draws $d_1,\dots,d_T\sim D$ and queries every expert
on every sampled task, using exactly $Q=mT$ oracle queries. For each
committee $S\subseteq C$, define $\widehat v_T(S) := \frac1T\sum_{t=1}^T U(d_t,S)$,
and let $\widehat S_{\mathrm{ERM}}
    \in \arg\max_{S\in\mathcal S_k} \widehat v_T(S)$.
This final optimization may require searching over all
$\binom{m}{k}$ committees.

\begin{theorem}
\label{thm:approval-erm}
    Fix $\varepsilon,\delta\in(0,1)$. If $T \ge \frac{2}{\varepsilon^2}
    (\log {m\choose k}+\log (2/\delta))$,
    then, with probability at least $1-\delta$, $v(\widehat S_{\mathrm{ERM}})\ge \mathrm{OPT}_k-\varepsilon$.
\end{theorem}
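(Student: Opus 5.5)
The plan is a uniform-convergence argument over the finite class $\mathcal S_k$, combined with the usual ERM comparison. For each fixed committee $S\in\mathcal S_k$, the random variables $U(d_1,S),\dots,U(d_T,S)$ are i.i.d.\ $\{0,1\}$-valued, since the tasks are drawn i.i.d.\ from $\mathcal D$. Their mean is $v(S)$. Hoeffding's inequality then gives
\[
\Pr\bigl[|\widehat v_T(S)-v(S)|>\varepsilon/2\bigr]\le 2\exp\bigl(-2T(\varepsilon/2)^2\bigr)=2\exp(-T\varepsilon^2/2).
\]

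\textbf{Union bound.} A union bound over the $\binom{m}{k}$ committees in $\mathcal S_k$ shows that the event
\[
E:=\bigl\{\forall S\in\mathcal S_k:\ |\widehat v_T(S)-v(S)|\le\varepsilon/2\bigr\}
\]
fails with probability at most $2\binom{m}{k}\exp(-T\varepsilon^2/2)$. This is at most $\delta$ exactly when
\[
T\ge \frac{2}{\varepsilon^2}\Bigl(\log\binom{m}{k}+\log(2/\delta)\Bigr),
\]
which is the hypothesis of Theorem~\ref{thm:approval-erm}. The constants therefore line up with accuracy $\varepsilon/2$ per committee.

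\textbf{ERM comparison.} Let $S^\star\in\arg\max_{S\in\mathcal S_k}v(S)$, so $v(S^\star)=\OPT_k$. On $E$, using that $\widehat S_{\mathrm{ERM}}$ maximizes $\widehat v_T$ over $\mathcal S_k$,
\[
v(\widehat S_{\mathrm{ERM}})\ge \widehat v_T(\widehat S_{\mathrm{ERM}})-\varepsilon/2\ge \widehat v_T(S^\star)-\varepsilon/2\ge v(S^\star)-\varepsilon=\OPT_k-\varepsilon.
\]
This holds with probability at least $1-\delta$, as claimed.

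\textbf{Main obstacle.} The only point needing care is that $\widehat S_{\mathrm{ERM}}$ is data-dependent, so a single-committee concentration bound does not apply to it directly. The uniform event $E$ handles this, and it is why the $\log\binom{m}{k}$ term appears. One must also note that tie-breaking in the $\arg\max$ is irrelevant, since the argument only uses $\widehat v_T(\widehat S_{\mathrm{ERM}})\ge\widehat v_T(S^\star)$. Finally, the fact that the algorithm queries every expert on every task ensures that $\widehat v_T(S)$ is computable exactly for every $S$, so no estimation beyond the sampling error arises.
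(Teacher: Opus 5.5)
Your proposal is correct and follows the paper's proof essentially line for line. Both apply Hoeffding at accuracy $\varepsilon/2$ to each fixed committee, then take a union bound over the $\binom{m}{k}$ committees in $\mathcal S_k$; the resulting failure probability $2\binom{m}{k}e^{-T\varepsilon^2/2}$ is at most $\delta$ exactly under the stated condition on $T$, reading $\log$ as the natural logarithm. On the resulting uniform event, both finish with the three-step ERM comparison against $S^\star$.
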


This worst-case statistical dependence is unavoidable already for
singletons.

\begin{theorem}
\label{thm:approval-lower}
    Assume $m\ge 2$, $0<\varepsilon\le 1/8$, and $k=1$. Any adaptive algorithm that returns $\widehat c\in C$ with $v(\{\widehat c\})\ge \mathrm{OPT}_1-\varepsilon$ with probability at least $2/3$ on every instance has worst-case expected query complexity $\Omega(m/\varepsilon^2)$.
\end{theorem}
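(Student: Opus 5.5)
We reduce to the best-arm identification lower bound for Bernoulli bandits, following the change-of-measure argument of Mannor--Tsitsiklis / Kaufmann--Cappé--Garivier. The key observation is that, with $k=1$, each oracle call $(d,c)$ on a freshly sampled or stored task returns one bit $u(d,c)$. We choose instances in which the coordinates $(u(d,c))_{c\in C}$ are independent across $c$, so that querying expert $c$ on a task is exactly a pull of arm $c$ with mean $\mu_c=v(\{c\})$. Reusing a stored task on a different expert gives no extra information about other experts, because of the independence. Repeating a query on the same task is redundant by the query-access model. Thus any adaptive algorithm for our problem induces a bandit algorithm with the same number of pulls, and it suffices to lower bound bandit pulls.

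Construct the instances as follows. Set the base instance $\nu_0$ with $\mu_1=\tfrac12+2\varepsilon$ and $\mu_c=\tfrac12$ for $c\ge2$. For each $j\ge2$, let $\nu_j$ be $\nu_0$ with $\mu_j$ raised to $\tfrac12+4\varepsilon$. The constraint $\varepsilon\le1/8$ keeps all means in $[1/2,1]$; a symmetric shift keeps them strictly inside $(0,1)$ if needed. In $\nu_0$ the only $\varepsilon$-good answer is expert $1$, since the others are $2\varepsilon$ below. In $\nu_j$ the only $\varepsilon$-good answer is $j$, since expert $1$ is $2\varepsilon$ below it. Let $E$ be the event that the output is $1$. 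Then $\Pr_{\nu_0}[E]\ge2/3$ and $\Pr_{\nu_j}[E]\le1/3$.

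By the divergence decomposition together with the data-processing inequality (the Bretagnolle--Huber or binary-KL form),
\[
\mathbb E_{\nu_0}[N_j]\,\mathrm{KL}\bigl(\mathrm{Ber}(\tfrac12)\,\big\|\,\mathrm{Ber}(\tfrac12+4\varepsilon)\bigr)\;\ge\; \mathrm{kl}(2/3,1/3)\;=\;\Omega(1),
\]
where $N_j$ is the number of queries to expert $j$. Since the means lie in $[1/4,3/4]$, the KL term is $O(\varepsilon^2)$. Hence $\mathbb E_{\nu_0}[N_j]=\Omega(1/\varepsilon^2)$ for each of the $m-1$ experts $j\ge2$. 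Summing gives $\mathbb E_{\nu_0}[\sum_c N_c]=\Omega((m-1)/\varepsilon^2)=\Omega(m/\varepsilon^2)$ for $m\ge2$, so the worst-case expected query complexity is $\Omega(m/\varepsilon^2)$.

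The main obstacle is justifying the reduction rigorously. The algorithm may interleave queries across stored tasks in arbitrary adaptive order, so we must verify that the likelihood ratio between $\nu_0$ and $\nu_j$ factorizes over the observed bits with only expert-$j$ bits contributing. This holds because tasks are i.i.d., coordinates within a task are independent, and $\nu_0,\nu_j$ differ only in the marginal of coordinate $j$. We would formalize this by writing the transcript law as a product over queries of the conditional probability of each revealed bit given the past. Each such factor is $\mathrm{Ber}(\mu_c)$, independent of the past, because each $(d,c)$ pair is revealed at most once. Then $\mathrm{KL}(\mathbb P_{\nu_0}\|\mathbb P_{\nu_j})=\mathbb E_{\nu_0}[N_j]\,\mathrm{KL}(\mathrm{Ber}(\tfrac12)\|\mathrm{Ber}(\tfrac12+4\varepsilon))$. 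We also need to handle algorithms with infinite expected queries; for these the bound holds trivially.
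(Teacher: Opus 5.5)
Your proof is correct and gives the same $\Omega((m-1)/\varepsilon^2)$ bound as the paper, but through a different change of measure.

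\textbf{How the paper argues.} The paper's base instance $D^{(0)}$ is fully symmetric: every coordinate is $\mathrm{Bernoulli}(1/2)$, so every output is acceptable. Its alternatives $D^{(i)}$ raise coordinate $i$ to $\tfrac12+2\varepsilon$. Because the base imposes no correctness constraint, the paper needs a pigeonhole step: $\sum_i P_0(\widehat c=i)=1$, so at least $m-1$ indices have $P_0(\widehat c=i)\le 1/2$. It then uses total variation and Pinsker to get $\mathrm{KL}\ge 1/18$ for those indices.

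\textbf{How you argue.} Your base instance has a unique $\varepsilon$-good expert. In each alternative, another expert overtakes it by $2\varepsilon$. The single event $\{\widehat c=1\}$ therefore separates $\nu_0$ from every $\nu_j$. The binary-kl data-processing inequality then bounds $\mathbb E_{\nu_0}[N_j]$ for every $j\ge 2$ directly, with no pigeonhole.

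\textbf{What each buys.} Yours is the standard Mannor--Tsitsiklis template. It extends naturally to gap-dependent bounds; the paper itself uses this structure for Theorem~\ref{thm:onestep_lb}. The paper's version has one small advantage: for $\varepsilon\le 1/8$ all its means stay in $[1/2,3/4]$ automatically.

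\textbf{One point to tighten.} In your unshifted construction, $\mu_j=\tfrac12+4\varepsilon$ reaches $1$ at $\varepsilon=1/8$. The divergence is
\[
\mathrm{KL}\bigl(\mathrm{Ber}(\tfrac12)\,\big\|\,\mathrm{Ber}(\tfrac12+4\varepsilon)\bigr)=-\tfrac12\ln(1-64\varepsilon^2),
\]
which is not $O(\varepsilon^2)$ uniformly on $(0,1/8]$ and is infinite at the endpoint. Your later statement that the means lie in $[1/4,3/4]$ holds only after the shift, so the shift is required, not optional. Use the shifted instance throughout: $\mu_1=\tfrac12$ and $\mu_c=\tfrac12-2\varepsilon$ for $c\ge 2$ in $\nu_0$, and $\mu_j=\tfrac12+2\varepsilon$ in $\nu_j$. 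Uniqueness of the $\varepsilon$-good answer still holds in each instance. The relevant divergence is
\[
\mathrm{KL}\bigl(\mathrm{Ber}(\tfrac12-2\varepsilon)\,\big\|\,\mathrm{Ber}(\tfrac12+2\varepsilon)\bigr)\le \frac{16\varepsilon^2}{1/4-4\varepsilon^2}\le \frac{256}{3}\varepsilon^2 .
\]
With this, the rest of your argument goes through unchanged: the chain rule over revealed bits, repeated queries contributing zero, and the trivial case of infinite expected queries.
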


For a polynomial-time benchmark, we still fully elicit the $T$ sampled
tasks, but replace ERM by the standard greedy algorithm applied to
$\widehat v_T$. Since coverage is monotone submodular, this gives the usual $(1-1/e)$ guarantee.

\begin{theorem}
\label{thm:approval-full-greedy}
    Fix $\varepsilon,\delta\in(0,1)$. Let $\widehat S_{\mathrm{gr}}\in
    \mathcal S_k$ be the committee returned by empirical greedy on
    $\widehat v_T$, padded arbitrarily to size $k$ if greedy stops early. If $T \ge
    \frac{(2-1/e)^2}{2\varepsilon^2} (\log {m\choose k}+\log ({2}/{\delta})$, then, with probability at least $1-\delta$, $v(\widehat S_{\mathrm{gr}})
    \ge
    (1-1/e)\mathrm{OPT}_k-\varepsilon$.
    The procedure uses $Q=mT$ queries.
\end{theorem}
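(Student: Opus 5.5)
The approach is a standard uniform-convergence argument over all size-$k$ committees, combined with the deterministic $(1-1/e)$ guarantee of greedy on the empirical objective. The only subtlety is that greedy builds committees of size smaller than $k$, and we need the uniform bound to cover these too.

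First, for each fixed $S\in\mathcal S_k$, the values $U(d_t,S)$ are i.i.d.\ $\{0,1\}$ variables with mean $v(S)$. Hoeffding's inequality gives $\Pr[|\widehat v_T(S)-v(S)|>\eta]\le 2e^{-2T\eta^2}$. A union bound over the $\binom{m}{k}$ committees then shows that, with probability at least $1-\delta$, every $S\in\mathcal S_k$ satisfies $|\widehat v_T(S)-v(S)|\le\eta$, provided $T\ge \frac{1}{2\eta^2}(\log\binom mk+\log(2/\delta))$. Set $\eta:=\varepsilon/(2-1/e)$; the hypothesis on $T$ is then exactly this condition.

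Second, on this good event we argue deterministically. The empirical function $\widehat v_T$ is a weighted coverage function on the sample, hence monotone submodular with $\widehat v_T(\varnothing)=0$. Nemhauser--Wolsey--Fisher therefore gives $\widehat v_T(G)\ge(1-1/e)\widehat v_T(S^\star)$ for the greedy set $G$ and any $S^\star\in\mathcal S_k$. If greedy stops early, it is because no element has positive marginal gain; padding $G$ to $\widehat S_{\mathrm{gr}}\in\mathcal S_k$ cannot decrease $\widehat v_T$ by monotonicity. In that case the greedy set already attains the empirical optimum, since every element of $S^\star$ has zero marginal gain and submodularity then gives $\widehat v_T(G)\ge \widehat v_T(G\cup S^\star)\ge\widehat v_T(S^\star)$. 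Either way $\widehat v_T(\widehat S_{\mathrm{gr}})\ge(1-1/e)\widehat v_T(S^\star)$. Because $\widehat S_{\mathrm{gr}}$ lies in $\mathcal S_k$, the uniform bound applies to it directly, and we never need to control the smaller intermediate sets.

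Third, take $S^\star$ to be a true optimizer of $v$ over $\mathcal S_k$ and chain the inequalities:
\[
v(\widehat S_{\mathrm{gr}})\ge \widehat v_T(\widehat S_{\mathrm{gr}})-\eta\ge(1-1/e)\widehat v_T(S^\star)-\eta\ge(1-1/e)(\mathrm{OPT}_k-\eta)-\eta = (1-1/e)\mathrm{OPT}_k-(2-1/e)\eta,
\]
which equals $(1-1/e)\mathrm{OPT}_k-\varepsilon$. This accounts for the constant $(2-1/e)^2/2$ in the sample bound. The query count $Q=mT$ is immediate, since every expert is queried once on every sampled task.

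The main care point is the early-stopping and padding step. We must ensure the greedy guarantee survives padding and that we only invoke concentration on size-$k$ sets, so that the union bound is over $\binom mk$ rather than over all subsets. Everything else is routine.
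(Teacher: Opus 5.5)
Your proof is correct and follows the paper's argument essentially step for step: the same choice $\eta=\varepsilon/(2-1/e)$, Hoeffding plus a union bound over $\mathcal S_k$ only, the greedy guarantee for the monotone submodular $\widehat v_T$ applied to $\widehat S_{\mathrm{gr}}$ and a true optimizer $S^\star$, and the same chain of inequalities. Your explicit treatment of early stopping via submodularity is a minor elaboration of what the paper covers by citing the standard greedy bound together with monotonicity under padding.
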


\subsection{Adaptive Query Savings from Missed Instances} \label{subsec:adaptive_query} 
We next ask whether the greedy benchmark can be made query-adaptive. Instead
of eliciting all $mT$ expert-task outcomes, we would like to query only the outcomes needed to identify large marginal gains.
For $S\subseteq C$, let $\rho(S):=\Pr_{d\sim D}[U(d,S)=0]=1-v(S)$ be its miss rate. For $c\in C\setminus S$, define $q(c\mid S)
:=
\Pr_{d\sim D}[u(d,c)=1\mid U(d,S)=0]$ when $\rho(S)>0$, and set $q(c\mid S)=0$ when $\rho(S)=0$. Let $\Delta(c\mid S):=v(S\cup\{c\})-v(S)$.
The key identity is
\begin{equation} \label{eq:approval-failure-marginal}
    \Delta(c\mid S)=\rho(S)q(c\mid S).
\end{equation}
Thus, at a fixed greedy step, maximizing marginal gain is the same as
maximizing rescue rate on tasks missed by the current committee.

\medskip
\noindent
\textbf{Failure-conditioned elimination.}
We use a fixed confidence elimination routine, \textsc{FailCond-Elim}, whose
full pseudocode is deferred to Appendix~\ref{app:failcond-elim}
(Algorithm~\ref{alg:failcond-elim}). The routine repeatedly samples failure
instances of the current committee, queries only the currently active
candidates, and eliminates any candidate whose empirical rescue rate is
certifiably worse than the current empirical best by more than the target
accuracy. Its cost is therefore gap-dependent: weak candidates are discarded
quickly when their rescue rate gaps are large.

For the next result, fix $S$ with $\rho(S)>0$ and a nonempty candidate set $A\subseteq C\setminus S$.
% Algorithm~\ref{alg:failcond-elim} exploits this identity. It is a
% fixed-confidence elimination routine run on failure instances of the
% current committee. Its cost depends on the rescue rate gaps $\Delta_q(c):=\max_{a\in A}q(a\mid S)-q(c\mid S)$, so weak candidates can be discarded quickly when these gaps are large.

\begin{theorem} \label{thm:failcond-elim}
    Fix $\eta\in(0,1]$ and $\delta\in(0,1)$. The routine uses $R=O({\log(e|A|/(\delta\eta))}/{\eta^2})$ accepted failure instances deterministically. With probability at least
    $1-\delta$, $q(\widehat c\mid S)\ge \max_{a\in A}q(a\mid S)-\eta$ and $Q_{\mathrm{cand}}
    =
    O(
    \sum_{c\in A}
    {\log(e|A|/(\delta\eta))}
         /{\max\{\Delta_q(c),\eta\}^2}
    )$.
\end{theorem}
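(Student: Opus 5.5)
Since the pseudocode of \textsc{FailCond-Elim} sits in the appendix, I will fix the natural successive-elimination version and prove the statement for it. The routine works in rounds $r=1,2,\dots$. In each round it draws tasks $d\sim\mathcal D$ until it gets one with $U(d,S)=0$, called an accepted failure instance. (Checking acceptance costs committee-evaluation queries, which are counted separately.) It then queries $u(d,c)$ for every active $c\in A_r$. Write $\widehat q_r(c)$ for the empirical rescue rate of $c$ over the $r$ failure instances on which $c$ has been queried. Since every active candidate is queried on every accepted instance, all active candidates share the same sample count $r$. Set the confidence radius to $\beta_r=\sqrt{\log(4|A|r^2/\delta)/(2r)}$. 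Eliminate $c$ once $\widehat q_r(c)<\max_{a\in A_r}\widehat q_r(a)-2\beta_r$. Stop at the first $r$ with $\beta_r\le\eta/4$ or $|A_r|=1$, and output $\widehat c=\arg\max_{a\in A_r}\widehat q_r(a)$. The stopping rule depends only on $r$, so the total number of rounds is at most the deterministic $R=\min\{r:\beta_r\le\eta/4\}$. Solving $\log(4|A|r^2/\delta)/(2r)\le\eta^2/16$ gives $R=O(\log(e|A|/(\delta\eta))/\eta^2)$; the $\log(1/\eta)$ comes from the $\log r^2$ term evaluated at $r\approx 1/\eta^2$. This is the first claim.

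\textbf{Good event.} Conditioned on $U(d,S)=0$, the accepted instances are i.i.d., and $u(d,c)$ is Bernoulli with mean $q(c\mid S)$. This is the key identity \eqref{eq:approval-failure-marginal}: sampling from the failure set gives unbiased draws of the rescue rate. By Hoeffding and a union bound over $c\in A$ and $r\ge1$, with probability at least $1-\delta$ we have $|\widehat q_r(c)-q(c\mid S)|\le\beta_r$ for all $c$ and $r$, since $\sum_r\delta/(2|A|r^2)\cdot 2|A|\le\delta\pi^2/6\cdot\ldots$. The constant inside the log is adjusted so the total is at most $\delta$. On this event the best arm $c^*=\arg\max_{a\in A}q(a\mid S)$ is never eliminated, because $\widehat q_r(c^*)\ge q^*-\beta_r\ge q(a\mid S)-\beta_r\ge\widehat q_r(a)-2\beta_r$ for every active $a$. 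At termination either $c^*$ is the only survivor, or $\beta_R\le\eta/4$. In the second case $q(\widehat c\mid S)\ge\widehat q_R(\widehat c)-\beta_R\ge\widehat q_R(c^*)-\beta_R\ge q^*-2\beta_R\ge q^*-\eta/2$, which is the accuracy claim.

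\textbf{Query count.} For $c\ne c^*$ with gap $\Delta_q(c)$, the good event gives $\widehat q_r(c^*)-\widehat q_r(c)\ge\Delta_q(c)-2\beta_r$. So $c$ is eliminated once $4\beta_r<\Delta_q(c)$, which happens by round $r_c=O(\log(e|A|/(\delta\Delta_q(c)))/\Delta_q(c)^2)$. It is also queried for at most $R$ rounds in any case. Hence $c$ costs at most $\min\{r_c,R\}=O(\log(e|A|/(\delta\eta))/\max\{\Delta_q(c),\eta\}^2)$ queries. Here I use that $\Delta_q(c)\ge\eta$ implies $\log(1/\Delta_q(c))\le\log(1/\eta)$. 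The best arm $c^*$ itself costs at most $R$, which matches the term with $\max\{0,\eta\}=\eta$. Summing over $c\in A$ gives $Q_{\mathrm{cand}}$.

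The main obstacle is the bookkeeping in the anytime union bound and in inverting $\beta_r$. The goal there is a single $\log(e|A|/(\delta\eta))$ factor rather than $\log\log$ or extra $\log|A|$ terms. I would handle both steps with the standard inversion lemma: $r\ge c_0\,\epsilon^{-2}\log(|A|/(\delta\epsilon))$ implies $\log(4|A|r^2/\delta)/(2r)\le\epsilon^2/16$. This inversion is applied once with $\epsilon=\eta$ and once with $\epsilon=\Delta_q(c)$.
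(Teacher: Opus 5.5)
Your argument is essentially the paper's proof: an anytime Hoeffding union bound over a fixed i.i.d.\ sequence of accepted failures (with your radius the union bound already sums to $\delta\pi^2/12<\delta$, so no constant adjustment is needed), the best arm is never eliminated, the stopping round is deterministic by inverting the radius, and each candidate costs $\min\{r_c,R\}$. The only difference is that the paper's elimination rule carries an extra $-\eta$ slack: it drops $c$ only if $\widehat q_r(c)+\mathrm{rad}_r<\widehat q_r(c_r^*)-\mathrm{rad}_r-\eta$, so elimination is guaranteed only once $\Delta_q(c)>\eta+4\,\mathrm{rad}_r$; the paper handles this by treating $\Delta_q(c)\ge 2\eta$ (where $\mathrm{rad}_r<\Delta_q(c)/8$ suffices) separately from $\Delta_q(c)<2\eta$ (charged to $R$), which your $\min\{r_c,R\}$ bound absorbs up to constants.
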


Accepted failures can be generated by rejection sampling: draw $d\sim D$
and evaluate $S$ until either one queried member succeeds or all queried
members fail. Each accepted failure requires $1/\rho(S)$ unconditional
draws in expectation. We count these committee evaluation queries separately
below.

We now wrap the elimination routine in greedy selection. Start from
$S_0=\varnothing$. At step $i$, let $\rho_i:=1-v(S_i)$. If
$\rho_i\le \varepsilon/k$, fill the remaining slots arbitrarily and stop.
Otherwise, call $\textsc{FailCond-Elim}
(S_i,C\setminus S_i,\frac{\varepsilon}{k\rho_i},\frac{\delta}{k})$ and add the returned candidate. Appendix~\ref{sec:AFG} gives the pseudocode, including
an implementable version that replaces the unknown $\rho_i$ by confidence
bounds.

Let $\tau$ be the number of executed elimination calls. For each executed
step $i$ and each $c\in C\setminus S_i$, define the marginal gap $\Delta_i(c)
:=
\max_{c'\in C\setminus S_i}\Delta(c'\mid S_i)-\Delta(c\mid S_i)$.

\begin{theorem}
\label{thm:adaptive-fail-greedy}
    Fix $k\in\{1,\dots,m\}$ and $\varepsilon,\delta\in(0,1)$. Let
    $\widehat S$ be the output of the greedy selection above. With probability at least $1-\delta$, $v(\widehat S)\ge (1-1/e)\mathrm{OPT}_k-\varepsilon$.
    On the same event, $Q_{\mathrm{cand}}
    =
    O (
    \sum_{i=0}^{\tau-1}\sum_{c\in C\setminus S_i}
    {\rho_i^2\log(emk^2/(\delta\varepsilon))}/
         {\max\{\Delta_i(c),\varepsilon/k\}^2}
    )$.
\end{theorem}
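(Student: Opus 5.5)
The proof combines Theorem~\ref{thm:failcond-elim} with the standard approximate-greedy analysis for monotone submodular functions. We work on a single good event. Let $E$ be the intersection, over the $\tau\le k$ executed steps, of the success events of the calls $\textsc{FailCond-Elim}(S_i,C\setminus S_i,\eta_i,\delta/k)$, where $\eta_i=\varepsilon/(k\rho_i)$. Each call fails with probability at most $\delta/k$, so a union bound gives $\Pr[E]\ge 1-\delta$. There is one subtlety: Theorem~\ref{thm:failcond-elim} requires $\eta_i\in(0,1]$. The stopping rule guarantees this, since a call is executed only when $\rho_i>\varepsilon/k$, and then $\eta_i<1$. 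Because $S_i$ is random, we apply the theorem conditionally on the history up to step $i$; the fresh randomness of each call makes this legitimate.

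\textbf{Approximation.} On $E$, the identity \eqref{eq:approval-failure-marginal} turns the rescue-rate guarantee into a marginal-gain guarantee. Since $q(\widehat c\mid S_i)\ge\max_a q(a\mid S_i)-\eta_i$, multiplying by $\rho_i$ gives
\[
\Delta(c_{i+1}\mid S_i)\ge \max_{c}\Delta(c\mid S_i)-\varepsilon/k .
\]
By monotonicity and submodularity of $v$ we have $\max_c\Delta(c\mid S_i)\ge(\OPT_k-v(S_i))/k$. Writing $g_i=\OPT_k-v(S_i)$, this yields the recursion $g_{i+1}\le(1-1/k)g_i+\varepsilon/k$. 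Unrolling it over $k$ steps gives
\[
g_k\le(1-1/k)^k\OPT_k+\varepsilon\bigl(1-(1-1/k)^k\bigr)\le \OPT_k/e+\varepsilon .
\]
If the procedure stops early at some step $i$ with $\rho_i\le\varepsilon/k$, then $v(S_i)\ge 1-\varepsilon/k\ge\OPT_k-\varepsilon$ directly. Padding the committee cannot decrease $v$, so the bound holds in this case as well. Either way, $v(\widehat S)\ge(1-1/e)\OPT_k-\varepsilon$.

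\textbf{Query bound.} We rewrite the per-call bound of Theorem~\ref{thm:failcond-elim} in marginal-gain units. By \eqref{eq:approval-failure-marginal}, $\Delta_q(c)=\Delta_i(c)/\rho_i$. Hence
\[
\max\{\Delta_q(c),\eta_i\}^2=\rho_i^{-2}\max\{\Delta_i(c),\varepsilon/k\}^2 ,
\]
so each candidate contributes $O\bigl(\rho_i^2 L_i/\max\{\Delta_i(c),\varepsilon/k\}^2\bigr)$, where $L_i=\log(e|A|k/(\delta\eta_i))$. Using $|A|\le m$ and $1/\eta_i=k\rho_i/\varepsilon\le k/\varepsilon$, we get $L_i\le\log(emk^2/(\delta\varepsilon))$. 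Summing over the executed steps and the candidates in each step gives the stated bound on $Q_{\mathrm{cand}}$.

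The main obstacle is the bookkeeping that makes the per-step guarantees compose. This means conditioning on the adaptively chosen $S_i$ inside the union bound, and checking that the unknown $\rho_i$ enters only through the accuracy parameter. Here we analyze the idealized version that uses the exact $\rho_i$. For the confidence-bound version in Appendix~\ref{sec:AFG}, the same argument should go through after replacing $\rho_i$ by an upper confidence bound, at the cost of a constant factor.
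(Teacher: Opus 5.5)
Your proposal is correct and follows essentially the same route as the paper's proof. Both arguments use an adaptive union bound over the at most $k$ elimination calls, convert the $\eta_i$-optimal rescue guarantee into an additive $\varepsilon/k$ marginal loss via $\Delta(c\mid S)=\rho(S)q(c\mid S)$, run the $(1-1/k)$ greedy recursion with the early-stop case handled separately, and rescale the gap-dependent bound of Theorem~\ref{thm:failcond-elim} by $\rho_i$ using $\eta_i=\varepsilon/(k\rho_i)$ and $\rho_i\le 1$ in the logarithm.
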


Theorem~\ref{thm:adaptive-fail-greedy} counts only candidate evaluations after failure instances have
been found. The extra cost of finding those failures depends on how the
current committee is tested. If an unconditional draw $d\sim D$ is tested
against $S_i$ in order $\sigma_i$, let
$Q_{\mathrm{eval}}(d,S_i,\sigma_i)\le |S_i|=i$ be the number of
committee-evaluation queries used on that draw. The expected additional
committee-evaluation cost at step $i$ is $O (
\rho_i\,
\mathbb E[Q_{\mathrm{eval}}(d,S_i,\sigma_i)]
{k^2\log(mk^2/(\delta\varepsilon))}/{\varepsilon^2}
)$.
We keep this term separate because it depends on the testing order
$\sigma_i$.
Theorem~\ref{thm:adaptive-fail-greedy} is an instance-dependent refinement of the exhaustive greedy baseline in Theorem~\ref{thm:approval-full-greedy}, not a worst-case improvement over the lower bound in Theorem~\ref{thm:approval-lower}.
 The worst case still requires
$\Omega(m/\varepsilon^2)$ queries, but the adaptive bound is smaller when
miss rates shrink quickly or when marginal gaps are large.

The local gap dependence is also unavoidable. Once we condition on failures
of a fixed committee, the next greedy step is an $\eta$-best-arm
identification problem over rescue rates.
For this lower bound, fix $S\subseteq C$ with $\rho(S)>0$ and a nonempty
candidate set $A\subseteq C\setminus S$. For any conditional distribution
of $d$ given $U(d,S)=0$, let $q^*:=\max_{a\in A}q(a\mid S)$ and $\Delta_q(a):=q^*-q(a\mid S)$.

\begin{theorem} \label{thm:onestep_lb}
    Fix $\eta\in(0,1)$ and $\delta\in(0,1/4]$. Any algorithm that, for every
    such conditional distribution, returns $\widehat c\in A$ satisfying $q(\widehat c\mid S)\ge q^*-\eta$
    with probability at least $1-\delta$ has a hard instance on which $\mathbb E[Q_{\mathrm{cand}}]
    =
    \Omega (
    \sum_{a\in A:\Delta_q(a)>\eta}
    {\log(1/\delta)}/{\Delta_q(a)^2} )$.
\end{theorem}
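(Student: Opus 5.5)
We prove Theorem~\ref{thm:onestep_lb} by reducing it to the standard fixed-confidence lower bound for $\eta$-best-arm identification with Bernoulli rewards, using the change-of-measure (transportation) argument of Kaufmann--Capp\'e--Garivier / Mannor--Tsitsiklis.

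\textbf{Step 1: embedding bandit instances.} Fix rescue rates $(q_a)_{a\in A}$ in $[1/4,3/4]$, with best arm $a^*$ and gaps $\Delta_q(a)=q^*-q_a$. Construct the conditional distribution of $d$ given $U(d,S)=0$ so that the bits $u(d,a)$, $a\in A$, are independent Bernoulli$(q_a)$. Outside this event, the task distribution is arbitrary but the same across all instances; candidates outside $A$ are irrelevant.

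Each candidate query $(d,a)$ on a failure instance then returns a fresh Bernoulli$(q_a)$ sample. Repeating the same pair is redundant, and queries on other draws carry no information about the $q_a$. Committee-evaluation queries and unconditional draws have distributions that do not depend on $(q_a)$, so they carry no information either. Hence any algorithm is a bandit strategy whose number of pulls of arm $a$, $N_a$, satisfies $\mathbb E[Q_{\mathrm{cand}}]\ge\sum_a\mathbb E[N_a]$. Adaptive storage of draws and out-of-order querying do not help: by independence across draws and arms, the likelihood ratio still factorizes over the observed (draw, arm) pairs. This justification is the one modeling subtlety.

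\textbf{Step 2: alternative instances.} For each $a$ with $\Delta_q(a)>\eta$, define an alternative instance $\lambda^{(a)}$ that raises only $q_a$ to $q_a'=q^*+\eta+\gamma$, where $\gamma>0$ is small; stay in $[0,1]$ by choosing the base instance with $q^*\le 1/2$ and $\eta<1/4$, otherwise handling it by rescaling. Under $\lambda^{(a)}$, $a$ is the unique $\eta$-good arm. Under the base instance, $a$ is not $\eta$-good, since $\Delta_q(a)>\eta$. Let $E=\{\widehat c=a\}$. Then $\Pr_\nu[E]\le\delta$ and $\Pr_{\lambda^{(a)}}[E]\ge1-\delta$.

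\textbf{Step 3: change of measure.} The divergence-decomposition inequality gives
\[
\mathbb E_\nu[N_a]\,\mathrm{KL}(q_a,q_a')\ \ge\ \mathrm{kl}(\delta,1-\delta)\ \ge\ \log\bigl(1/(2.4\delta)\bigr),
\]
which is $\Omega(\log(1/\delta))$ for $\delta\le1/4$. With the rates bounded away from $0$ and $1$, $\mathrm{KL}(q_a,q_a')=O((q_a'-q_a)^2)=O((\Delta_q(a)+\eta+\gamma)^2)=O(\Delta_q(a)^2)$, since $\Delta_q(a)>\eta$; now let $\gamma\to0$. Summing over all $a$ with $\Delta_q(a)>\eta$ on the single base instance $\nu$ yields the claimed bound.

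\textbf{Main obstacle.} The key difficulty is Step 1: showing rigorously that the richer query model (stored tasks, two query types, a rejection-sampling structure) reduces to a bandit with sample counts bounded by $Q_{\mathrm{cand}}$. I would handle it by writing the log-likelihood ratio of the full transcript and observing that only candidate-query answers on failure instances differ between $\nu$ and $\lambda^{(a)}$, and only through arm $a$. A secondary issue is keeping all Bernoulli parameters bounded away from $0$ and $1$ so that KL is quadratic in the gap. This is why the hard instance is chosen with moderate $q$ values rather than imposing the bound for every instance.
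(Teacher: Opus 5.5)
Your proposal is correct and follows essentially the same route as the paper. Both arguments use a base failure law with independent Bernoulli rescue bits, and a single-coordinate perturbation that makes the perturbed candidate the unique $\eta$-good one. Both then apply a change-of-measure bound (you use $\mathrm{kl}(\delta,1-\delta)\ge\log(1/(2.4\delta))$ where the paper combines Bretagnolle--Huber with Pinsker) and the adaptive KL chain rule, summed over all far candidates on one base instance.

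For Step 1, the paper avoids your modeling subtlety by realizing the hard law with $U(d,S)=0$ almost surely: it sets $u(d,b)=0$ for all $b\notin A$, so $\rho(S)=1$. With $\rho(S)<1$, your claim that committee-evaluation answers carry no information is not literally true when $a$ is queried on a draw before its failure status is known; the conclusion still holds, but adopting $\rho(S)=1$ is the cleanest fix.
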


By \eqref{eq:approval-failure-marginal}, a rescue rate gap
$\Delta_q(c)$ corresponds to a marginal gap $\rho(S)\Delta_q(c)$. 
Thus, 
Theorems~\ref{thm:failcond-elim} and~\ref{thm:onestep_lb} match up
to logarithmic factors in the local gap dependence, while candidates
already within $\eta$ of optimal need not be separated.

\section{Pairwise Feedback}
\label{sec:pairwise}
We now turn to pairwise feedback. Each task induces a strict ranking
$\pi$ of the experts, but the learner observes this ranking only through
pairwise comparisons. For a committee $S$, recall that $r_\pi(S)$ is the
rank of its best member.

A natural extension of the binary guarantee would ask for a deterministic
committee that beats every other size-$k$ committee by majority vote. This
can fail badly because of majority cycles. Fix $N>2k$, let $C=\mathbb Z_N$,
and let $P_D$ be uniform over the cyclic rankings $\pi_i:\quad i\succ i+1\succ\cdots\succ i-1$ for $i\in\mathbb Z_N$.
For any $S\in\mathcal S_k$, set $T=S-1:=\{s-1 \pmod N : s\in S\}$.
If $i\notin S$ and $s$ is the first member of $S$ in the cyclic order
from $i$, then $s-1\in T$ appears before $s$. Hence
$r_{\pi_i}(T)<r_{\pi_i}(S)$. Thus $T$ beats $S$ on at least $N-k$
rankings, so $\mathrm{WIN}(S,T)\le k/N$. Taking $N$ large makes this
arbitrarily small. We therefore compare a committee to each outside
candidate rather than to every opposing committee.

\begin{definition}[$\theta$-winning committees]
\label{def:theta-winning}
    For a proper committee $S\subsetneq C$, define $\theta(S):=\min_{x\in C\setminus S}\mathrm{WIN}(S,\{x\})
    = \min_{x\in C\setminus S}
    \Pr_{\pi\sim P_D} [r_\pi(S)<\mathrm{rank}_\pi(x)]$.
    For $\vartheta\in[0,1]$, we say that $S$ is $\vartheta$-winning if
    $\theta(S)\ge \vartheta$. For $1\le k<m$, let $\theta_k^*:=\max_{S\in\mathcal S_k}\theta(S)$.
\end{definition}

This candidate-wise guarantee also gives a comparison guarantee against any
committee $S'$. If $S$ is $\vartheta$-winning, then $S$ weakly beats
$S'$ with probability at least $1-|S'|(1-\vartheta)$. This is useful when
$|S'|(1-\vartheta)$ is small, for example when $S'$ is small or when
$S$ has a high candidate-wise guarantee. It does not contradict the cyclic
example above: in that construction the best candidate-wise guarantee is only about $1-1/k$, so the induced guarantee against another size-$k$ committee is vacuous.

\begin{lemma}
\label{lem:theta-lifts-to-committees}
    If $S\subsetneq C$ is nonempty and $\vartheta$-winning, then for every
    committee $S'\subseteq C$, $\Pr_{\pi\sim P_D}\left[r_\pi(S)\le r_\pi(S')\right]
    \ge
    1-|S'|(1-\vartheta)$.
    If $S\cap S'=\varnothing$, then $\mathrm{WIN}(S,S')\ge 1-|S'|(1-\vartheta)$.
\end{lemma}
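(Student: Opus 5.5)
The plan is a union bound over the members of $S'$ that lie outside $S$. Fix a committee $S'\subseteq C$. If $S'=\varnothing$, then $r_\pi(S')=m+1$, so $r_\pi(S)\le r_\pi(S')$ holds always and the first claim is trivial. Otherwise, write $S'=(S'\cap S)\cup(S'\setminus S)$.

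The key observation is pointwise in $\pi$. The event $r_\pi(S)>r_\pi(S')$ means the best member $y=\top_\pi(S')$ of $S'$ is strictly above every member of $S$. This $y$ cannot belong to $S$: otherwise $r_\pi(S)\le\mathrm{rank}_\pi(y)=r_\pi(S')$. Hence
\[
\{r_\pi(S)>r_\pi(S')\}\subseteq\bigcup_{x\in S'\setminus S}\{\mathrm{rank}_\pi(x)<r_\pi(S)\}.
\]
Since rankings are strict and $x\notin S$, the event $\mathrm{rank}_\pi(x)<r_\pi(S)$ is the complement of $r_\pi(S)<\mathrm{rank}_\pi(x)$. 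By Definition~\ref{def:theta-winning} and $\theta(S)\ge\vartheta$, this complement has probability at most $1-\vartheta$ for each $x\in C\setminus S$.

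Applying the union bound gives
\[
\Pr[r_\pi(S)>r_\pi(S')]\le|S'\setminus S|(1-\vartheta)\le|S'|(1-\vartheta),
\]
which yields the first inequality. Nonemptiness of $S$ ensures that $r_\pi(S)\le m$ is the rank of an actual expert, so each comparison with an outside $x$ is strict.

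For the second claim, assume $S\cap S'=\varnothing$ with $S'\neq\varnothing$. The minima defining $r_\pi(S)$ and $r_\pi(S')$ are then attained at distinct experts, and since $\pi$ is strict, ties $r_\pi(S)=r_\pi(S')$ occur with probability zero. Therefore
\[
\mathrm{WIN}(S,S')=\Pr[r_\pi(S)<r_\pi(S')]=\Pr[r_\pi(S)\le r_\pi(S')]\ge 1-|S'|(1-\vartheta).
\]
If $S'=\varnothing$, then $\mathrm{WIN}(S,\varnothing)=1$ because $S$ is nonempty, and the bound holds trivially.

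There is no serious obstacle. The only delicate points are (i) showing that the best element of $S'$ must lie outside $S$ on the bad event, and (ii) the tie bookkeeping that separates the weak inequality in the general case from the win rate in the disjoint case. Both are handled by the strictness of $\pi$.
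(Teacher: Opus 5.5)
Your proposal is correct and follows essentially the same route as the paper: the pointwise inclusion $\{r_\pi(S)>r_\pi(S')\}\subseteq\bigcup_{x\in S'\setminus S}\{\mathrm{rank}_\pi(x)<r_\pi(S)\}$, a union bound using $\theta(S)\ge\vartheta$ and strictness of $\pi$, and then the disjoint case where ties are excluded. The only cosmetic difference is that in the disjoint case ties never occur for any $\pi$, which is slightly stronger than ``probability zero'', and the paper uses exactly this pointwise fact.
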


\subsection{Full-information optimization of $\theta$}
\label{sec:theta-full-info}
We first consider the full-information problem for $\theta$. The main
message is simple: the problem has a Polynomial-Time Approximation Scheme (PTAS), but no Efficient PTAS (EPTAS) under Gap-ETH.

\begin{theorem}[Informal]
\label{thm:theta-full-info}
For every $\gamma\in(0,1)$, full-information maximization of $\theta$
admits a $(1-\gamma)$-approximation in time $m^{O(1/\gamma)}\mathrm{poly}(m,n)$.
Assuming Gap-ETH, the same guarantee cannot be achieved in time $f(1/\gamma)\mathrm{poly}(m,n)$ for any computable function $f$.
\end{theorem}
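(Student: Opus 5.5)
We prove the two halves separately. Throughout, the input is a list of $n$ rankings $\pi_1,\dots,\pi_n$ (with multiplicities or weights). The objective is $\theta(S)=\min_{x\notin S}\frac1n|\{j: r_{\pi_j}(S)<\mathrm{rank}_{\pi_j}(x)\}|$.

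\textbf{PTAS.} The key structural fact is that a small subcommittee already carries most of the value. For each ranking $\pi_j$ and outside candidate $x$, the event $r_{\pi_j}(S)<\mathrm{rank}_{\pi_j}(x)$ depends only on $\top_{\pi_j}(S)$.

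The plan is to show that for every $S$ with $|S|=k$ and every integer $\ell\le k$ there is a subset $S'\subseteq S$ with $|S'|\le \ell$ and $\theta(S')\ge (1-O(1/\ell))\,\theta(S)$, up to handling candidates of $S\setminus S'$ that become outside candidates. We would try two routes.

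\begin{enumerate}[(i)]
\item A random-subset argument. Sample $S'$ by including each member of $S$ and charge each pair $(j,x)$ to $\top_{\pi_j}(S)$. This gives only an average guarantee, whereas $\theta$ is a minimum, so this route is weak.
\item An LP/rounding-free enumeration, which is what we expect the intended argument to be. Guess the $\ell=O(1/\gamma)$ members of the optimal committee $S^*$ that matter, in $m^{O(1/\gamma)}$ ways. Then complete the committee to size $k$ by a polynomial-time step that cannot hurt, since adding members only helps by monotonicity.
\end{enumerate}

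More concretely, route (ii) runs as follows. Partition the constraints $x\in C\setminus S^*$. Candidates $x$ that are ``strong'' must be dealt with by the guessed part. The remaining slots are filled greedily to cover the residual mass of rankings in which the guessed part is beaten by many $x$'s. Monotonicity of $\theta$ in $S$ is automatic, but the completion step removes candidates from the set $C\setminus S$ we minimize over, which also only helps. The $m^{O(1/\gamma)}$ enumeration times $\mathrm{poly}(m,n)$ evaluation of $\theta$ for each guess gives the running time.

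The main obstacle is proving the $(1-\gamma)$ loss bound for the guessed part. We would first try an averaging argument: fix an outside candidate $x$ and order the members of $S^*$ by their contribution. Because the constraints are per-candidate, we need a uniform statement over all $x$. We would aim for the bound $\theta(\text{best }\ell\text{-subset})\ge \theta_k^*-1/\ell$, using that a $\vartheta$-winning committee behaves like a small dominating set. If that fails, we would fall back on an additive-approximation argument, which is also enough whenever $\theta_k^*$ is bounded below by a constant.

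\textbf{EPTAS lower bound.} We would reduce from a parameterized problem known to be Gap-ETH hard to approximate in FPT time, such as Max $k$-Coverage or $k$-Dominating Set with a constant gap, for which no $f(k)\mathrm{poly}$ algorithm achieves constant-factor gap.

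Given a set system with universe elements $e$ and sets, we make each set an expert. We add dummy ``challenger'' experts and create rankings per element, placing the sets containing $e$ above a challenger and the others below. The value $\theta(S)$ then tracks the coverage of $S$, with a constant gap between the yes and no cases. We would choose $k$ polynomially related to the parameter so that $\gamma=\Theta(1/k)$ separates the cases.

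The details to verify are twofold. First, the challenger experts must not themselves be profitable to include in $S$, so we replicate them to make them cost-ineffective. Second, the minimum over $x$ must be attained at a challenger. An $f(1/\gamma)\mathrm{poly}$ algorithm would then yield $f'(k)\mathrm{poly}$ for the gap problem, contradicting Gap-ETH.
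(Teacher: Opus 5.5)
\paragraph{PTAS.} There is a genuine gap in your PTAS analysis, and the compression lemma you aim for is false. That lemma says some $S'\subseteq S$ with $|S'|\le\ell$ has $\theta(S')\ge(1-O(1/\ell))\theta(S)$, or equivalently that $\theta(\text{best }\ell\text{-subset of }S^*)\ge\theta_k^*-1/\ell$. For a counterexample, take $C=S\cup\{x\}$ with $|S|=k$, and let each ranking place a uniformly random member of $S$ first, then $x$, then the rest of $S$. Then $\theta(S)=1$ and $S$ is the unique optimum, since any other $k$-committee contains $x$ and is beaten by its missing member with probability $1/k$. Yet every $U\subseteq S$ with $|U|=\ell$ has $\mathrm{WIN}(U,\{x\})=\ell/k$, so $\theta(U)\le\ell/k$, which is tiny for $\ell=O(1/\gamma)\ll k$. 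The good small committee here is $\{x\}$, with $\theta(\{x\})=1-1/k$, and it lies outside $S^*$.

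What you are missing is the absolute structural theorem of \citet{charikar2025six} that the paper uses. For every profile and every $K\le m$ there is a committee $U$ with $|U|\le K$ and $\theta_P(U)\ge 1-c_0/K$, where $c_0\approx 9.82$. This is a recent, nontrivial result about a minimum over all rivals, and a per-rival averaging argument does not deliver it. With it, the PTAS is a two-case enumeration.
\begin{itemize}
\item If $k\ge K_\gamma:=\lceil c_0/\gamma\rceil$, enumerate all committees of size at most $K_\gamma$ and pad. This gives $\theta\ge 1-\gamma\ge(1-\gamma)\theta^*_k$, since $\theta^*_k\le 1$.
\item If $k<K_\gamma$, enumerate all committees of size at most $k$ and return an exact optimum.
\end{itemize}
This split also removes your worry about small $\theta^*_k$. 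It shows your enumeration algorithm is essentially right: arbitrary padding suffices, and the greedy completion is unnecessary and carries no guarantee since $\theta$ is not submodular. What must change is the benchmark in the analysis, which should be the absolute value $1-\gamma$ rather than a subset of $S^*$.

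\paragraph{EPTAS lower bound.} Your source problem matches the paper: Gap Max $\kappa$-Coverage, which by \citet{manurangsi2020maxkcoverage} has no $f(\kappa)\mathrm{poly}(N)$ algorithm under Gap-ETH. Replicating challengers also matches. But the point you defer as a detail, forcing the minimum onto a challenger, is the heart of the construction, and your rankings do not achieve it. Suppose each element ranking only puts ``sets containing $e$ above a challenger and the others below''. Then an unselected set-expert shares a block with selected sets and beats the committee whenever the within-block order favors it. Its win rate can be far below the challengers' in both the Yes and No cases, which erases the gap.

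The paper fixes this with an extra candidate $b$. For each of $q=\kappa+3$ dummies $a_j$, it adds $M=hq$ copies of $a_j\succ b\succ B\succ D\setminus\{a_j\}$.
\begin{itemize}
\item Omitting $b$ caps $\mathrm{WIN}(S,\{b\})$ at $(q-1)/(q+1)$.
\item With $b\in S$, every unselected set-expert is beaten in all type-II rankings, so its win rate is at least $q/(q+1)$.
\end{itemize}
The type-I rankings $B_i\succ a_j\succ D\setminus\{a_j\}\succ B\setminus B_i\succ b$, one per element-dummy pair, then make an unselected dummy the binding rival. In the Yes case, $Y\cup\{b\}$ has value at least $q/(q+1)$. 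In the No case, every $S\in\mathcal S_{\kappa+1}$ has value at most $q/(q+1)-\eta/(q(q+1))$.

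Also, the gap in $\theta$ cannot be a constant, as your sketch first says. Combined with the PTAS at constant $\gamma$, a constant gap would decide the gap problem in polynomial time, contradicting Gap-ETH. In the paper's construction the relative gap is $\eta/q^2$, and one takes $\gamma=\eta/(2q^2)$, a function of $\kappa$ alone. An $f(1/\gamma)\mathrm{poly}(m,n)$ algorithm then runs in $f'(\kappa)\mathrm{poly}(N)$ time.
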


The PTAS is mainly a benchmark for the fully observed problem. Its polynomial exponent grows with $1/\gamma$, and the lower bound rules out an EPTAS under Gap-ETH. For query-limited selection, we therefore focus on efficient constant-factor guarantees.
% The PTAS is obtained via an existential result based on recent breakthroughs \citep{charikar2025six, song2026few}, and is unlikely to be practical as even $\gamma=1/2$ gives an $m^5$ dependence.  The EPTAS lower bound rules out removing the dependence on $1/\gamma$ from the polynomial exponent.  Thus, for practical purposes, the natural target is an efficient fixed-ratio approximation rather than an arbitrarily accurate scheme.

\subsection{Baseline: Exhaustive elicitation}\label{subsec: theta exhaustive elicitation}
We next give the ordinal analogue of exhaustive elicitation. As in the
binary setting, the algorithm draws $T$ independent tasks and fully elicits
the feedback induced by each task. Here, full elicitation means recovering a
ranking of the $m$ candidates for each task, using $O(m\log m)$ pairwise
comparisons rather than $m$ binary queries.

Fix $1\le k<m$ and $\varepsilon,\delta\in(0,1)$. Draw independent
rankings $\pi_1,\dots,\pi_T\sim P_D$, recover each ranking by pairwise
comparisons, and define $\widehat\theta_T(S)
:=
\min_{x\in C\setminus S}
\frac1T\sum_{t=1}^T
\mathbf 1\left\{r_{\pi_t}(S)<\mathrm{rank}_{\pi_t}(x)\right\}$.
Let $\widehat S_{\mathrm{ERM}}\in
\arg\max_{S\in\mathcal S_k}\widehat\theta_T(S)$.

\begin{theorem}
\label{thm:ordinal-erm}
    If $T\ge
    \frac{2}{\varepsilon^2}
    \log ({2(m-k){m\choose k}}/{\delta} )$,
    then, with probability at least $1-\delta$, $\theta(\widehat S_{\mathrm{ERM}})
    \ge \theta_k^*-\varepsilon$.
    The procedure uses $O(Tm\log m)$ pairwise comparisons.
\end{theorem}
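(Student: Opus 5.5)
The plan is a standard uniform-convergence argument over the finite family of pairs $(S,x)$ with $S\in\mathcal S_k$ and $x\in C\setminus S$, followed by the usual ERM comparison. For each such pair, define the population quantity $p(S,x):=\Pr_{\pi\sim P_D}[r_\pi(S)<\mathrm{rank}_\pi(x)]$. Define also its empirical counterpart $\widehat p_T(S,x):=\frac1T\sum_{t}\mathbf 1\{r_{\pi_t}(S)<\mathrm{rank}_{\pi_t}(x)\}$. Then $\theta(S)=\min_x p(S,x)$ and $\widehat\theta_T(S)=\min_x\widehat p_T(S,x)$. There are exactly $(m-k)\binom{m}{k}$ such pairs.

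The first step applies Hoeffding's inequality to each fixed pair. The summands are i.i.d. $\{0,1\}$-valued, so $\Pr[|\widehat p_T(S,x)-p(S,x)|>\varepsilon/2]\le 2\exp(-T\varepsilon^2/2)$. A union bound over all pairs, together with the stated lower bound on $T$, gives the following: with probability at least $1-\delta$, every pair satisfies $|\widehat p_T(S,x)-p(S,x)|\le\varepsilon/2$. The stated bound is exactly $T\ge\frac{2}{\varepsilon^2}\log(2(m-k)\binom mk/\delta)$, so this is precisely what is needed.

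The second step transfers the pairwise accuracy to $\theta$. The map from a vector to its minimum is $1$-Lipschitz in the sup norm. Hence on this event $|\widehat\theta_T(S)-\theta(S)|\le\varepsilon/2$ holds simultaneously for all $S\in\mathcal S_k$. Now let $S^*$ attain $\theta_k^*$. The ERM comparison gives
\[
\theta(\widehat S_{\mathrm{ERM}})\ge\widehat\theta_T(\widehat S_{\mathrm{ERM}})-\varepsilon/2\ge\widehat\theta_T(S^*)-\varepsilon/2\ge\theta(S^*)-\varepsilon.
\]

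The third step is the query count. Each sampled ranking is recovered exactly by running a comparison sort, such as merge sort, with the pairwise oracle. This uses $O(m\log m)$ comparisons per task, and the oracle answers are consistent with the fixed latent $\pi_t$. The total over $T$ tasks is therefore $O(Tm\log m)$.

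There is no real obstacle in this argument. The only point needing care is to apply the union bound to the individual pairs $(S,x)$, not to $\theta$ directly, because $\widehat\theta_T(S)$ is a minimum and not an average. The Lipschitz property of the minimum then handles the transfer from pairs to $\theta$.
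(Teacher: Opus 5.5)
Your proposal is correct and matches the paper's proof essentially step for step. Both arguments apply Hoeffding at accuracy $\varepsilon/2$ to each committee--rival pair, take a union bound over the $(m-k)\binom{m}{k}$ pairs, and transfer the bound to $\widehat\theta_T$ because taking a minimum is $1$-Lipschitz in the sup norm; both then finish with the standard ERM comparison chain and charge $O(m\log m)$ comparisons per sampled ranking for comparison sorting.
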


The next result shows that the linear dependence on $m$ cannot be avoided,
even for singleton committees.

\begin{theorem}[Worst-case pairwise query lower bounds]
\label{thm:ordinal-lower-main}
Assume $m\ge 2$ and $0<\varepsilon\le 1/16$. Then,
\begin{enumerate}[(i)]
    \item For a fixed singleton committee $S=\{s\}$, any adaptive algorithm that estimates $\theta(S)$ to additive error $\varepsilon$ with probability at least $2/3$ on every ranking distribution has worst-case expected pairwise query complexity $\Omega(({m-1})/{\varepsilon^2})$.
    
    \item Any adaptive algorithm that returns $\widehat c\in C$ with $\theta(\{\widehat c\})\ge \theta_1^*-\varepsilon$ with probability at least $2/3$ on every ranking distribution has worst-case expected pairwise query complexity $\Omega({m}/{\varepsilon^2})$.
    \end{enumerate}
\end{theorem}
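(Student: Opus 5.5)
Both parts follow the standard recipe for query lower bounds: build a family of ranking distributions that agree on all pairwise marginals except those involving one hidden "special" candidate, where they are shifted by $\Theta(\varepsilon)$. Then apply a change-of-measure (KL/Le Cam–Fano style) argument over adaptive query transcripts. The key structural fact is that a pairwise query $(a,b)$ on a sampled ranking reveals one Bernoulli bit whose law depends only on the joint law of the relative order of $a$ and $b$. So if the perturbation only affects comparisons touching the special candidate, only those queries carry information.

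\textbf{Part (i).} Fix $s$ and let the other candidates be $x_1,\dots,x_{m-1}$. In the base distribution $P_0$, each task independently puts $s$ in a uniformly random position relative to a fixed order of the $x_j$'s. We choose the construction so that for every $j$, $\Pr[s\succ x_j]$ is either $1/2+2\varepsilon$ or a common value, with the relative comparisons among $x$'s deterministic (hence uninformative). A cleaner choice is a product-like construction: let $\pi$ be generated by independent coins $B_j\in\{0,1\}$ with $\Pr[B_j=1]=p_j$. Place $s$ above $x_j$ iff $B_j=1$; since the $x$'s are in a fixed order, consistency requires a threshold structure. To avoid consistency issues I would instead use the ``position of $s$'' model directly. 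Take a random cutoff $J$, let $s$ beat exactly the candidates $x_{\sigma(J)},\dots$, and randomize the order of the $x$'s by a uniformly random permutation $\sigma$ that is independent per task. Then each query $(s,x_j)$ is a Bernoulli with mean $\Pr[s\succ x_j]$, and queries among $x$'s are uninformative by symmetry. Under $P_0$ all $\Pr[s\succ x_j]=1/2+2\varepsilon$, so $\theta(S)=1/2+2\varepsilon$. Under $P_j$ the marginal for $x_j$ alone is lowered to $1/2-2\varepsilon$, by conditioning differently on the event $\{x_j$ is near $s\}$, so $\theta(S)=1/2-2\varepsilon$. An $\varepsilon$-accurate estimator distinguishes $P_0$ from every $P_j$. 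By the standard argument (e.g.\ as in the $\Omega(m/\varepsilon^2)$ lower bounds for best-arm/threshold testing), the expected number $N_j$ of queries informative about $x_j$ under $P_0$ must satisfy $N_j\cdot \mathrm{KL}\gtrsim 1$ for most $j$, with $\mathrm{KL}=\Theta(\varepsilon^2)$ (using $\varepsilon\le1/16$ to keep means bounded away from $0,1$). Summing over the $m-1$ indices $j$ gives $\Omega((m-1)/\varepsilon^2)$.

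\textbf{Part (ii).} This should reduce to the binary-style construction behind Theorem~\ref{thm:approval-lower}. Use a ``dueling'' instance: with $m$ candidates, $\theta(\{c\})=\min_{x\ne c}\Pr[c\succ x]$. In the base instance, a uniformly random ranking perturbed so that every candidate has $\theta$ about $1/2-\Theta(1/m)$ is not convenient. Instead I would use the mixture ``with probability $1/2$ rank by a fixed order $\tau$, else by its reverse.'' This gives all pairwise marginals $1/2$ and all $\theta=1/2$. In $P_j$, with extra probability $4\varepsilon$ put $j$ on top (a uniformly random ranking of the rest), making $\theta(\{j\})\ge 1/2+\Theta(\varepsilon)$ while others have $\theta\le 1/2$. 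Identifying $\widehat c=j$ is then required, and only queries involving $j$ have laws differing by $\Theta(\varepsilon)$, with KL $\Theta(\varepsilon^2)$ per query. The classic $m$-armed argument gives $\sum_j \mathbb E_0[N_j]\gtrsim m/\varepsilon^2$, since the algorithm under $P_0$ must produce each $j$ with small probability for most $j$.

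\textbf{Main obstacle.} The delicate step is the construction: the perturbations must be realizable by a genuine distribution over rankings, and the per-query KL must be $O(\varepsilon^2)$ only for queries touching the special candidate, with zero KL for all others. A perturbation of one candidate's marginals generally also alters comparisons among the remaining candidates. I would first try the ``mixture with a top-insertion'' device, where the base mixes $\tau$ and reverse $\tau$ and the insertion uses the same base for the others, and verify that the distributions of all $(a,b)$ queries with $a,b\ne j$ are unchanged. If they are not, I would bound their KL by $O(\varepsilon^2)$ while charging them evenly across $j$, which still yields the sum bound.
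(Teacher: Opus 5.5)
\textbf{There is a genuine gap.} It sits in the constructions, which is the step you flagged, and your part (ii) device provably fails. Matching single-query marginals is not enough, because the algorithm may store a sampled task and make several comparisons on it. You need the law of the whole ranking restricted to $C\setminus\{j\}$ to be unchanged, plus an $O(\varepsilon^2)$ bound on the per-task information about $j$.

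Under your base $P_0=\frac12\delta_\tau+\frac12\delta_{\bar\tau}$ (with $\bar\tau$ the reverse of $\tau$), the ranking is fixed by a single bit. The $4\varepsilon$-component of $P_j$, by contrast, orders the other candidates uniformly at random. Take the first three candidates $\tau_1\succ\tau_2\succ\tau_3$ of $\tau$ (so $m\ge 3$). The comparisons $(\tau_1,\tau_2)$ and $(\tau_2,\tau_3)$ on one task disagree with probability at least $2\varepsilon$ under $P_j$ and never under $P_0$. This is $\Theta(\varepsilon)$ information per task, not $O(\varepsilon^2)$, even when $j\notin\{\tau_1,\tau_2,\tau_3\}$.

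Consider an algorithm that runs this test on $O(1/\varepsilon)$ tasks and, on the first disagreeing task, finds the top element with $m-1$ comparisons. It outputs $j$ on every $P_j$ with probability at least $2/3$, using $O(m+1/\varepsilon)$ queries. Under $P_0$ every $\theta(\{c\})=1/2$, so any output is acceptable there. Hence no argument over this family can give $\Omega(m/\varepsilon^2)$, and charging the non-$j$ information evenly across $j$ cannot rescue it.

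In part (i) the perturbation $P_j$ is never actually defined (``conditioning differently on the event that $x_j$ is near $s$''). So the claimed $\Theta(\varepsilon^2)$ per-query divergence, and the claim that all other queries are uninformative, are unsupported there too.

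\textbf{The fix is the construction you discarded.} Your consistency worry disappears once you stop fixing the order of the rivals.
\begin{itemize}
\item In (i), draw independent bits $B_x$ for $x\ne s$. Place every rival with $B_x=0$ above $s$ and every rival with $B_x=1$ below it, in uniformly random order inside each block. Change only the bias of $B_i$, from $1/2$ to $1/2-4\varepsilon$.
\item In (ii), place all candidates with $B_c=1$ above all candidates with $B_c=0$, again in uniformly random order within blocks. Change only $B_i$, to $\mathrm{Ber}(1/2+4\varepsilon)$, which gives $\Pr[i\succ j]=1/2+2\varepsilon$ for all $j\ne i$.
\end{itemize}
In both cases, the restriction of $\pi$ to $C\setminus\{i\}$ is independent of $B_i$ and has the same law under both hypotheses. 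Given $B_i$ and that restriction, the position of $i$ inside its block is uniform under both.

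So reveal $B_i$ at the first query involving $i$ on each task; this can only increase the transcript divergence. The adaptive chain rule then gives
$\mathrm{KL}(\mathbb P_0\|\mathbb P_i)\le \mathbb E_0[M_i]\,\mathrm{KL}(\mathrm{Ber}(1/2)\|\mathrm{Ber}(1/2\pm4\varepsilon))=O(\varepsilon^2)\,\mathbb E_0[M_i]$,
where $M_i$ counts the tasks on which $i$ is queried. Apply Pinsker's inequality: for every $i$ in (i), and for the at least $m-1$ indices with $\mathbb P_0(\widehat c=i)\le 1/2$ in (ii). Together with $M_i\le N_i$ and $\sum_i N_i\le 2Q$, this yields both bounds.
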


\subsection{A submodular relaxation of \texorpdfstring{$\theta$}{theta}}
\label{sec:theta-vs-phi}

The results above motivate a query-efficient objective that still connects
to $\theta$. Greedy is the natural candidate, but it cannot be applied
directly to $\theta$: unlike binary coverage, $\theta$ is monotone but
not submodular.

\begin{proposition}
\label{prop:theta-not-submodular}
    On proper committees, $\theta$ is monotone but not submodular.
\end{proposition}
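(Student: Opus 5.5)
Monotonicity should follow from a pointwise comparison of events. Fix proper committees $S\subseteq T\subsetneq C$. For any ranking $\pi$ we have $r_\pi(T)\le r_\pi(S)$, because $T$ contains every member of $S$. Hence for each outsider $x\in C\setminus T$ the event $\{r_\pi(S)<\mathrm{rank}_\pi(x)\}$ is contained in $\{r_\pi(T)<\mathrm{rank}_\pi(x)\}$, which gives $\mathrm{WIN}(T,\{x\})\ge \mathrm{WIN}(S,\{x\})$. Moreover $C\setminus T\subseteq C\setminus S$, so
\[
\theta(T)=\min_{x\in C\setminus T}\mathrm{WIN}(T,\{x\})\ \ge\ \min_{x\in C\setminus T}\mathrm{WIN}(S,\{x\})\ \ge\ \min_{x\in C\setminus S}\mathrm{WIN}(S,\{x\})=\theta(S).
\]
Two effects point the same way here: each term can only improve, and the minimum is taken over fewer challengers. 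I will also allow $S=\varnothing$, where $\theta(\varnothing)=0$ since $r_\pi(\varnothing)=m+1$.

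For non-submodularity I need explicit $S\subseteq T$ and $a\notin T$ with $T\cup\{a\}$ proper and
\[
\theta(S\cup\{a\})-\theta(S)\ <\ \theta(T\cup\{a\})-\theta(T).
\]
The mechanism to exploit is the second effect above. Adding $a$ to $T$ removes $a$ from the challenger set, and the complementary members $a$ and $b$ together can cover every ranking. The natural template is $S=\varnothing$, $T=\{b\}$, with the distribution chosen so that $\{a,b\}$ always contains the top-ranked expert; then $\theta(\{a,b\})=1$. The inequality then reduces to
\[
\theta(\{a\})+\theta(\{b\})<1.
\]
Because $\theta(\{a\})\le\Pr[a\succ b]$ and $\theta(\{b\})\le\Pr[b\succ a]$, the sum is at most $1$. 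Strictness therefore requires that, for at least one of $a,b$, some third outsider $d$ or $e$ is strictly more binding than the other member of the pair.

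I will take four experts $a,b,d,e$ and a finite mixture of rankings in which $a$ or $b$ is always first. To make $d$ the binding challenger for $a$, I will include rankings where $b$ is first, $d$ is second, and $a$ is below $d$. To make $e$ the binding challenger for $b$, I will include rankings where $a$ is first, $e$ is second, and $b$ is below $e$. With suitable weights, $\Pr[d\succ a]$ should exceed $\Pr[b\succ a]$, or $\Pr[e\succ b]$ should exceed $\Pr[a\succ b]$, while $\{a,b\}$ still wins every ranking. I will then compute the four values $\theta(\varnothing)$, $\theta(\{a\})$, $\theta(\{b\})$ and $\theta(\{a,b\})$ directly from the definition of $\theta$.

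The main obstacle is this tuning. My first attempts at weights were balanced: the added constraints leave $\theta(\{a\})+\theta(\{b\})=1$. The reason is that whenever $b$ is first, $b$ automatically beats $d$ too, so adding such rankings raises $\Pr[b\succ a]$ and $\Pr[d\succ a]$ together. If the $\varnothing$/$\{b\}$ template cannot be made strict, my fallback is to use a nonempty base $S=\{c\}$ with $c$ always ranked last, and $T=\{b,c\}$. This keeps $\theta(\{c\})=0$ while giving an extra outsider to shape the minima. Alternatively, I could allow ties in the top rank in a small mixture of three or four rankings, and search that small space for a strict gap, for instance $1/3$ versus $2/3$.
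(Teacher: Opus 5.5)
Your monotonicity argument is correct; it is a more detailed version of the paper's one-line argument. The gap is in the non-submodularity part. You never exhibit a counterexample, and the template you commit to cannot produce one for any choice of weights.

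Suppose every ranking in the support has $a$ or $b$ first. For any $x\neq a$, the event $x\succ_\pi a$ means $a$ is not first, so $b$ is first, so $b\succ_\pi a$. Hence $\Pr[a\succ x]\ge\Pr[a\succ b]$ for every $x$, which gives $\theta(\{a\})=\Pr[a\succ b]$ exactly. Symmetrically, $\theta(\{b\})=\Pr[b\succ a]$. Therefore $\theta(\{a\})+\theta(\{b\})=1=\theta(\{a,b\})$ identically. The step you hope for, $\Pr[d\succ a]>\Pr[b\succ a]$, is impossible, and the balance you observed is structural rather than a matter of tuning.

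Your fallbacks do not close the gap. An expert $c$ that is always last is beaten by every nonempty committee with probability $1$. So it never attains any minimum, and adding it never changes a best member. The required inequality at $\{c\}\subseteq\{b,c\}$ is therefore literally the one you already had, and it fails under the same hypothesis. Allowing ties leaves the strict-ranking model in which $\theta$ is defined.

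The missing idea is a configuration where the added element does nothing for the smaller set, because that set's binding rival is untouched, yet the added element is exactly the binding rival of the larger set. The paper takes $C=\{a,b,c,d\}$ with weight $2/3$ on $b\succ a\succ d\succ c$ and $1/3$ on $c\succ a\succ d\succ b$. Then $\theta(\{a\})=\theta(\{a,c\})=1/3$: the binding rival $b$ beats $a$ only when $b$ is on top, so $c$ cannot help. Meanwhile $\theta(\{a,b\})=2/3$ and $\theta(\{a,b,c\})=1$, because adding $c$ deletes the binding rival of $\{a,b\}$.

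Your $\varnothing\subseteq\{b\}$ template can also be rescued if you drop the requirement that $\{a,b\}$ always contain the top expert. Put weight $1/2$ each on $d\succ b\succ c\succ a$ and $c\succ a\succ d\succ b$. Since $c$ always beats $a$ and $d$ always beats $b$, $\theta(\{a\})=\theta(\{b\})=0$. But $\theta(\{a,b\})=1/2$, so adding $a$ gains $0$ at $\varnothing$ and $1/2$ at $\{b\}$.
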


We therefore use a submodular relaxation. For $x\in C$ and ranking $\pi$,
let $P_\pi(x):=\{c\in C:\mathrm{rank}_\pi(c)\le
\mathrm{rank}_\pi(x)\}$ be the prefix up to $x$. A committee $S$ covers $(\pi,x)$ if $S\cap P_\pi(x)\neq\varnothing$, meaning that some member of $S$ is ranked at least as highly as $x$. Define $g_x(S):=\Pr_{\pi\sim P_D}[S\cap P_\pi(x)\neq\varnothing]$.
If $x\notin S$, then $g_x(S)=\mathrm{WIN}(S,\{x\})$, while if $x\in S$,
then $g_x(S)=1$. Hence, for every proper committee $S$, $\theta(S)=\min_{x\in C} g_x(S)$.

For $\lambda\in\Delta(C)$, define $\Phi_\lambda(S) := \sum_{x\in C}\lambda_x g_x(S) = \Pr_{\pi\sim P_D,\,x\sim\lambda}[S\cap P_\pi(x)\neq\varnothing]$.

\begin{lemma} \label{lem:theta-phi}
    For every $\lambda\in\Delta(C)$, the function $\Phi_\lambda$ is
    normalized, monotone, and submodular. Moreover, for every proper committee
    $S$, $\theta(S)=\min_{\lambda\in\Delta(C)}\Phi_\lambda(S)$.
\end{lemma}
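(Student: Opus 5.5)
The plan is to treat each $g_x$ as a coverage function and pass all properties to $\Phi_\lambda$ by linearity. Then we identify $\theta$ as a minimum of the linear map $\lambda\mapsto\Phi_\lambda(S)$ over the simplex.

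\textbf{Normalization, monotonicity, submodularity.} For a fixed ranking $\pi$ and rival $x$, the indicator $h_{\pi,x}(S):=\mathbf 1\{S\cap P_\pi(x)\neq\varnothing\}$ is a coverage function of a single element. That element is the pair $(\pi,x)$, which is covered by exactly the experts in $P_\pi(x)$.

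We check each property on $h_{\pi,x}$ directly:
\begin{itemize}
\item $h_{\pi,x}(\varnothing)=0$, since the empty set meets nothing.
\item If $S\subseteq T$, then $S\cap P_\pi(x)\neq\varnothing$ implies $T\cap P_\pi(x)\neq\varnothing$, so $h_{\pi,x}$ is monotone.
\item For $S\subseteq T$ and $c\notin T$, the marginal $h_{\pi,x}(S\cup\{c\})-h_{\pi,x}(S)$ equals $\mathbf 1\{c\in P_\pi(x),\ S\cap P_\pi(x)=\varnothing\}$. This is at least the corresponding marginal at $T$, because $T\cap P_\pi(x)=\varnothing$ forces $S\cap P_\pi(x)=\varnothing$. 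So $h_{\pi,x}$ is submodular.
\end{itemize}

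Now $\Phi_\lambda(S)=\mathbb E_{\pi\sim P_D,\,x\sim\lambda}[h_{\pi,x}(S)]$ is a nonnegative combination of such functions. Normalization, monotonicity and the diminishing-returns inequality are all preserved under nonnegative linear combinations, so they pass to $\Phi_\lambda$. This part is routine.

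\textbf{The min--representation of $\theta$.} Fix a proper committee $S$. The paper already observed that $g_x(S)=\mathrm{WIN}(S,\{x\})$ for $x\notin S$ and $g_x(S)=1$ for $x\in S$.

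Two cases need checking:
\begin{itemize}
\item For $x\notin S$, the prefix $P_\pi(x)$ meets $S$ exactly when some member of $S$ is ranked at least as high as $x$. Since $x\notin S$, this means strictly above $x$, which is the event $r_\pi(S)<\mathrm{rank}_\pi(x)$.
\item For $x\in S$, we have $x\in P_\pi(x)\cap S$, so $g_x(S)=1$.
\end{itemize}

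Because $S$ is proper, $C\setminus S\neq\varnothing$. All values satisfy $g_x(S)\le 1$, so the entries with $x\in S$ never lower the minimum, and $\min_{x\in C}g_x(S)=\min_{x\notin S}g_x(S)=\theta(S)$.

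Finally, $\lambda\mapsto\Phi_\lambda(S)=\sum_x\lambda_x g_x(S)$ is linear on the simplex $\Delta(C)$. Its minimum is therefore attained at a vertex $\lambda=e_x$ and equals $\min_x g_x(S)$. Concretely:
\begin{itemize}
\item $\Phi_\lambda(S)\ge\sum_x\lambda_x\min_{y}g_y(S)=\min_y g_y(S)$ for every $\lambda$.
\item Taking $\lambda=e_{x^*}$ with $x^*\in\arg\min_x g_x(S)$ achieves this bound.
\end{itemize}
Hence $\theta(S)=\min_{\lambda\in\Delta(C)}\Phi_\lambda(S)$.

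There is no genuine obstacle here. The only point needing care is the proper-committee hypothesis, which guarantees that the minimum over $C$ is attained at some rival outside $S$ rather than being the trivial value $1$.
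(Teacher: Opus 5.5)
Your proof is correct and follows the same route as the paper. Each $(\pi,x)$-indicator is a coverage function whose properties pass to the nonnegative average $\Phi_\lambda$; the minimum of the linear map $\lambda\mapsto\sum_x\lambda_x g_x(S)$ over the simplex is attained at a vertex, and properness of $S$ gives $\min_{x\in C}g_x(S)=\min_{x\in C\setminus S}\mathrm{WIN}(S,\{x\})=\theta(S)$. You simply spell out explicitly the coverage and extreme-point steps that the paper states in one line each.
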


\textbf{Important distinction.}
Lemma~\ref{lem:theta-phi} does not make $\theta$ submodular. It writes $\theta$ as the
pointwise minimum of the submodular functions $\Phi_\lambda$, and such a
minimum need not be submodular. Thus the greedy algorithm below is a
query-efficient oracle for the fixed-$\lambda$ problem $\max_{S\in\mathcal S_k}\Phi_\lambda(S)$, not a greedy algorithm for maximizing $\theta$. Section~5.5 explains how to return from $\Phi_\lambda$ to $\theta$, either by auditing a finite
family of committees or by using a minimax wrapper.

\subsection{A pairwise query oracle for fixed $\lambda$}
\label{sec:weighted-ordinal-oracle}
We now give the query primitive used later. Fix $\lambda\in\Delta(C)$. For
a committee $S$, define $\rho_\lambda(S)
:=
1-\Phi_\lambda(S)
=
\Pr_{\pi\sim P_D,\,x\sim\lambda}[S\cap P_\pi(x)=\emptyset]$.
If $\rho_\lambda(S)>0$, define the rival-conditioned rescue rate $q_\lambda(c\mid S)
:=
\Pr_{\pi\sim P_D,\,x\sim\lambda}
[c\in P_\pi(x)\mid S\cap P_\pi(x)=\emptyset]$ for $c\notin S$. Set $q_\lambda(c\mid S)=0$ when
$\rho_\lambda(S)=0$.

\begin{lemma}%[Rival-conditioned marginal identity]
\label{lem:ordinal-marginal-identity}
    For every $S\subseteq C$, $c\notin S$, and $\lambda\in\Delta(C)$, $\Phi_\lambda(S\cup\{c\})-\Phi_\lambda(S) = \rho_\lambda(S)q_\lambda(c\mid S)$
\end{lemma}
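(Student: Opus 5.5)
We prove the identity of Lemma~\ref{lem:ordinal-marginal-identity} directly from the definition of $\Phi_\lambda$ as a coverage probability over the product space of pairs $(\pi,x)$, with $\pi\sim P_D$ and $x\sim\lambda$ drawn independently. The argument mirrors the binary identity \eqref{eq:approval-failure-marginal}; the only change is that the ``task'' is now the pair $(\pi,x)$ rather than $d$. Write $E_S$ for the event $\{S\cap P_\pi(x)\neq\varnothing\}$, so that $\Phi_\lambda(S)=\Pr[E_S]$ and $\rho_\lambda(S)=\Pr[E_S^c]$.

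First, we decompose the covering event of the enlarged committee. Since $P_\pi(x)$ is a fixed set given $(\pi,x)$, we have $(S\cup\{c\})\cap P_\pi(x)\neq\varnothing$ iff $S\cap P_\pi(x)\neq\varnothing$ or $c\in P_\pi(x)$. Hence $E_{S\cup\{c\}}=E_S\cup\bigl(E_S^c\cap\{c\in P_\pi(x)\}\bigr)$, and this is a disjoint union. Taking probabilities gives
\[
\Phi_\lambda(S\cup\{c\})-\Phi_\lambda(S)=\Pr\bigl[E_S^c\cap\{c\in P_\pi(x)\}\bigr].
\]

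Second, we rewrite the right-hand side in the two cases fixed by the definition of $q_\lambda$. If $\rho_\lambda(S)>0$, the definition of conditional probability gives $\Pr[E_S^c\cap\{c\in P_\pi(x)\}]=\Pr[E_S^c]\,\Pr[c\in P_\pi(x)\mid E_S^c]=\rho_\lambda(S)\,q_\lambda(c\mid S)$. If $\rho_\lambda(S)=0$, the event $E_S^c\cap\{c\in P_\pi(x)\}$ is contained in the null event $E_S^c$. So the marginal gain is $0$, which equals $\rho_\lambda(S)q_\lambda(c\mid S)=0\cdot 0$ under the convention $q_\lambda(c\mid S)=0$.

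The only point needing care is that $\rho_\lambda$ and $q_\lambda$ are taken under the joint measure on $(\pi,x)$, not under $\pi$ alone for a fixed $x$. With that in place the argument is a routine event decomposition, and the degenerate case $\rho_\lambda(S)=0$ is handled by the stated convention. If $S=\varnothing$, then $E_S$ is empty, $\rho_\lambda(\varnothing)=1$, and the identity reduces to $\Phi_\lambda(\{c\})=\Pr[c\in P_\pi(x)]$, which is consistent with the above.
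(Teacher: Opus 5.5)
Your proof is correct and takes the same route as the paper. You identify the marginal gain as the probability that $(\pi,x)$ is uncovered by $S$ but has $c\in P_\pi(x)$, factor it by conditional probability when $\rho_\lambda(S)>0$, and use the convention $q_\lambda(c\mid S)=0$ when $\rho_\lambda(S)=0$.
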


The event $S \cap P_{\pi}(x)=\varnothing$ is the pairwise feedback analogue of a
missed task in the binary feedback setting: no current committee member is ranked
at least as highly as the sampled rival $x$. Conditional on such a failed instance-rival pair $(\pi,x)$, testing whether a new candidate $c$ rescues the pair requires only the comparison between $c$ and $x$, with the convention that $c=x$ rescues automatically.

\begin{theorem}%[Failure-conditioned greedy for fixed $\lambda$]
\label{thm:weighted-ordinal-oracle}
    Fix $k\in\{1,\dots,m\}$, $\lambda\in\Delta(C)$, and
    $\varepsilon,\delta\in(0,1)$. The adaptive pairwise query algorithm in
    Appendix~\ref{app:weighted-ordinal-oracle} outputs $\widehat S\in\mathcal S_k$ such that, with
    probability at least $1-\delta$, $\Phi_\lambda(\widehat S)
    \ge
    (1-1/e)\Phi_{\lambda,k}^*-\varepsilon$ and $\Phi_{\lambda,k}^*:=\max_{S\in\mathcal S_k}\Phi_\lambda(S)$.
    Its accepted failed-pair comparisons satisfy the gap-dependent bound in
    Appendix~\ref{app:weighted-ordinal-oracle}; rejection-sampling and committee testing costs are counted
    separately there.
\end{theorem}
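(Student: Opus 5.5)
The plan is to reduce the fixed-$\lambda$ problem to the binary coverage analysis of Theorem~\ref{thm:adaptive-fail-greedy}. We treat each instance-rival pair $(\pi,x)$ with $\pi\sim P_D$ and $x\sim\lambda$ independent as a ``task''. Candidate $c$ solves the pair exactly when $c\in P_\pi(x)$, so the binary utility is $u((\pi,x),c)=\mathbf 1\{c=x\}+\mathbf 1\{c\neq x\}\,\mathrm{Query}(c,x;\pi)$. Under this dictionary, $\Phi_\lambda$ is precisely the domain-level coverage $v$ of the induced task distribution. Lemma~\ref{lem:theta-phi} gives normalization, monotonicity and submodularity. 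Lemma~\ref{lem:ordinal-marginal-identity} is the exact analogue of \eqref{eq:approval-failure-marginal}, with $\rho_\lambda$ and $q_\lambda$ in place of $\rho$ and $q$. Each binary query costs at most one pairwise comparison; the case $c=x$ is free, since $x$ is known to the learner because it is sampled by the learner.

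The algorithm starts from $S_0=\varnothing$ and tracks $\rho_{\lambda,i}:=\rho_\lambda(S_i)$. If $\rho_{\lambda,i}\le\varepsilon/k$, it pads the committee and stops. Otherwise it calls \textsc{FailCond-Elim}$(S_i,C\setminus S_i,\varepsilon/(k\rho_{\lambda,i}),\delta/k)$ on the induced pair distribution and adds the returned candidate. Accepted failed pairs are generated by rejection sampling: draw $\pi$ and $x\sim\lambda$, then compare $x$ against the members of $S_i$ (automatically rescued if $x\in S_i$). A pair is accepted when every member loses to $x$. Theorem~\ref{thm:failcond-elim} then holds verbatim, since it uses only i.i.d.\ conditional samples with $\{0,1\}$ outcomes. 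With probability $1-\delta/k$ per step, it guarantees $q_\lambda(c_{i+1}\mid S_i)\ge\max_c q_\lambda(c\mid S_i)-\varepsilon/(k\rho_{\lambda,i})$. Multiplying by $\rho_{\lambda,i}$ and applying Lemma~\ref{lem:ordinal-marginal-identity} shows that each executed step is additively $\varepsilon/k$-approximately greedy. A union bound gives probability $1-\delta$ for all steps.

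The remaining argument is the standard approximate-greedy recursion, which should be the same as in the proof of Theorem~\ref{thm:adaptive-fail-greedy}. Let $S^*$ be optimal for $\Phi_{\lambda,k}^*$. Submodularity and monotonicity give $\max_c\Delta(c\mid S_i)\ge(\Phi^*_{\lambda,k}-\Phi_\lambda(S_i))/k$, hence
\[
\Phi^*_{\lambda,k}-\Phi_\lambda(S_{i+1})\le(1-1/k)\bigl(\Phi^*_{\lambda,k}-\Phi_\lambda(S_i)\bigr)+\varepsilon/k .
\]
Unrolling over $k$ steps yields $(1-1/e)\Phi^*_{\lambda,k}-\varepsilon$. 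If the procedure stops early, then $\Phi_\lambda(S_i)\ge 1-\varepsilon/k\ge\Phi^*_{\lambda,k}-\varepsilon$, and padding is harmless by monotonicity. The gap-dependent comparison bound follows by summing the per-step bound of Theorem~\ref{thm:failcond-elim} with gaps rescaled by $\rho_{\lambda,i}$, exactly as in Theorem~\ref{thm:adaptive-fail-greedy}.

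The main obstacle is that $\rho_{\lambda,i}$ is unknown, and this affects both the stopping test and the accuracy parameter. I would handle it as in the implementable version of Appendix~\ref{sec:AFG*}. The idea is to estimate $\rho_{\lambda,i}$ from the rejection-sampling acceptance frequency, with a multiplicative confidence bound. Call the elimination routine with accuracy $\varepsilon/(k\widehat\rho^{\,\mathrm{up}}_i)$, and stop when the upper confidence bound falls below $\varepsilon/k$. This spends a constant fraction of $\delta$ on these estimates and costs only constant factors in the bounds. The rejection-sampling and committee-testing comparisons, roughly $1/\rho_{\lambda,i}$ draws per accepted pair with at most $i$ comparisons each, are reported separately, as the statement allows.
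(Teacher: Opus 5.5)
Your proof is correct, and its core is the paper's:
\begin{itemize}
\item instance--rival pairs $(\pi,x)$ become binary tasks with $u((\pi,x),c)=\mathbf 1\{c\in P_\pi(x)\}$;
\item Lemma~\ref{lem:ordinal-marginal-identity} turns each greedy step into best-rescue identification on failed pairs;
\item the analysis of Theorem~\ref{thm:failcond-elim} carries over, because it uses only i.i.d.\ Bernoulli observations and nested active sets, with $c=x$ costing no comparison;
\item the additive-error greedy recursion finishes the argument.
\end{itemize}

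The difference is how the unknown miss rate is handled. The paper's Algorithm~\ref{alg:weighted-ordinal-fail-greedy} does not reuse \textsc{CertifyMiss}. At each step it draws a fixed batch of $L=\lceil 128k^2\varepsilon^{-2}\log(4k/\delta)\rceil$ tentative pairs. It takes the additive Hoeffding upper bound $\bar\rho_i=\min\{1,\widehat\rho_i+\varepsilon/(16k)\}$, so that $\rho_i^\lambda\le\bar\rho_i\le\rho_i^\lambda+\varepsilon/(8k)$. It stops when $\bar\rho_i\le\varepsilon/(4k)$. Otherwise it runs elimination with $\eta_i=\varepsilon/(4k\bar\rho_i)$ and $\delta_i=\delta/(2k)$, losing at most $\varepsilon/(4k)$ per step.

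That route is self-contained. The estimation pairs are discarded, so no warm-start or conditional-independence argument is needed, and $\rho_i^\lambda=0$ forces an immediate stop. The cost of the additive slack is that $\eta_i$ is only guaranteed to be a constant times $\varepsilon/(k(\rho_i^\lambda+\varepsilon/k))$. This is why the paper's candidate bound has $(\rho_i^\lambda+\varepsilon/k)^2$ in the numerator, and why each step pays $L$ committee tests regardless of $\rho_i^\lambda$.

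Your black-box appeal to Theorem~\ref{thm:implementable-failgreedy} on the induced pair distribution gives $\rho_i^\lambda\le\widetilde\rho_i\le 4\rho_i^\lambda$. That yields the sharper numerator $(\rho_i^\lambda)^2$, which implies a bound of the paper's form, and only $O(\log(k/\delta)/\max\{\rho_i^\lambda,\varepsilon/k\})$ certification draws per step. The price is reliance on the warm-start observation and on the draw cap $M_0$.

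Strictly speaking, you have analyzed a variant rather than the algorithm the statement refers to. To cover Algorithm~\ref{alg:weighted-ordinal-fail-greedy} itself, you would redo the wrapper with the additive estimator and the constants above.
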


This is a fixed-weight oracle, not yet a $\theta$ guarantee: a committee
may perform well for one $\lambda$ while still failing a rival that
$\lambda$ gives little weight.

% Theorem~\ref{thm:weighted-ordinal-oracle} should be read as a weighted-subproblem oracle. It is the right primitive for a minimax method over rivals, but by itself it does not certify that $\widehat S$ has large $\theta(\widehat S)$. A committee can have large $\Phi_\lambda$ for one choice of $\lambda$ while performing poorly against a rival that receives little or no mass under $\lambda$.

\subsection{Returning to $\theta$}
\label{sec:return-to-theta}

We use two routes to convert pairwise feedback into guarantees for
$\theta$. The first is direct auditing, either for one proposed committee
or over an explicit finite family of committees.

For a finite family $\mathcal F$ of nonempty proper committees, let $N_{\mathcal F}:=\sum_{S\in\mathcal F}(m-|S|)$.

\begin{theorem}%[Auditing and finite-family $\theta$-learning]
\label{thm:theta-audit-pool-main}
Fix $\varepsilon,\delta\in(0,1)$. A fixed nonempty proper committee
$S\subsetneq C$ can be audited to additive error $\varepsilon$ with
probability at least $1-\delta$ using $O(
\frac{m}{\varepsilon^2}
\log ({2(m-|S|)}/{\delta})
)$ pairwise comparisons.

More generally, empirical maximization over any nonempty finite family
$\mathcal F$ of nonempty proper committees returns
$\widehat S\in\mathcal F$ such that, with probability at least $1-\delta$, $\theta(\widehat S)\ge \max_{S\in\mathcal F}\theta(S)-\varepsilon$, using $O ( \frac{m\log m}{\varepsilon^2}
\log ({2N_{\mathcal F}}/{\delta}) )$ pairwise comparisons.
\end{theorem}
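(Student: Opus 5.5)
For the single-committee audit, I would draw $T$ independent tasks and, on each task $t$, first find the top member $\top_{\pi_t}(S)$ of the fixed committee. This takes $|S|-1$ comparisons via a sequential tournament. I would then compare that top member against every outside rival $x\in C\setminus S$, which takes $m-|S|$ more comparisons. Together this is $m-1$ comparisons per task. It suffices because $r_{\pi_t}(S)<\mathrm{rank}_{\pi_t}(x)$ holds iff $\top_{\pi_t}(S)\succ_{\pi_t}x$ when $x\notin S$.

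These comparisons give, for every $x\in C\setminus S$, the unbiased indicator $Z_t(x)=\mathbf 1\{r_{\pi_t}(S)<\mathrm{rank}_{\pi_t}(x)\}$. Its mean is $\mathrm{WIN}(S,\{x\})$. I set $\widehat\theta_T(S):=\min_{x\notin S}\frac1T\sum_t Z_t(x)$. Hoeffding gives deviation at most $\varepsilon$ for a fixed $x$ with failure probability $2e^{-2T\varepsilon^2}$. A union bound over the $m-|S|$ rivals then shows that $T=\frac{1}{2\varepsilon^2}\log(2(m-|S|)/\delta)$ suffices for all empirical win rates to be uniformly $\varepsilon$-accurate. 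On that event the minimum is also $\varepsilon$-accurate, since $|\min_x a_x-\min_x b_x|\le\max_x|a_x-b_x|$. The total cost is $(m-1)T=O(\frac{m}{\varepsilon^2}\log(2(m-|S|)/\delta))$.

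For the finite family, auditing each committee separately would multiply the cost by $|\mathcal F|$, so I would reuse samples instead. On each of $T$ tasks I would fully recover $\pi_t$ by a comparison sort, which costs $O(m\log m)$ comparisons. This determines $Z_t^S(x)$ for every $S\in\mathcal F$ and every $x\notin S$ simultaneously. I would apply Hoeffding plus a union bound over all $N_{\mathcal F}$ pairs $(S,x)$, so that with $T=\frac{1}{2(\varepsilon/2)^2}\log(2N_{\mathcal F}/\delta)$, every $\widehat\theta_T(S)$ with $S\in\mathcal F$ is within $\varepsilon/2$ of $\theta(S)$, with probability at least $1-\delta$.

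The ERM argument then finishes the proof. Let $\widehat S\in\arg\max_{S\in\mathcal F}\widehat\theta_T(S)$ and $S^\dagger\in\arg\max_{S\in\mathcal F}\theta(S)$. Then
\[
\theta(\widehat S)\ge\widehat\theta_T(\widehat S)-\varepsilon/2\ge\widehat\theta_T(S^\dagger)-\varepsilon/2\ge\theta(S^\dagger)-\varepsilon .
\]
The total cost is $O(Tm\log m)=O(\frac{m\log m}{\varepsilon^2}\log(2N_{\mathcal F}/\delta))$, as stated. This parallels Theorem~\ref{thm:ordinal-erm}, with $N_{\mathcal F}$ in place of $(m-k)\binom mk$.

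There is no real obstacle here. The one point needing care is the query accounting in the single-committee audit. I have to check that $\top_\pi(S)$ is identifiable with $|S|-1$ adaptive comparisons and that one comparison per rival then suffices, without paying for a full sort; this is where the single-committee bound avoids the $\log m$ factor. A smaller point is that ties cannot occur against an outside rival, so the half-credit term of $\mathrm{WIN}$ vanishes and $\theta(S)=\min_{x\notin S}\mathbb E[Z(x)]$ holds exactly.
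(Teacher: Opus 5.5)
Your proposal is correct and follows the paper's proof essentially step for step. For the single audit, you use $(|S|-1)+(m-|S|)=m-1$ comparisons per ranking, then Hoeffding with a union bound over the $m-|S|$ rivals, and take the minimum. For the family, you recover each ranking by comparison sort, take a union bound over the $N_{\mathcal F}$ pairs at accuracy $\varepsilon/2$ (so $T=\frac{2}{\varepsilon^2}\log\frac{2N_{\mathcal F}}{\delta}$), and finish with the standard ERM chain.
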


This gives deterministic $\theta$-learning for explicit pools, such as
committees produced by a heuristic. Taking $\mathcal F=\mathcal S_k$
recovers the exhaustive ordinal ERM baseline of Theorem~\ref{thm:ordinal-erm}, but this family
is exponential. Gap-adaptive refinements are in Appendix~\ref{app:theta-audit}.

Alternatively, we can use the fixed-$\lambda$ oracle inside multiplicative
weights over rivals.

\begin{theorem}%[Minimax wrapper: $\theta$-guarantee with randomization or bicriteria size]
\label{thm:minimax-theta-wrapper}
    Fix $1\le k<m$ and $\varepsilon,\delta\in(0,1)$. With
    $R=O(\varepsilon^{-2}\log m)$ calls to the fixed-$\lambda$ oracle, plus $O(
    R\cdot \frac{m}{\varepsilon^2}\log ({mR}/{\delta}))$ additional pairwise comparisons for rival audits, the minimax wrapper outputs
    committees $S_1,\dots,S_R\in\mathcal S_k$. If $p$ is the uniform
    distribution over these committees, then with probability at least
    $1-\delta$, for every $x\in C$, $\mathbb E_{S\sim p}[g_x(S)]
    \ge
    (1-1/e)\theta_k^*-\varepsilon$.
    Consequently, the union committee $S^+:=\bigcup_{t=1}^R S_t$ satisfies $|S^+|\le kR$ and, with the convention $\theta(C)=1$, $\theta(S^+)\ge (1-1/e)\theta_k^*-\varepsilon$.
\end{theorem}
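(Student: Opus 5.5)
The plan is to run a zero-sum game between a committee player and a rival player, and to solve it with multiplicative weights on the rival side. The rival player keeps weights $\lambda^{(t)}\in\Delta(C)$. Its loss vector at round $t$ is $\ell^{(t)}_x:=g_x(S_t)\in[0,1]$, so it wants to put mass on rivals that the current committee fails to beat. The update is $\lambda^{(t+1)}_x\propto\lambda^{(t)}_x\exp(-\beta\,\widehat g_x(S_t))$, with $\lambda^{(1)}$ uniform, where $\widehat g_x(S_t)$ is an audited estimate of $g_x(S_t)$.

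At each round $t\le R$ we call the fixed-$\lambda$ oracle of Theorem~\ref{thm:weighted-ordinal-oracle} with $\lambda^{(t)}$, accuracy $\varepsilon/4$ and confidence $\delta/(2R)$. It returns $S_t\in\mathcal S_k$ with
\[
\Phi_{\lambda^{(t)}}(S_t)\ge(1-1/e)\Phi^*_{\lambda^{(t)},k}-\varepsilon/4 .
\]
The key link to $\theta$ comes from Lemma~\ref{lem:theta-phi}. If $S^*$ attains $\theta_k^*$, then $\Phi_\lambda(S^*)=\sum_x\lambda_x g_x(S^*)\ge\min_x g_x(S^*)=\theta(S^*)$, using $g_x(S^*)=1$ for $x\in S^*$. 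Hence $\Phi^*_{\lambda,k}\ge\theta_k^*$ for every $\lambda$, and therefore
\[
\sum_x\lambda^{(t)}_x g_x(S_t)\ge(1-1/e)\theta_k^*-\varepsilon/4 \qquad\text{for every } t .
\]

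Next we audit all rivals against $S_t$. Each $g_x(S_t)$ for $x\notin S_t$ is a Bernoulli mean: sample a ranking and test whether some member of $S_t$ beats $x$. Doing this for all $x$ simultaneously costs $O(m)$ comparisons per sampled ranking; it suffices to find the top member of $S_t$ and compare it with each outsider. With $O(\varepsilon^{-2}\log(mR/\delta))$ samples and a union bound over $m$ rivals and $R$ rounds, we get $|\widehat g_x(S_t)-g_x(S_t)|\le\varepsilon/4$ for all $x,t$ with probability at least $1-\delta/2$. For $x\in S_t$ we set $\widehat g_x(S_t)=1$ exactly. This matches the stated audit cost.

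The standard Hedge regret bound for losses in $[0,1]$, with $\beta=\sqrt{\log m/R}$, gives for every fixed $x$
\[
\frac1R\sum_t\widehat g_x(S_t)\ge\frac1R\sum_t\langle\lambda^{(t)},\widehat g(S_t)\rangle-O\!\left(\sqrt{\log m/R}\right).
\]
Choosing $R=O(\varepsilon^{-2}\log m)$ makes the regret term at most $\varepsilon/4$. We then replace $\widehat g$ by $g$ on both sides, losing $\varepsilon/4$ each time, and use the oracle inequality for the inner products. This yields $\mathbb E_{S\sim p}[g_x(S)]\ge(1-1/e)\theta_k^*-\varepsilon$ for all $x$. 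Taking a union bound, the oracle and audit failure events together have probability at most $\delta$.

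For the union committee $S^+$, monotonicity of each $g_x$ gives $g_x(S^+)\ge\max_t g_x(S_t)\ge\frac1R\sum_t g_x(S_t)$. Taking the minimum over $x$ gives the bound on $\theta(S^+)$ through $\theta=\min_x g_x$ when $S^+$ is proper; if $S^+=C$, the convention $\theta(C)=1$ applies.

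The main obstacle is the interplay between adaptivity and the union bounds. Each $\lambda^{(t)}$ depends on earlier random audits, so the oracle's guarantee must hold conditionally on the history. I would handle this by invoking Theorem~\ref{thm:weighted-ordinal-oracle} with fresh samples each round and its own confidence $\delta/(2R)$, so that conditional failure probabilities add up. I would also check that the constants in $R$ and the per-round accuracies combine to exactly $\varepsilon$.
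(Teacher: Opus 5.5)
Your proposal is correct and follows essentially the same route as the paper: multiplicative weights over rivals with the fixed-$\lambda$ oracle called at accuracy $\varepsilon/4$ and confidence $\delta/(2R)$, the bound $\Phi^*_{\lambda,k}\ge\theta_k^*$ obtained from an optimal committee $S^*$, audits of $g_x(S_t)$ that find the top member of $S_t$ and compare it with each outsider, a union bound for adaptivity, and monotonicity of $g_x$ for the union committee. The differences are cosmetic. You run Hedge with losses $g_x$, whereas the paper uses gains $1-g_x$; these give the same update. You also split the error into four $\varepsilon/4$ terms, whereas the paper uses audit accuracy $\varepsilon/8$ and ends with $3\varepsilon/4$.
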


Thus, the weighted relaxation gives a global $\theta$ guarantee either for
a lottery over size-$k$ committees or for a larger union committee. It does not give a deterministic size-$k$ approximation for $\theta$.

\section{Experiments}
\label{sec:experiments}
We include a small empirical study to illustrate the main algorithmic phenomena predicted by the theory. The experiments test whether adaptive committee selection can exploit complementarity among candidate models while using substantially fewer queried evaluations than naive search.

We consider two settings. The first uses binary correctness feedback from multilingual extractive QA: a query reveals whether a candidate model answers a training example correctly, and a committee is evaluated by held-out coverage. The second uses score-derived ordinal feedback from LiveBench:
task-level scores are converted into weak rankings, and a query reveals the induced comparison between two candidates on a training task. 

Figure~\ref{fig:experiment-livebench} shows two consistent patterns. First, adaptive methods reach the held-out oracle reference or substantially improve over budgeted sampled-ERM baselines at comparable query budgets. Second, Top-$k$ individual baselines can perform poorly because they select strong but redundant candidates, whereas the adaptive methods select committees with complementary strengths. This effect is most visible after masking the strongest individual candidates: in the binary feedback setting, Top-$k$ improves after removing the top few individual models from the candidate pool, suggesting that the strongest standalone models may solve largely overlapping sets of tasks and therefore provide limited complementarity. Full details on baselines, experimental setup and additional results can be found in Appendix~\ref{app:experiments}.

% We test the algorithms from Sections~\ref{sec:approval} and~\ref{sec:pairwise} on real LLMs and benchmarks. Figure~\ref{fig:experiment-livebench} highlights two main findings. First, adaptive querying can achieve comparable accuracy to exhaustive search with substantially fewer queries. Second, the selected ensembles exploit complementary strengths that are missed by simple ``pick the best individual models'' heuristics. This is reinforced in the binary-feedback setting, where the top-$k$ baseline improves after removing the top few individual models from the candidate pool, suggesting that the strongest standalone models may solve largely overlapping sets of tasks and therefore provide limited complementarity.

%Details on baselines, experiment setup and additional results can be found in Appendix~\ref{app:experiments}.

% Suggested arrangement for the 6 paper figures from exp 82.
% Layout (2 rows x 3 cols):
%   columns: [no mask | mask | x-axis = queries]
%   rows:    [binary (top) | pairwise/LiveBench (bottom)]
%
% Pairwise side uses LiveBench filtered-ordinal (m=37, n=215, 50/50)
% instead of MTEB. Mask level is top-5 (~14% of m=37) to mirror the
% effective leader-pool fraction of binary's mask-3 (~3% of m=89) and
% MTEB's mask-3 (~4% of m=70).
%
% Assumes you have \usepackage{graphicx,subcaption} in the preamble and
% have copied (or symlinked) the PDFs from
% experiments/82-paper-figures-livebench/plots/paper/ into your img/
% directory.

\begin{figure}[!htb]
  \centering
  % --- top row: binary feedback ---
  \includegraphics[height=\baselineskip]{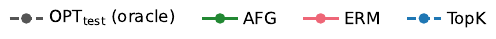}\\
  \begin{subfigure}[t]{0.32\linewidth}
    \includegraphics[width=\linewidth]{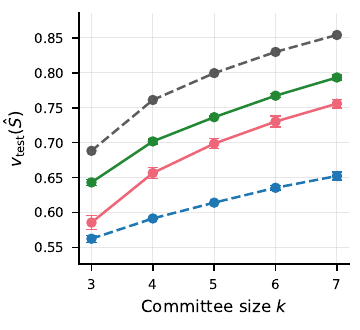}\vspace{-0.1in}
    \caption{Binary, $Q=10^{6}$.}
    \label{fig:lbb-binary-k-sweep}
  \end{subfigure}\hfill
  \begin{subfigure}[t]{0.32\linewidth}
    \includegraphics[width=\linewidth]{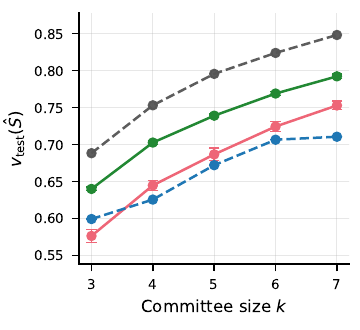}\vspace{-0.1in}
    \caption{Binary, top-5 masked.}
    \label{fig:lbb-binary-k-sweep-masked}
  \end{subfigure}\hfill
  \begin{subfigure}[t]{0.32\linewidth}
    \includegraphics[width=\linewidth]{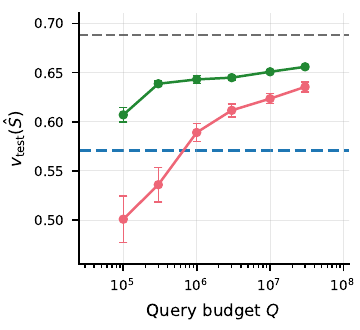}\vspace{-0.1in}
    \caption{Binary, $k=3$: $v_{\mathrm{test}}$ vs $Q$.}
    \label{fig:lbb-binary-q-frontier}
  \end{subfigure}

  % --- bottom row: LiveBench pairwise ---
  \includegraphics[height=\baselineskip]{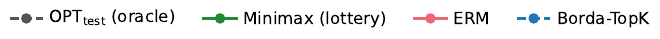}\\
  \begin{subfigure}[t]{0.32\linewidth}
    \includegraphics[width=\linewidth]{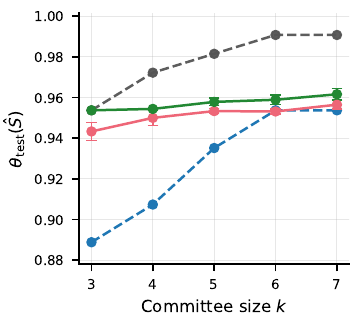}\vspace{-0.1in}
    \caption{Pairwise, $Q=10^{6}$.}
    \label{fig:lbb-livebench-k-sweep}
  \end{subfigure}\hfill
  \begin{subfigure}[t]{0.32\linewidth}
    \includegraphics[width=\linewidth]{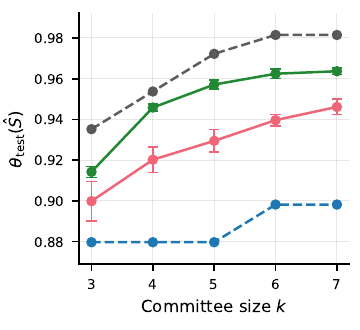}\vspace{-0.1in}
    \caption{Pairwise, top-5 masked.}
    \label{fig:lbb-livebench-k-sweep-masked}
  \end{subfigure}\hfill
  \begin{subfigure}[t]{0.32\linewidth}
    \includegraphics[width=\linewidth]{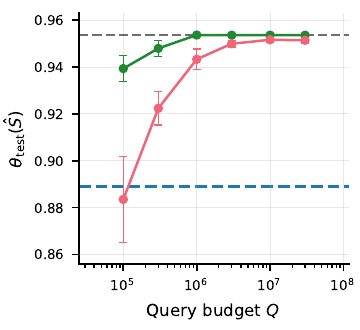}\vspace{-0.1in}
    \caption{\vspace{-0.05in} Pairwise, $k=3$: $\theta_\text{test}$ vs $Q$.}
    \label{fig:lbb-livebench-q-frontier}
  \end{subfigure}

\caption{
Top row: multilingual QA with binary correctness feedback, evaluated by held-out coverage. Bottom row: LiveBench score-derived ordinal feedback, evaluated by the held-out ordinal objective. Left: performance as committee size varies at $Q\approx 10^6$ queries. Middle: the same comparison after removing the five strongest individual candidates from the training pool. Right: performance as a function of query budget for $k=3$.
Top-$k$/Borda-Top-$k$ are full-information singleton baselines, and $\mathrm{OPT}_{\rm test}$ is a held-out oracle reference. Error bars show 95\% confidence intervals over algorithmic random seeds for a fixed train/test split.
}
  \label{fig:experiment-livebench}
\end{figure}

\section{Conclusion}
We introduced a distributional multiwinner voting framework for selecting small LLM ensembles from sampled task feedback. It highlights a deployment-relevant distinction often hidden by leaderboards: the best ensemble need not contain the strongest individual models, but those that best cover one another's failures. In the binary setting, this gives us a sampled-task coverage objective matching approval Chamberlin-Courant, with tight greedy-style guarantees. In the pairwise setting, majority cycles make committee-vs-committee dominance too strong, motivating $\theta$-winning committees and a weighted ordinal coverage relaxation that restores submodularity while preserving a link to preference-based ensemble quality.
The main message is that query efficiency comes from conditioning on failure. Rather than evaluating every candidate on every sampled task, our algorithms focus on residual cases where the current committee remains vulnerable. This preserves approximation guarantees while providing instance-dependent query savings. Small-scale LLM experiments illustrate this prediction: failure-conditioned methods match or improve non-adaptive baselines under comparable budgets, and selected ensembles exploit complementary strengths that Top-$k$ individual baselines often miss.

These results suggest treating LLM ensemble selection as a problem of coverage, complementarity, and evaluation cost, rather than truncating an individual-model leaderboard. They also suggest future directions. One is participatory budgeting, where candidate systems have heterogeneous costs, latencies, or validation burdens. Another is to move beyond subjective tasks and easily verifiable ground-truth tasks. For example, a scientist may ask several models to propose plausible hypotheses, but be misled if one model is right while others are plausibly wrong. This highlights that ensemble value is not always captured by its best member: it can also depend on how candidate answers are aggregated and acted upon. Extending our framework in this direction, together with recent work on LLM ensemble aggregation~\citep{ai2025majorityvotingllmaggregation}, would bring the theory closer to practice.

\bibliographystyle{plainnat}
\bibliography{bib,experiments}

\begin{thebibliography}{62}
\providecommand{\natexlab}[1]{#1}
\providecommand{\url}[1]{\texttt{#1}}
\expandafter\ifx\csname urlstyle\endcsname\relax
  \providecommand{\doi}[1]{doi: #1}\else
  \providecommand{\doi}{doi: \begingroup \urlstyle{rm}\Url}\fi

\bibitem[Ai et~al.(2025)Ai, Pan, Simchi-Levi, Tambe, and Xu]{ai2025majorityvotingllmaggregation}
Rui Ai, Yuqi Pan, David Simchi-Levi, Milind Tambe, and Haifeng Xu.
\newblock Beyond majority voting: Llm aggregation by leveraging higher-order information, 2025.
\newblock URL \url{https://arxiv.org/abs/2510.01499}.

\bibitem[{Amazon Web Services}(2025)]{aws2025bedrockevaluation}
{Amazon Web Services}.
\newblock Beyond the basics: A comprehensive foundation model selection framework for generative {AI}.
\newblock \url{https://aws.amazon.com/blogs/machine-learning/beyond-the-basics-a-comprehensive-foundation-model-selection-framework-for-generative-ai/}, 2025.

\bibitem[Ameli et~al.(2025)Ameli, Zhuang, Stoica, and Mahoney]{amelistatistical}
Siavash Ameli, Siyuan Zhuang, Ion Stoica, and Michael~W Mahoney.
\newblock A statistical framework for ranking {LLM}-based chatbots.
\newblock In \emph{The 13th International Conference on Learning Representations (ICLR)}, 2025.

\bibitem[{Andreessen Horowitz}(2025)]{a16z2025enterprise}
{Andreessen Horowitz}.
\newblock How 100 enterprise cios are building and buying gen ai in 2025.
\newblock \url{https://a16z.com/ai-enterprise-2025/}, 2025.

\bibitem[Artetxe et~al.(2020)Artetxe, Ruder, and Yogatama]{artetxe2020xquad}
Mikel Artetxe, Sebastian Ruder, and Dani Yogatama.
\newblock On the cross-lingual transferability of monolingual representations.
\newblock In \emph{Proceedings of the 58th Annual Meeting of the Association for Computational Linguistics}, pages 4623--4637, 2020.

\bibitem[Audibert and Bubeck(2010)]{audibert2010bestarm}
Jean-Yves Audibert and S{\'e}bastien Bubeck.
\newblock Best arm identification in multi-armed bandits.
\newblock In \emph{Proceedings of the 23rd Conference on Learning Theory (COLT)}, 2010.

\bibitem[{BhabhaAI}(2024)]{gajendra2024}
{BhabhaAI}.
\newblock {Gajendra-v0.1}: A hindi-hinglish-english instruct {LLM}.
\newblock \url{https://huggingface.co/BhabhaAI/Gajendra-v0.1}, 2024.

\bibitem[Bradley and Terry(1952)]{bradley1952rank}
Ralph~Allan Bradley and Milton~E Terry.
\newblock Rank analysis of incomplete block designs: I. the method of paired comparisons.
\newblock \emph{Biometrika}, 39\penalty0 (3/4):\penalty0 324--345, 1952.

\bibitem[Brandt et~al.(2016)Brandt, Conitzer, Endriss, Lang, and Procaccia]{brandt2016handbook}
Felix Brandt, Vincent Conitzer, Ulle Endriss, J{\'e}r{\^o}me Lang, and Ariel~D Procaccia.
\newblock \emph{Handbook of computational social choice}.
\newblock Cambridge University Press, 2016.

\bibitem[Caruana et~al.(2004)Caruana, Niculescu-Mizil, Crew, and Ksikes]{caruana2004ensemble}
Rich Caruana, Alexandru Niculescu-Mizil, Geoff Crew, and Alex Ksikes.
\newblock Ensemble selection from libraries of models.
\newblock In \emph{Proceedings of the 21st International Conference on Machine Learning (ICML)}, page~18, 2004.

\bibitem[Chamberlin and Courant(1983)]{chamberlin1983representative}
John~R. Chamberlin and Paul~N. Courant.
\newblock Representative deliberations and representative decisions: Proportional representation and the {Borda} rule.
\newblock \emph{American Political Science Review}, 77\penalty0 (3):\penalty0 718--733, 1983.

\bibitem[Charikar et~al.(2025)Charikar, Lassota, Ramakrishnan, Vetta, and Wang]{charikar2025six}
Moses Charikar, Alexandra Lassota, Prasanna Ramakrishnan, Adrian Vetta, and Kangning Wang.
\newblock Six candidates suffice to win a voter majority.
\newblock In \emph{Proceedings of the 57th Annual ACM Symposium on Theory of Computing (STOC)}, pages 1590--1601, 2025.

\bibitem[Chen et~al.(2024)Chen, Zaharia, and Zou]{chen2024frugalgpt}
Lingjiao Chen, Matei Zaharia, and James Zou.
\newblock {FrugalGPT}: How to use large language models while reducing cost and improving performance.
\newblock \emph{Transactions on Machine Learning Research}, 2024.

\bibitem[Chen et~al.(2014)Chen, Lin, King, Lyu, and Chen]{chen2014combinatorial}
Shouyuan Chen, Tian Lin, Irwin King, Michael~R. Lyu, and Wei Chen.
\newblock Combinatorial pure exploration of multi-armed bandits.
\newblock In \emph{Proceedings of the 28th International Conference on Neural Information Processing Systems (NeurIPS)}, pages 379--387, 2014.

\bibitem[Chiang et~al.(2024)Chiang, Zheng, Sheng, Angelopoulos, Li, Li, Zhang, Zhu, Jordan, Gonzalez, and Stoica]{chiang2024chatbotarena}
Wei-Lin Chiang, Lianmin Zheng, Ying Sheng, Anastasios~Nikolas Angelopoulos, Tianle Li, Dacheng Li, Hao Zhang, Banghua Zhu, Michael Jordan, Joseph~E. Gonzalez, and Ion Stoica.
\newblock Chatbot arena: An open platform for evaluating {LLM}s by human preference.
\newblock In \emph{Proceedings of the 41st International Conference on Machine Learning (ICML)}, pages 8359--8388, 2024.

\bibitem[Choo et~al.(2026)Choo, Goldberg, and Teh]{choo2026learnqueries}
Davin Choo, Paul~W. Goldberg, and Nicholas Teh.
\newblock Learning unanimously acceptable lotteries via queries.
\newblock In \emph{Proceedings of the 43rd International Conference on Machine Learning (ICML)}, 2026.
\newblock Extended version available as arXiv:2604.17505.

\bibitem[Clark et~al.(2020)Clark, Choi, Collins, Garrette, Kwiatkowski, Nikolaev, and Palomaki]{clark2020tydiqa}
Jonathan~H. Clark, Eunsol Choi, Michael Collins, Dan Garrette, Tom Kwiatkowski, Vitaly Nikolaev, and Jennimaria Palomaki.
\newblock {TyDi QA}: A benchmark for information-seeking question answering in typologically diverse languages.
\newblock \emph{Transactions of the Association for Computational Linguistics}, 8:\penalty0 454--470, 2020.

\bibitem[Ding et~al.(2024)Ding, Mallick, Wang, Sim, Mukherjee, R{\"u}hle, Lakshmanan, and Awadallah]{ding2024hybrid}
Dujian Ding, Ankur Mallick, Chi Wang, Robert Sim, Subhabrata Mukherjee, Victor R{\"u}hle, Laks V.~S. Lakshmanan, and Ahmed~Hassan Awadallah.
\newblock Hybrid {LLM}: Cost-efficient and quality-aware query routing.
\newblock In \emph{The 12th International Conference on Learning Representations (ICLR)}, 2024.

\bibitem[Elkind et~al.(2011)Elkind, Lang, and Saffidine]{elkind2011choosing}
Edith Elkind, J{\'e}r{\^o}me Lang, and Abdallah Saffidine.
\newblock Choosing collectively optimal sets of alternatives based on the condorcet criterion.
\newblock In \emph{Proceedings of the 22nd International Joint Conference on Artificial Intelligence (IJCAI)}, pages 186--191, 2011.

\bibitem[Even-Dar et~al.(2006)Even-Dar, Mannor, and Mansour]{evendar2006action}
Eyal Even-Dar, Shie Mannor, and Yishay Mansour.
\newblock Action elimination and stopping conditions for the multi-armed bandit and reinforcement learning problems.
\newblock \emph{Journal of Machine Learning Research}, 7:\penalty0 1079--1105, 2006.

\bibitem[Feige(1998)]{Feige1998}
Uriel Feige.
\newblock A threshold of $\ln n$ for approximating set cover.
\newblock \emph{Journal of the ACM}, 45\penalty0 (4):\penalty0 634--652, 1998.

\bibitem[Fujii et~al.(2024)Fujii, Nakamura, Loem, Iida, Ohi, Hattori, Shota, Mizuki, Yokota, and Okazaki]{fujii2024swallow}
Kazuki Fujii, Taishi Nakamura, Mengsay Loem, Hiroki Iida, Masanari Ohi, Kakeru Hattori, Hirai Shota, Sakae Mizuki, Rio Yokota, and Naoaki Okazaki.
\newblock Continual pre-training for cross-lingual {LLM} adaptation: Enhancing {Japanese} language capabilities.
\newblock In \emph{Conference on Language Modeling (COLM)}, 2024.

\bibitem[Gabillon et~al.(2013)Gabillon, Kveton, Wen, Eriksson, and Muthukrishnan]{gabillon2013adaptive}
Victor Gabillon, Branislav Kveton, Zheng Wen, Brian Eriksson, and S.~Muthukrishnan.
\newblock Adaptive submodular maximization in bandit setting.
\newblock In \emph{Proceedings of the 27th International Conference on Neural Information Processing Systems (NeurIPS)}, pages 2697--2705, 2013.

\bibitem[Golovin and Krause(2011)]{golovin2011adaptive}
Daniel Golovin and Andreas Krause.
\newblock Adaptive submodularity: Theory and applications in active learning and stochastic optimization.
\newblock \emph{Journal of Artificial Intelligence Research}, 42:\penalty0 427--486, 2011.

\bibitem[Gomes and Selman(2001)]{gomes2001algorithm}
Carla~P. Gomes and Bart Selman.
\newblock Algorithm portfolios.
\newblock \emph{Artificial Intelligence}, 126\penalty0 (1--2):\penalty0 43--62, 2001.

\bibitem[Haghtalab et~al.(2026)Haghtalab, Procaccia, Shao, Wang, and Yang]{haghtalab2026pluralistic}
Nika Haghtalab, Ariel~D. Procaccia, Han Shao, Serena~Lutong Wang, and Kunhe Yang.
\newblock Pluralistic leaderboards.
\newblock In \emph{Proceedings of the 43rd International Conference on Machine Learning}, 2026.

\bibitem[Halpern et~al.(2026)Halpern, Kehne, Procaccia, Tucker-Foltz, and W{\"u}thrich]{halpern2026representation}
Daniel Halpern, Gregory Kehne, Ariel~D Procaccia, Jamie Tucker-Foltz, and Manuel W{\"u}thrich.
\newblock Representation with incomplete votes.
\newblock \emph{Theory and Decision}, pages 1--40, 2026.

\bibitem[Hong and Page(2004)]{hong2004groups}
Lu~Hong and Scott~E Page.
\newblock Groups of diverse problem solvers can outperform groups of high-ability problem solvers.
\newblock \emph{Proceedings of the National Academy of Sciences}, 101\penalty0 (46):\penalty0 16385--16389, 2004.

\bibitem[Hu et~al.(2024)Hu, Bieker, Li, Jiang, Keigwin, Ranganath, Keutzer, and Upadhyay]{hu2024routerbench}
Qitian~Jason Hu, Jacob Bieker, Xiuyu Li, Nan Jiang, Benjamin Keigwin, Gaurav Ranganath, Kurt Keutzer, and Shriyash~Kaustubh Upadhyay.
\newblock {RouterBench}: A benchmark for multi-{LLM} routing system.
\newblock \emph{arXiv preprint arXiv:2403.12031}, 2024.

\bibitem[Huang et~al.(2024)Huang, Yu, Zhu, Sun, Cheng, Song, Chen, Alharthi, An, He, Liu, Chen, Li, Wang, Zhang, Sun, Wan, Li, and Xu]{huang2024acegpt}
Huang Huang, Fei Yu, Jianqing Zhu, Xuening Sun, Hao Cheng, Dingjie Song, Zhihong Chen, Mosen Alharthi, Bang An, Juncai He, Ziche Liu, Junying Chen, Jianquan Li, Benyou Wang, Lian Zhang, Ruoyu Sun, Xiang Wan, Haizhou Li, and Jinchao Xu.
\newblock {AceGPT}, localizing large language models in {Arabic}.
\newblock In \emph{Proceedings of the 2024 Conference of the North American Chapter of the Association for Computational Linguistics: Human Language Technologies (NAACL-HLT)}, pages 8139--8163, 2024.

\bibitem[Imber et~al.(2025)Imber, Israel, Brill, and Kimelfeld]{imber2025approval}
Aviram Imber, Jonas Israel, Markus Brill, and Benny Kimelfeld.
\newblock Approval-based committee voting under incomplete information.
\newblock \emph{Artificial Intelligence}, 347:\penalty0 104381, 2025.

\bibitem[Jiang et~al.(2023)Jiang, Ren, and Lin]{jiang2023llmblender}
Dongfu Jiang, Xiang Ren, and Bill~Yuchen Lin.
\newblock {LLM}-blender: Ensembling large language models with pairwise ranking and generative fusion.
\newblock In \emph{Proceedings of the 61st Annual Meeting of the Association for Computational Linguistics (ACL)}, pages 14165--14178, 2023.

\bibitem[Kerschke et~al.(2019)Kerschke, Hoos, Neumann, and Trautmann]{kerschke2019automated}
Pascal Kerschke, Holger~H. Hoos, Frank Neumann, and Heike Trautmann.
\newblock Automated algorithm selection: Survey and perspectives.
\newblock \emph{Evolutionary Computation}, 27\penalty0 (1):\penalty0 3--45, 2019.

\bibitem[Khalaf et~al.(2026)Khalaf, Wang, Halpern, Shapira, Calmon, and Procaccia]{khalaf2026robust}
Hadi Khalaf, Serena~L Wang, Daniel Halpern, Itai Shapira, Flavio du~Pin Calmon, and Ariel~D Procaccia.
\newblock Robust {AI} evaluation through maximal lotteries.
\newblock In \emph{Proceedings of the 43rd International Conference on Machine Learning (ICML)}, 2026.

\bibitem[Khattab et~al.(2023)Khattab, Singhvi, Maheshwari, Zhang, Santhanam, Vardhamanan, Haq, Sharma, Joshi, Moazam, Miller, Zaharia, and Potts]{khattab2023dspy}
Omar Khattab, Arnav Singhvi, Paridhi Maheshwari, Zhiyuan Zhang, Keshav Santhanam, Sri Vardhamanan, Saiful Haq, Ashutosh Sharma, Thomas~T. Joshi, Hanna Moazam, Heather Miller, Matei Zaharia, and Christopher Potts.
\newblock Dspy: Compiling declarative language model calls into self-improving pipelines.
\newblock \emph{arXiv preprint arXiv:2310.03714}, 2023.

\bibitem[Ko et~al.(2023)Ko, Yang, Ryu, Choi, Yang, Hyun, and Park]{ko2023polyglot}
Hyunwoong Ko, Kichang Yang, Minho Ryu, Taekyoon Choi, Seungmu Yang, Jiwung Hyun, and Sungho Park.
\newblock A technical report for {Polyglot-Ko}: Open-source large-scale {Korean} language models.
\newblock \url{https://www.eleuther.ai/papers-blog/polyglot-ko}, 2023.

\bibitem[Kotthoff(2014)]{kotthoff2014algorithm}
Lars Kotthoff.
\newblock Algorithm selection for combinatorial search problems: A survey.
\newblock \emph{{AI} Magazine}, 35:\penalty0 48--60, 2014.

\bibitem[Kuncheva(2004)]{kuncheva2004combining}
Ludmila~I. Kuncheva.
\newblock \emph{Combining Pattern Classifiers: Methods and Algorithms}.
\newblock Wiley, 2004.

\bibitem[Kurihara et~al.(2022)Kurihara, Kawahara, and Shibata]{kurihara2022jglue}
Kentaro Kurihara, Daisuke Kawahara, and Tomohide Shibata.
\newblock {JGLUE}: {Japanese} general language understanding evaluation.
\newblock In \emph{Proceedings of the 13th Language Resources and Evaluation Conference (LREC)}, pages 2957--2966, 2022.

\bibitem[Kwon et~al.(2023)Kwon, Li, Zhuang, Sheng, Zheng, Yu, Gonzalez, Zhang, and Stoica]{kwon2023vllm}
Woosuk Kwon, Zhuohan Li, Siyuan Zhuang, Ying Sheng, Lianmin Zheng, Cody~Hao Yu, Joseph~E. Gonzalez, Hao Zhang, and Ion Stoica.
\newblock Efficient memory management for large language model serving with {PagedAttention}.
\newblock In \emph{Proceedings of the 29th Symposium on Operating Systems Principles (SOSP)}, pages 611--626, 2023.

\bibitem[Lackner and Skowron(2023)]{lackner2023multiwinner}
Martin Lackner and Piotr Skowron.
\newblock \emph{Multi-Winner Voting with Approval Preferences}.
\newblock SpringerBriefs in Intelligent Systems. Springer, 2023.
\newblock \doi{10.1007/978-3-031-09016-5}.

\bibitem[Lanctot et~al.(2025)Lanctot, Larson, Kaisers, Berthet, Gemp, Diaz, Maura-Rivero, Bachrach, Koop, and Precup]{lanctot2025soft}
Marc Lanctot, Kate Larson, Michael Kaisers, Quentin Berthet, Ian Gemp, Manfred Diaz, Roberto-Rafael Maura-Rivero, Yoram Bachrach, Anna Koop, and Doina Precup.
\newblock Soft condorcet optimization for ranking of general agents.
\newblock In \emph{Proceedings of the 24th International Conference on Autonomous Agents and Multiagent Systems}, pages 1253--1262, 2025.

\bibitem[Lee(2023)]{koalpaca2023}
Junbum Lee.
\newblock {KoAlpaca}: {Korean} alpaca models.
\newblock \url{https://github.com/Beomi/KoAlpaca}, 2023.

\bibitem[LiCalzi and Surucu(2012)]{licalzi2012power}
Marco LiCalzi and Oktay Surucu.
\newblock The power of diversity over large solution spaces.
\newblock \emph{Management Science}, 58\penalty0 (7):\penalty0 1408--1421, 2012.

\bibitem[Lindeboom et~al.(2025)Lindeboom, Brehm, Grossi, and Murukannaiah]{lindeboom2025diverse}
Feline Lindeboom, Martijn Brehm, Davide Grossi, and Pradeep Murukannaiah.
\newblock Diverse committees with incomplete or inaccurate approval ballots.
\newblock \emph{arXiv preprint arXiv:2506.10843}, 2025.

\bibitem[Lu and Boutilier(2013)]{lu2013multiwinner}
Tyler Lu and Craig Boutilier.
\newblock Multi-winner social choice with incomplete preferences.
\newblock In \emph{Proceedings of the 23rd International Joint Conference on Artificial Intelligence (IJCAI)}, pages 263--270, 2013.

\bibitem[Manurangsi(2020)]{manurangsi2020maxkcoverage}
Pasin Manurangsi.
\newblock Tight running time lower bounds for strong inapproximability of maximum \(k\)-coverage, unique set cover and related problems (via \(t\)-wise agreement testing theorem).
\newblock In \emph{Proceedings of the 14th Annual ACM-SIAM Symposium on Discrete Algorithms (SODA)}, pages 62--81, 2020.

\bibitem[{McKinsey \& Company}(2025)]{mckinsey2025stateai}
{McKinsey \& Company}.
\newblock The state of {AI}: How organizations are rewiring to capture value.
\newblock \url{https://www.mckinsey.com/capabilities/quantumblack/our-insights/the-state-of-ai-how-organizations-are-rewiring-to-capture-value}, 2025.

\bibitem[Nemhauser et~al.(1978)Nemhauser, Wolsey, and Fisher]{NemhauserWolseyFisher1978}
George~L. Nemhauser, Laurence~A. Wolsey, and Marshall~L. Fisher.
\newblock An analysis of approximations for maximizing submodular set functions---i.
\newblock \emph{Mathematical Programming}, 14:\penalty0 265--294, 1978.

\bibitem[Ni et~al.(2025)Ni, Chen, Li, Chen, Li, Wang, Wang, Wang, Zhang, Fan, et~al.]{ni2025survey}
Shiwen Ni, Guhong Chen, Shuaimin Li, Xuanang Chen, Siyi Li, Bingli Wang, Qiyao Wang, Xingjian Wang, Yifan Zhang, Liyang Fan, et~al.
\newblock A survey on large language model benchmarks.
\newblock \emph{arXiv preprint arXiv:2508.15361}, 2025.

\bibitem[Ong et~al.(2025)Ong, Almahairi, Wu, Chiang, Wu, Gonzalez, Kadous, and Stoica]{ong2025routellm}
Isaac Ong, Amjad Almahairi, Vincent Wu, Wei-Lin Chiang, Tianhao Wu, Joseph~E. Gonzalez, M.~Waleed Kadous, and Ion Stoica.
\newblock {RouteLLM}: Learning to route {LLM}s with preference data.
\newblock In \emph{The 13th International Conference on Learning Representations (ICLR)}, 2025.
\newblock URL \url{https://openreview.net/forum?id=8sSqNntaMr}.

\bibitem[Rice(1976)]{rice1976algorithm}
John~R. Rice.
\newblock The algorithm selection problem.
\newblock \emph{Advances in Computers}, 15:\penalty0 65--118, 1976.

\bibitem[Skowron and Faliszewski(2017)]{SkowronFaliszewski2017}
Piotr Skowron and Piotr Faliszewski.
\newblock Chamberlin-courant rule with approval ballots: Approximating the maxcover problem with bounded frequencies in {FPT} time.
\newblock \emph{Journal of Artificial Intelligence Research}, 60:\penalty0 687--716, 2017.

\bibitem[Song et~al.(2026)Song, Nguyen, and Lin]{song2026few}
Haoyu Song, Th{\`a}nh Nguyen, and Young-San Lin.
\newblock A few good choices.
\newblock In \emph{Proceedings of the 2026 Annual ACM-SIAM Symposium on Discrete Algorithms (SODA)}, pages 4861--4874, 2026.

\bibitem[{UK AI Security Institute}(2024)]{aisi2024inspect}
{UK AI Security Institute}.
\newblock Inspect {AI}: A framework for large language model evaluations.
\newblock \url{https://github.com/UKGovernmentBEIS/inspect_ai}, 2024.

\bibitem[Wang et~al.(2023)Wang, Chen, Pei, Xie, Kang, Zhang, Xu, Xiong, Dutta, Schaeffer, et~al.]{wang2023decodingtrust}
Boxin Wang, Weixin Chen, Hengzhi Pei, Chulin Xie, Mintong Kang, Chenhui Zhang, Chejian Xu, Zidi Xiong, Ritik Dutta, Rylan Schaeffer, et~al.
\newblock Decodingtrust: A comprehensive assessment of trustworthiness in $\{$GPT$\}$ models.
\newblock In \emph{Proceedings of the 37th International Conference on Neural Information Processing Systems (NeurIPS)}, pages 31232--31339, 2023.

\bibitem[Wang et~al.(2024)Wang, Ma, Zhang, Ni, Chandra, Guo, Ren, Arulraj, He, Jiang, et~al.]{wang2024mmlu}
Yubo Wang, Xueguang Ma, Ge~Zhang, Yuansheng Ni, Abhranil Chandra, Shiguang Guo, Weiming Ren, Aaran Arulraj, Xuan He, Ziyan Jiang, et~al.
\newblock Mmlu-pro: A more robust and challenging multi-task language understanding benchmark.
\newblock \emph{Advances in Neural Information Processing Systems}, 37:\penalty0 95266--95290, 2024.

\bibitem[White et~al.(2025)White, Dooley, Roberts, Pal, Feuer, Jain, Shwartz-Ziv, Jain, Saifullah, Naidu, Hegde, LeCun, Goldstein, Neiswanger, and Goldblum]{white2025livebench}
Colin White, Samuel Dooley, Manley Roberts, Arka Pal, Ben Feuer, Siddhartha Jain, Ravid Shwartz-Ziv, Neel Jain, Khalid Saifullah, Siddartha Naidu, Chinmay Hegde, Yann LeCun, Tom Goldstein, Willie Neiswanger, and Micah Goldblum.
\newblock {LiveBench}: A challenging, contamination-limited {LLM} benchmark.
\newblock In \emph{International Conference on Learning Representations (ICLR)}, 2025.
\newblock Spotlight.

\bibitem[Xu et~al.(2008)Xu, Hutter, Hoos, and Leyton-Brown]{xu2008satzilla}
Lin Xu, Frank Hutter, Holger~H. Hoos, and Kevin Leyton-Brown.
\newblock {SATzilla}: Portfolio-based algorithm selection for {SAT}.
\newblock \emph{Journal of Artificial Intelligence Research}, 32:\penalty0 565--606, 2008.

\bibitem[Yuenyong et~al.(2024)Yuenyong, Viriyayudhakorn, Piyatumrong, and Jaroenkantasima]{yuenyong2024openthaigpt}
Sumeth Yuenyong, Kobkrit Viriyayudhakorn, Apivadee Piyatumrong, and Jillaphat Jaroenkantasima.
\newblock {OpenThaiGPT 1.5}: A {Thai}-centric open source large language model, 2024.
\newblock Cited for the OpenThaiGPT model series; the 1.0.0-7b-chat checkpoint used in our experiments predates this paper.

\bibitem[Zhou(2012)]{zhou2012ensemble}
Zhi-Hua Zhou.
\newblock \emph{Ensemble Methods: Foundations and Algorithms}.
\newblock Chapman and Hall/CRC, 2012.

\bibitem[Zimet et~al.(2026)Zimet, Alouf-Heffetz, and Talmon]{zimet2026query}
Itay~Asher Zimet, Shiri Alouf-Heffetz, and Nimrod Talmon.
\newblock Query-based committee selection.
\newblock \emph{arXiv preprint arXiv:2603.29729}, 2026.

\end{thebibliography}
\clearpage

\appendix
\etocdepthtag.toc{appendix}

\section*{Appendix}

\etocsettagdepth{main}{none}
\etocsettagdepth{appendix}{subsection}
\tableofcontents

\section{Additional Related Work}
\label{app:additional-related-work}

\textbf{Algorithm portfolios and per-instance selection.}
Outside LLMs, the closest older analogue is algorithm selection: choose a solver, or a portfolio of solvers, for instances drawn from a problem family \citep{gomes2001algorithm,kotthoff2014algorithm,kerschke2019automated,rice1976algorithm,xu2008satzilla}. This line of work explains why complementarity matters: no single solver has to be best on every instance. However, the usual goal is to learn a per-instance selector or solver schedule from a training set with rich instance features. We study a different pre-deployment question: using only limited binary or pairwise feedback, how many queries are needed to choose a small fixed committee that will cover future tasks well?

\medskip
\noindent
\textbf{Ensemble pruning and library selection.}
Classical ensemble learning also studies how to select a useful subset from a large library of trained models \citep{caruana2004ensemble,kuncheva2004combining,zhou2012ensemble}. These methods are usually empirical procedures for improving prediction accuracy on a validation set. Our setting keeps the subset-selection motivation but changes the feedback model and the guarantee: the algorithm may not see the full validation matrix, and the aim is a query bound for selecting a high-value committee rather than an empirical recipe for weighting or averaging models.

\medskip
\noindent
\textbf{Pure exploration with structured feedback.}
Each greedy step in our binary algorithm has a flavor of pure exploration in bandits: we need to identify a candidate with large marginal value, not maximize reward while the algorithm is running \citep{audibert2010bestarm,evendar2006action}. Combinatorial pure exploration studies related identification problems when the desired answer is a structured set of arms \citep{chen2014combinatorial}. Our problem has additional structure because samples are shared tasks and the value of a candidate depends on what the current committee already covers. This is why the query bounds depend on miss rates and marginal gaps, rather than only on the gaps between independent arms.

\medskip
\noindent
\textbf{Adaptive submodular and bandit submodular optimization.}
There is also related work on adaptive submodularity and submodular maximization with bandit feedback \citep{gabillon2013adaptive,golovin2011adaptive}. Those frameworks are useful background because coverage is submodular and observations arrive gradually. The role of adaptivity is different here: we are not choosing a sequence of deployment actions, but deciding which entries of a model-task or model-model evaluation table to reveal before selecting the final committee. This distinction is what allows failure-conditioned querying and the finite auditing step used for the pairwise setting.

\section{Supplementary Details for Section~\ref{sec:approval}}

\subsection{Additional Proofs in Section~\ref{subsec:approval-baseline}}

\paragraph{Proof of Theorem~\ref{thm:approval-erm}}
For each $S\in\mathcal S_k$ and each $t\in[T]$, define $Y_t(S):=U(d_t,S)\in\{0,1\}$.
Then $(Y_t(S))_{t=1}^T$ are i.i.d. with mean $\mathbb{E}[Y_t(S)] = \mathbb{E}_{d\sim D}[U(d,S)] = v(S)$, and $\widehat v_T(S) = \frac1T \sum_{t=1}^T Y_t(S)$.

Fix any size-$k$ committee $S$. By Hoeffding's inequality,
\[
\Pr\left(\left|\widehat v_T(S)-v(S)\right|>\frac{\eps}{2}\right)
\le 2\exp\left(-2T\left(\frac{\eps}{2}\right)^2\right)
= 2e^{-T\eps^2/2}.
\]
Applying a union bound over $\mathcal S_k$, with
$|\mathcal S_k|=\binom{m}{k}$, we get that 
\[
\Pr\left(\exists S\in\mathcal S_k:
\left|\widehat v_T(S)-v(S)\right|>\frac{\varepsilon}{2}\right)
\le
2\binom{m}{k}e^{-T\varepsilon^2/2}.
\]
Thus, if $T \ge \frac{2}{\eps^2} ( \ln \binom{m}{k} + \ln \frac{2}{\delta} )$, then with probability at least $1-\delta$ the event $E:=\{
\left|\widehat v_T(S)-v(S)\right|\le \frac{\varepsilon}{2}
\text{ for all } S\in\mathcal S_k \}$ holds.
Condition on $E$. Let $S^* \in \arg\max_{\substack{S\subseteq C\\|S|=k}} v(S)$ and $\widehat S_{\mathrm{ERM}}
\in \arg\max_{\substack{S\subseteq C\\|S|=k}} \widehat v_T(S)$.
Then
\[
v(\widehat S_{\mathrm{ERM}})
\ge \widehat v_T(\widehat S_{\mathrm{ERM}})-\frac{\eps}{2}
\ge \widehat v_T(S^*)-\frac{\eps}{2}
\ge v(S^*)-\eps
= \OPT_k-\eps.
\]
This proves the claim.

\paragraph{Proof of Theorem~\ref{thm:approval-lower}}

In our query model, sampled
instances are informative only through queried oracle answers. We therefore construct hard instances whose
utilities are hidden bit-vectors.

For each sampled instance draw a vector
\[
d=(d_1,\dots,d_m)\in\{0,1\}^m,
\]
and define
\[
u(d,c):=d_c \quad (c\in C).
\]
Thus querying expert $c$ on that instance reveals exactly the $c$th coordinate
of the hidden vector.

For each $i\in[m]$, let $D^{(i)}$ be the product distribution on $\{0,1\}^m$
such that
\[
d_i \sim \mathrm{Bernoulli}\left(\frac12+2\eps\right),
\quad
d_j \sim \mathrm{Bernoulli}\left(\frac12\right)
\quad\text{independently for all } j\neq i.
\]
Let $D^{(0)}$ be the product distribution with all coordinates i.i.d.
$\mathrm{Bernoulli}(1/2)$.

Under $D^{(i)}$,
\[
v(\{i\}) = \frac12+2\eps,
\quad
v(\{j\}) = \frac12 \quad \text{for all } j\neq i.
\]
Hence
\[
\OPT_1 = \frac12+2\eps,
\quad
\OPT_1-\eps = \frac12+\eps.
\]
Therefore any output $\widehat c$ satisfying
\[
v(\{\widehat c\}) \ge \OPT_1-\eps
\]
must equal $i$. Since the theorem assumes success probability at least $2/3$
for every instance distribution, we have
\[
\Pr_{D^{(i)}}[\widehat c = i] \ge \frac23
\quad\text{for every } i\in[m].
\]

Let $P_0$ denote the law of the full observable transcript under $D^{(0)}$,
including the algorithm's internal randomness, all queried pairs, all observed
oracle answers, and the final output. Let $P_i$ be the analogous transcript law
under $D^{(i)}$. Write
\[
p_i := P_0(\widehat c = i).
\]
Since $\sum_{i=1}^m p_i = 1$, at least $m-1$ indices satisfy $p_i \le 1/2$.
Fix such an index $i$, and define the event
\[
A_i := \{\widehat c = i\}.
\]
Then
\[
P_i(A_i)-P_0(A_i)\ge \frac23-\frac12 = \frac16,
\]
so $\|P_i-P_0\|_{\mathrm{TV}} \ge 1/6$. By Pinsker's inequality,
\[
\mathrm{KL}(P_0\|P_i) \ge 2\|P_i-P_0\|_{\mathrm{TV}}^2 \ge \frac1{18}.
\]

Let $Q$ be the total number of queries made before termination, and for each
$j\in C$ let $N_j$ be the number of queries made to expert $j$, so $Q=\sum_{j\in C}N_j$.
Since re-querying the same task-expert pair is redundant by the preliminaries,
we may assume without loss of generality that no task-expert pair is queried twice:
an algorithm that repeats such a query can instead reuse the earlier answer, with
no loss in success probability and no larger query complexity.

If $\mathbb E_{P_0}[Q]=\infty$, then the desired lower bound already holds under
$D^{(0)}$. Hence assume $\mathbb E_{P_0}[Q]<\infty$. For the change-of-measure calculation,
include the algorithm's private randomness in the transcript; equivalently, apply the adaptive
KL chain rule to the transcript stopped at $Q\wedge n$ and let $n\to\infty$.
Conditional on any past transcript, the algorithm's next action has the same conditional law under
$D^{(0)}$ and $D^{(i)}$; only the oracle answer distribution can differ. By the product construction and the no-repetition
assumption, a query to an expert $j\neq i$ has the same conditional law under
$D^{(0)}$ and $D^{(i)}$. A fresh query to expert $i$ has conditional law
$\mathrm{Bernoulli}(1/2)$ under $D^{(0)}$ and conditional law
$\mathrm{Bernoulli}(1/2+2\varepsilon)$ under $D^{(i)}$. Therefore the
adaptive chain rule for KL divergence, applied to the stopped transcript, gives
\[
\mathrm{KL}(P_0\|P_i)
=
\mathbb E_{P_0}[N_i]\,
\mathrm{KL}\left(
\mathrm{Bernoulli}\left(\frac12\right)
\,\middle\|\,
\mathrm{Bernoulli}\left(\frac12+2\varepsilon\right)
\right).
\]

For the Bernoulli divergence,
\[
\mathrm{KL}\left(
\mathrm{Bernoulli}\left(\frac12\right)
\,\middle\|\,
\mathrm{Bernoulli}\left(\frac12+2\eps\right)
\right)
=
\frac12 \ln\left(\frac{1}{1-16\eps^2}\right)
\le
\frac12\cdot \frac{16\eps^2}{1-16\eps^2}
\le
\frac{32}{3}\eps^2,
\]
where we used $-\ln(1-x)\le x/(1-x)$ and $1-16\eps^2\ge 3/4$ since
$\eps\le 1/8$. Combining the previous three displays yields
\[
\mathbb{E}_{P_0}[N_i]
\ge
\frac{1/18}{(32/3)\eps^2}
=
\frac{1}{192\,\eps^2}.
\]
This lower bound holds for at least $m-1$ indices $i$, so
\[
\mathbb{E}_{P_0}[Q]
=
\sum_{j=1}^m \mathbb{E}_{P_0}[N_j]
\ge
(m-1)\cdot \frac{1}{192\,\eps^2}
=
\Omega(m/\eps^2).
\]
Since $D^{(0)}$ is one admissible instance distribution, the algorithm's
worst-case expected query complexity is $\Omega(m/\eps^2)$.

\paragraph{Proof of Theorem~\ref{thm:approval-full-greedy}}
We begin with the structural property needed by the empirical greedy algorithm.

\begin{lemma} \label{lem:approval-submod}
    In the binary feedback model, $v$ and, for every realization $d_1,\dots,d_T$, $\widehat v_T$ are normalized, monotone, and submodular.
\end{lemma}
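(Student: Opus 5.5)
The plan is to reduce both $v$ and $\widehat v_T$ to a single template: a nonnegative mixture of per-task coverage indicators. For a fixed task $d$, set $f_d(S):=U(d,S)=\mathbf 1\{S\cap A_d\neq\varnothing\}$, where $A_d:=\{c\in C: u(d,c)=1\}$ is the set of experts solving $d$. Then
\[
v(S)=\mathbb E_{d\sim D}[f_d(S)], \qquad \widehat v_T(S)=\frac1T\sum_{t=1}^T f_{d_t}(S).
\]
Normalization, monotonicity, and submodularity are each preserved under nonnegative linear combinations and expectations. So it suffices to verify the three properties for each $f_d$ and then pass them through the mixture. For $v$ the mixture is an expectation over $D$; for $\widehat v_T$ it is a uniform average over the realized sample. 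The latter is why the statement holds for every realization $d_1,\dots,d_T$.

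For a single $f_d$, two properties are immediate. Normalization holds because $U(d,\varnothing)=0$ by convention. Monotonicity holds because $S\subseteq S'$ implies $S\cap A_d\subseteq S'\cap A_d$, so nonemptiness is inherited.

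For submodularity, take $S\subseteq S'$ and $c\notin S'$, and compute the marginal gain directly:
\[
f_d(S\cup\{c\})-f_d(S)=\mathbf 1\{c\in A_d\}\,\mathbf 1\{S\cap A_d=\varnothing\}.
\]
Since $S\cap A_d=\varnothing$ is implied by $S'\cap A_d=\varnothing$, the gain at $S$ is at least the gain at $S'$. Integrating this pointwise inequality against $D$, or averaging it over the sample, gives the diminishing-returns inequality for $v$ and $\widehat v_T$. Linearity of expectation and finiteness of all quantities (they lie in $[0,1]$) justify the passage.

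There is no real obstacle here. The only point needing care is that the $c\in S'$ case of diminishing returns is trivial (the gain at $S'$ is zero), so restricting to $c\notin S'$ loses nothing. One should also note that measurability of $d\mapsto U(d,S)$ is implicit in the definition of $v$. As a consistency check, the marginal-gain formula also recovers identity~\eqref{eq:approval-failure-marginal}.
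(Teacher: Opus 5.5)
Your proposal is correct and follows essentially the same route as the paper. Both verify normalization, monotonicity, and diminishing returns for each per-task indicator $U(d,\cdot)$, then pass these properties through the expectation defining $v$ and the empirical average defining $\widehat v_T$. The only difference is presentation: the paper handles diminishing returns by a case split on $U(d,R)$, while your closed-form marginal $\mathbf{1}\{c\in A_d\}\,\mathbf{1}\{S\cap A_d=\varnothing\}$ does it in one line.
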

\begin{proof}
    Fix an instance $d$.

We first show that the set function $S \mapsto U(d,S)$ is monotone. If
$S \subseteq T$ and $U(d,S)=1$, then there exists some $c\in S$ with
$u(d,c)=1$. Since $S \subseteq T$, the same expert belongs to $T$, so
$U(d,T)=1$. Hence
\[
U(d,S)\le U(d,T)
\quad
\text{whenever } S\subseteq T.
\]

We next verify diminishing returns. Fix $S \subseteq R \subseteq C$ and $x \in C \setminus R$.
If $U(d,R)=1$, then
\[
U(d,R\cup\{x\})-U(d,R)=0,
\]
while
\[
U(d,S\cup\{x\})-U(d,S)\ge 0,
\]
so the diminishing-returns inequality holds. If instead $U(d,R)=0$, then no member of $R$
solves $d$, hence also $U(d,S)=0$. In this case
\[
U(d,S\cup\{x\})-U(d,S)=u(d,x)
\quad\text{and}\quad
U(d,R\cup\{x\})-U(d,R)=u(d,x),
\]
so the two marginals are equal. 
Therefore $S \mapsto U(d,S)$ is submodular.

Since $U(d,\varnothing)=0$, we have
$v(\varnothing)=\widehat v_T(\varnothing)=0$. Also,
\[
    v(S)=\mathbb E_{d\sim D}[U(d,S)]
    \quad\text{and}\quad
    \widehat v_T(S)=\frac1T\sum_{t=1}^T U(d_t,S).
\]
Expectations and averages preserve monotonicity and submodularity. Hence both
$v$ and $\widehat v_T$ are normalized, monotone, and submodular.
\end{proof}

We now apply this lemma to the fully elicited empirical objective.

Set $\alpha := \frac{\eps}{2-1/e}$.
Exactly as in the proof of Theorem~\ref{thm:approval-erm}, Hoeffding's inequality and a union
bound over the $\binom{m}{k}$ committees of size $k$ imply that if
\[
T \ge \frac{1}{2\alpha^2}
\left(
\ln \binom{m}{k} + \ln \frac{2}{\delta}
\right),
\]
then with probability at least $1-\delta$,
\[
\left|\widehat v_T(S)-v(S)\right| \le \alpha
\quad
\text{for all } S\subseteq C \text{ with } |S|=k.
\]
Condition on this event.

Let $S^* \in \arg\max_{\substack{S\subseteq C\\|S|=k}} v(S)$, so that $v(S^*)=\OPT_k$. By Lemma~\ref{lem:approval-submod}, the empirical objective $\widehat v_T$ is normalized, monotone,
and submodular. Therefore the standard greedy guarantee under a cardinality-$k$
constraint gives
\[
\widehat v_T(\widehat S_{\rm gr})
\ge
(1-1/e)\max_{|S|\le k}\widehat v_T(S)
=
(1-1/e)\max_{S\in\mathcal S_k}\widehat v_T(S)
\ge
(1-1/e)\widehat v_T(S^*),
\]
where the equality uses monotonicity.
Using the uniform concentration event twice,
\[
\begin{aligned}
v(\widehat S_{\mathrm{gr}})
&\ge \widehat v_T(\widehat S_{\mathrm{gr}})-\alpha \\
&\ge (1-1/e)\widehat v_T(S^*)-\alpha \\
&\ge (1-1/e)(v(S^*)-\alpha)-\alpha \\
&= (1-1/e)\,\OPT_k-(2-1/e)\alpha \\
&= (1-1/e)\,\OPT_k-\eps.
\end{aligned}
\]
The query count is exactly $Q=mT$, because exhaustive evaluation queries every
pair $(d_t,c)$ with $t\in[T]$ and $c\in C$.

\subsection{Additional Proofs in Section \ref{subsec:adaptive_query}} \label{app:apdx-adaptive}

\subsubsection{Deferred pseudocode for failure-conditioned elimination} \label{app:failcond-elim}

\begin{algorithm}[t]
\caption{\textsc{FailCond-Elim}$(S,A,\eta,\delta)$: $\eta$-optimal selection on failures of $S$}
\label{alg:failcond-elim}
\textbf{Input:} committee $S$ with $\rho(S)>0$; nonempty candidate set $A\subseteq C\setminus S$; accuracy $\eta\in(0,1]$; confidence $\delta\in(0,1)$.\\
\textbf{Output:} a candidate $\widehat c\in A$.
\begin{algorithmic}[1]
\State $A_1 \leftarrow A$; initialize $\widehat q_0(c)\leftarrow 0$ for all $c\in A$.
\For{$r=1,2,\dots$}
    \State Draw $d_r \sim D$ conditioned on $U(d_r,S)=0$ (via rejection sampling).
    \ForAll{$c\in A_r$}
        \State Query $X_{r,c}\leftarrow u(d_r,c)\in\{0,1\}$ and set $\widehat q_r(c)\leftarrow
\frac{(r-1)\widehat q_{r-1}(c)+X_{r,c}}{r}$.
    \EndFor
    \State $\mathrm{rad}_r \leftarrow \sqrt{\frac{1}{2r}\ln\left(4|A|r^2/\delta\right)}$.
    \State Let $c_r^* \in \arg\max_{c\in A_r}\widehat q_r(c)$.
    \State $A_{r+1}\leftarrow \Big\{c\in A_r:\ \widehat q_r(c)+\mathrm{rad}_r \ge
            \widehat q_r(c_r^*)-\mathrm{rad}_r-\eta \Big\}$.
    \If{$\mathrm{rad}_r \le \eta/4$ \ \textbf{or}\ $|A_{r+1}|=1$}
        \State \textbf{return} any $\widehat c \in \arg\max_{c\in A_{r+1}}\widehat q_r(c)$.
    \EndIf
\EndFor
\end{algorithmic}
\end{algorithm}

\paragraph{Failure-conditioned marginal identity}
The following identity is the reason that the greedy marginal step can be estimated using only tasks missed by the current committee.

\begin{lemma}
\label{lem:approval-failure-identity}
In the binary feedback model, for every $S\subseteq C$ and
$c\in C\setminus S$, $\Delta(c\mid S)=\rho(S)q(c\mid S)$.
In particular, for fixed $S$ with $\rho(S)>0$, maximizing
$\Delta(c\mid S)$ over $c\in C\setminus S$ is equivalent to maximizing
$q(c\mid S)$.
\end{lemma}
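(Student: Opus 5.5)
The plan is to prove the identity pointwise in the task $d$ and then take expectations. Fix $S\subseteq C$ and $c\in C\setminus S$.

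\textbf{Pointwise identity.} For any fixed $d$, I will verify
\[
U(d,S\cup\{c\})-U(d,S)=\mathbf 1\{U(d,S)=0\}\,u(d,c).
\]
This splits into two cases, exactly as in the diminishing-returns argument of Lemma~\ref{lem:approval-submod}.
\begin{itemize}
\item If $U(d,S)=1$, then some member of $S$ solves $d$. Hence $U(d,S\cup\{c\})=1$, and both sides are $0$.
\item If $U(d,S)=0$, then no member of $S$ solves $d$. Hence $U(d,S\cup\{c\})=u(d,c)$, and both sides equal $u(d,c)$.
\end{itemize}
The convention $U(d,\varnothing)=0$ makes this also cover $S=\varnothing$.

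\textbf{Taking expectations.} Taking expectation over $d\sim D$ gives
\[
\Delta(c\mid S)=v(S\cup\{c\})-v(S)=\Pr_{d\sim D}[U(d,S)=0,\ u(d,c)=1].
\]
\begin{itemize}
\item If $\rho(S)>0$, the definition of conditional probability rewrites the right-hand side as $\rho(S)\,q(c\mid S)$.
\item If $\rho(S)=0$, the right-hand side is at most $\Pr[U(d,S)=0]=0$. The convention $q(c\mid S)=0$ then makes both sides zero.
\end{itemize}

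\textbf{The ``in particular'' clause.} For fixed $S$ with $\rho(S)>0$, the map $c\mapsto\Delta(c\mid S)$ is the positive constant $\rho(S)$ times $c\mapsto q(c\mid S)$. The two functions therefore have the same argmax over $C\setminus S$. More strongly, every gap in $q$ scales by $\rho(S)$, and this scaling is what the later query bounds use.

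There is no substantive obstacle here. The only care needed is handling the degenerate case $\rho(S)=0$ through the stated convention, and making sure that the event $\{U(d,S\cup\{c\})=1\}$ decomposes disjointly into $\{U(d,S)=1\}$ and $\{U(d,S)=0,\,u(d,c)=1\}$.
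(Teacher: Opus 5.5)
Your proposal is correct and follows essentially the same route as the paper: both derive the pointwise identity that the marginal gain is $1$ exactly when $U(d,S)=0$ and $u(d,c)=1$. Both then take expectations, handle $\rho(S)=0$ via the convention $q(c\mid S)=0$, and obtain the argmax equivalence from the fact that $\rho(S)>0$ is a constant independent of $c$.
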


\begin{proof}
Fix $S\subseteq C$ and $c\in C\setminus S$. Since utilities are binary,
\[
U(d,S\cup\{c\})-U(d,S)
=
\begin{cases}
1, & \text{if } U(d,S)=0 \text{ and } u(d,c)=1,\\
0, & \text{otherwise.}
\end{cases}
\]
Therefore
\[
\Delta(c\mid S)
=
\Pr_{d\sim D}[U(d,S)=0 \text{ and } u(d,c)=1].
\]
If $\rho(S)=0$, then the probability in the previous display is zero, while $q(c\mid S)=0$ by the convention in the definition of $q$; hence both sides are zero. If $\rho(S)>0$, then
conditional probability gives
\[
\Delta(c\mid S)
=
\Pr[U(d,S)=0]\Pr[u(d,c)=1\mid U(d,S)=0]
=
\rho(S)q(c\mid S).
\]
The final claim follows because, for the fixed $S$ under consideration, $\rho(S)>0$ is a constant independent of $c$.
\end{proof}

\paragraph{Proof of Theorem~\ref{thm:failcond-elim}}

    Let $q^* := \max_{c\in A} q(c\mid S)$. Since $S$ is fixed and $\rho(S)>0$, rejection sampling produces accepted failure instances that are i.i.d. from the conditional law $D \mid U(d,S)=0$.

    For the analysis, fix an infinite i.i.d. sequence $d_1,d_2,\dots$ from this conditional law and define
    \[
        X_{r,c} := u(d_r,c), \quad r\ge 1,\ c\in A.
    \]
    For each fixed $c\in A$, the variables $(X_{r,c})_{r\ge1}$ are i.i.d.
    Bernoulli with mean $q(c\mid S)$. For the analysis, define for every
    $c\in A$ and $r\ge1$
    \[
        \widehat q_r(c):=\frac1r\sum_{s=1}^r X_{s,c}.
    \]
    The algorithm reveals $X_{r,c}$ only while $c$ remains active. Because the
    active sets are nested, whenever $c\in A_r$, the displayed quantity
    $\widehat q_r(c)$ is exactly the empirical mean maintained by the algorithm.
    
    For any fixed $c\in A$ and $r\ge 1$, Hoeffding's inequality gives us
    \[
    \Pr\left(\left|\widehat q_r(c)-q(c\mid S)\right|>\mathrm{rad}_r\right)
    \le 2e^{-2r\mathrm{rad}_r^2}
    =
    \frac{\delta}{2|A|r^2}.
    \]
    Therefore, by a union bound over all $c\in A$ and all $r\ge 1$,
    \[
    \Pr\left(\exists c\in A,\exists r\ge 1:
    \left|\widehat q_r(c)-q(c\mid S)\right|>\mathrm{rad}_r\right)
    \le
    \sum_{c\in A}\sum_{r\ge 1}\frac{\delta}{2|A|r^2}
    < \delta.
    \]
    Consequently, with probability at least $1-\delta$, the event
    \[
    E:=\left\{\forall c\in A,\forall r\ge 1:
    \left|\widehat q_r(c)-q(c\mid S)\right|\le \mathrm{rad}_r\right\}
    \]
    holds. Condition on $E$.
    Let $c^*\in\arg\max_{c\in A} q(c\mid S)$, so that $q(c^*\mid S)=q^*$.
    
    \textbf{Step 1: $c^*$ is never eliminated.}
    If at round $r$ the elimination rule removed $c^*$, then
    \[
    \widehat q_r(c^*)+\mathrm{rad}_r
    <
    \widehat q_r(c_r^*)-\mathrm{rad}_r-\eta,
    \]
    where $c_r^*\in\arg\max_{c\in A_r}\widehat q_r(c)$. Under $E$,
    \[
    \widehat q_r(c^*)+\mathrm{rad}_r \ge q^*
    \quad\text{and}\quad
    \widehat q_r(c_r^*)-\mathrm{rad}_r-\eta \le q^*-\eta,
    \]
    a contradiction. Thus $c^*$ remains active throughout.

    \textbf{Step 2: correctness of the returned candidate.}
    There are two stopping cases.
    
    If $|A_{r+1}|=1$, then by Step~1 the sole surviving candidate must be $c^*$, so the algorithm
    returns an optimal candidate.
    
    Otherwise the algorithm stops because $\mathrm{rad}_r\le \eta/4$. Let $\widehat c$ be the returned
    candidate, chosen to maximize $\widehat q_r$ over $A_{r+1}$. Since $c^*\in A_{r+1}$ and
    $\widehat q_r(\widehat c)\ge \widehat q_r(c^*)$,
    \[
    q(\widehat c\mid S)
    \ge \widehat q_r(\widehat c)-\mathrm{rad}_r
    \ge \widehat q_r(c^*)-\mathrm{rad}_r
    \ge q^*-2\mathrm{rad}_r
    \ge q^*-\eta.
    \]
    
    \textbf{Step 3: number of accepted failure instances.}
    Let
    \[
        r_0 := \left\lceil K \frac{\log(e|A|/(\delta\eta))}{\eta^2}\right\rceil,
    \]
    where $K$ is a sufficiently large universal constant. Since
    $\log(e|A|/(\delta\eta))\ge 1$, we have
    \[
        \log\log(e|A|/(\delta\eta)) \le \log(e|A|/(\delta\eta)).
    \]
    Also, because $|A|/\delta \ge 1$ and $\eta \le 1$,
    \[
        \log(1/\eta) \le \log(e|A|/(\delta\eta)).
    \]
    Therefore
    \[
        \log r_0
        \le \log(2K) + 2\log(1/\eta)
           + \log\log(e|A|/(\delta\eta))
        \le C \log(e|A|/(\delta\eta))
    \]
    for a universal constant $C$. Hence
    \[
        \log\left(\frac{4|A|r_0^2}{\delta}\right)
        \le C' \log(e|A|/(\delta\eta))
    \]
    for another universal constant $C'$. Choosing $K$ large enough gives
    \[
        \mathrm{rad}_{r_0}
        =
        \sqrt{\frac{1}{2r_0}\log\left(\frac{4|A|r_0^2}{\delta}\right)}
        \le \frac{\eta}{4}.
    \]
    Thus Algorithm 1 must stop by round $r_0$, and consequently
    \[
        R = O\left(\frac{\log(e|A|/(\delta\eta))}{\eta^2}\right)
    \]
    deterministically.
    
    \textbf{Step 4: gap-dependent candidate-query bound.}
    At round $r$, the algorithm queries exactly the candidates in $A_r$, so
$Q_{\rm cand}=\sum_{r=1}^R |A_r|$. Fix $c\in A$.
    
    If $\Delta_q(c) \ge 2\eta$ and $\mathrm{rad}_r < \Delta_q(c)/8$, then under $E$,
    \begin{align*}
    \widehat q_r(c) + \mathrm{rad}_r
    \le q(c \mid S) + 2\mathrm{rad}_r 
    = q^* - \Delta_q(c) + 2\mathrm{rad}_r 
    &< q^* - \eta - 2\mathrm{rad}_r \\
    &\le \widehat q_r(c^*) - \mathrm{rad}_r - \eta \\
    &\le \widehat q_r(c_r^*) - \mathrm{rad}_r - \eta,
    \end{align*}
    so $c \notin A_{r+1}$. 
    By the same calculation as in Step~3, with $\eta$ replaced by $\Delta_q(c)$ and constants adjusted from $1/4$ to $1/8$, there is a round
    \[
        r =
        O\left(
            \frac{\ln(e|A|/(\delta\Delta_q(c)))}{\Delta_q(c)^2}
        \right)
    \]
    for which $\mathrm{rad}_r<\Delta_q(c)/8$. Hence, if $\Delta_q(c)\ge 2\eta$, candidate $c$ is queried at most
    \[
        O\left(
            \frac{\ln(e|A|/(\delta\eta))}{\Delta_q(c)^2}
        \right)
    \]
    times, since
    \[
        \ln(e|A|/(\delta\Delta_q(c)))
        \le
        \ln(e|A|/(\delta\eta)).
    \]
    If $\Delta_q(c)<2\eta$, then even in the worst case $c$ survives only until Algorithm~\ref{alg:failcond-elim} stops, which by Step~3 is
    \[
        O\left(
            \frac{\ln(e|A|/(\delta\eta))}{\eta^2}
        \right)
        =
        O\left(
            \frac{\ln(e|A|/(\delta\eta))}
                 {\max\{\Delta_q(c),\eta\}^2}
        \right),
    \]
    because $\eta \le \max\{\Delta_q(c),\eta\}<2\eta$. Combining the two cases shows that, on the event $E$, candidate $c$ is queried at most
    \[
        O\left(
            \frac{\ln(e|A|/(\delta\eta))}
                 {\max\{\Delta_q(c),\eta\}^2}
        \right)
    \]
    times, and summing over $c\in A$ proves the bound on $Q_{\mathrm{cand}}$.

\subsection{\textsc{Adaptive-Fail-Greedy} and Theorem \ref{thm:adaptive-fail-greedy}}\label{sec:AFG}

\begin{algorithm}[t]
\caption{\textsc{Adaptive-Fail-Greedy}$(k,\varepsilon,\delta)$}
\label{alg:adaptive-fail-greedy}
\begin{algorithmic}[1]
\State $S_0\gets\varnothing$
\For{$i=0,1,\dots,k-1$}
    \State $\rho_i\gets 1-v(S_i)$
    \If{$\rho_i\le \varepsilon/k$}
        \State Add any $k-i$ remaining experts to reach size $k$
        \State \Return the resulting committee
    \EndIf
    \State $\eta_i\gets \varepsilon/(k\rho_i)$ and $\delta_i\gets\delta/k$
    \State $\widehat c_{i+1}
        \gets \textsc{FailCond-Elim}(S_i,C\setminus S_i,\eta_i,\delta_i)$
    \State $S_{i+1}\gets S_i\cup\{\widehat c_{i+1}\}$
\EndFor
\State \Return $S_k$
\end{algorithmic}
\end{algorithm}

Algorithm~\ref{alg:adaptive-fail-greedy} is written in conceptual form: at step $i$, it uses the true miss rate $\rho_i = 1 - v(S_i)$ both to decide whether to stop early and to set the accuracy parameter $\eta_i$. We next show how to make this step implementable. The idea is to use the same rejection-sampling transcript both to certify that the miss rate is not already small and to warm-start the failure-conditioned elimination routine.

For a fixed committee $S \subseteq C$ with $\rho(S)>0$, let
$d_1,d_2,\dots$ be i.i.d. from $D$, and let $B_t(S) := \mathbf{1}\{U(d_t,S)=0\}$.
For each $r\ge 1$, define the $r$-th failure time
\[
N_r(S) := \inf\left\{n\ge 1:\sum_{t=1}^n B_t(S)=r\right\}.
\]
Thus $N_r(S)$ is the number of unconditional draws needed to obtain
$r$ accepted failures of $S$. Since the tasks are i.i.d., conditional on
any realization of the acceptance indicators $(B_t(S))_{t\ge 1}$, the
accepted tasks, in their observed order, are independent with common law
$D \mid U(d,S)=0$. This conditional product law does not depend on the
realized indicator sequence, so the same remains true after conditioning on
any event determined by these indicators, such as the value of $N_r(S)$ or
the event $N_r(S)\le M$. Hence Algorithm~\ref{alg:failcond-elim} may use an already stored
initial segment of accepted failures and then continue with fresh accepted
failures generated by rejection sampling. The concentration proof of
Theorem~\ref{thm:failcond-elim} applies to the entire resulting accepted-failure sequence. By the
query-access convention in the preliminaries, sampled instances may be stored
and later queried on adaptively chosen experts, so this warm start is fully
implementable.

\begin{lemma}
\label{lem:miss-rate-certification}
Fix $S \subseteq C$ with $\rho(S)>0$, and let $r\ge 1$. Then
\[
\Pr\left[N_r(S) < \frac{r}{2\rho(S)}\right] \le e^{-r/6},
\quad
\Pr\left[N_r(S) > \frac{2r}{\rho(S)}\right] \le e^{-r/4}.
\]
Consequently,
\[
\Pr\left[\rho(S) \le \frac{2r}{N_r(S)} \le 4\rho(S)\right]
\ge 1 - 2e^{-r/6}.
\]
\end{lemma}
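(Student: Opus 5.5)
The plan is to reduce both tail bounds on the stopping time $N_r(S)$ to tail bounds on a binomial count, and then apply multiplicative Chernoff bounds. Write $\rho:=\rho(S)$ and $Z_n:=\sum_{t=1}^n B_t(S)$. Since the draws are i.i.d. from $D$, $Z_n\sim\mathrm{Bin}(n,\rho)$. The basic duality is $\{N_r(S)\le n\}=\{Z_n\ge r\}$: the $r$-th failure has occurred by time $n$ exactly when at least $r$ failures appear among the first $n$ draws.

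\textbf{Lower tail.} Let $n_1:=\lfloor r/(2\rho)\rfloor$. Then $\{N_r(S)<r/(2\rho)\}\subseteq\{N_r(S)\le n_1\}=\{Z_{n_1}\ge r\}$. If $n_1=0$ the event is empty and there is nothing to prove. Otherwise $\mu_1:=\mathbb E[Z_{n_1}]=n_1\rho\le r/2$, so $r\ge 2\mu_1$. The clean route is to bound $\Pr[Z_{n_1}\ge r]$ by applying Chernoff at the level $r$ directly, rather than relative to $\mu_1$. By monotonicity in $n$, we may pass to a (possibly fractional) binomial whose mean is exactly $r/2$ and that stochastically dominates $Z_{n_1}$. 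Rounding is an issue here, so instead we use the exponential moment bound: for $\lambda>0$,
\[
\Pr[Z_{n_1}\ge r]\le e^{-\lambda r}(1+\rho(e^\lambda-1))^{n_1}\le \exp\!\big(-\lambda r+\tfrac r2(e^\lambda-1)\big).
\]
Choosing $\lambda=\ln 2$ gives $\exp(-r(\ln 2-\tfrac12))\le e^{-r/6}$, since $\ln2-\tfrac12\approx0.193>1/6$.

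\textbf{Upper tail.} Let $n_2:=\lfloor 2r/\rho\rfloor$. Then $\{N_r(S)>2r/\rho\}=\{N_r(S)>n_2\}=\{Z_{n_2}\le r-1\}\subseteq\{Z_{n_2}<r\}$. Its mean is $\mu_2=n_2\rho>2r-\rho\ge 2r-1$. Again use the exponential moment bound with $\lambda<0$: for $s>0$,
\[
\Pr[Z_{n_2}\le r-1]\le e^{s(r-1)}\exp\!\big(-n_2\rho(1-e^{-s})\big).
\]
With $n_2\rho\ge 2r-1$, this gives
\[
\exp\!\big(s(r-1)-(2r-1)(1-e^{-s})\big)=\exp\!\big(-r(2(1-e^{-s})-s)\big)\cdot\exp\!\big((1-e^{-s})-s\big).
\]
The second factor is at most $1$. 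Choosing $s=\ln 2$ gives an exponent $r(1-\ln2)\approx 0.307r\ge r/4$. Hence the bound is $e^{-r/4}$.

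\textbf{Consequence.} On the complement of both events, $r/(2\rho)\le N_r(S)\le 2r/\rho$. Rearranging gives $\rho\le 2r/N_r(S)\le 4\rho$. A union bound then yields failure probability at most $e^{-r/6}+e^{-r/4}\le 2e^{-r/6}$.

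\textbf{Main obstacle.} The delicate step is the rounding and the edge effects in the duality, namely the floors and the $r-1$ versus $r$ in the upper tail. Working with raw moment-generating-function bounds, rather than textbook Chernoff forms stated relative to the mean, absorbs these cleanly.
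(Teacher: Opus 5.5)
Your proof is correct and takes the same route as the paper: you turn each tail of $N_r(S)$ into a binomial tail via $\{N_r(S)\le n\}=\{Z_n\ge r\}$ at $n\approx r/(2\rho)$ and $n\approx 2r/\rho$, then apply a Chernoff-type bound, and your floor and $r-1$ handling at the boundaries is sound. The differences are only in the details. You derive the exponential-moment bounds directly with $\lambda=s=\ln 2$, using $\lfloor 2r/\rho\rfloor$ and the event $Z_n\le r-1$ in the upper tail. The paper instead cites the multiplicative Chernoff forms $\exp(-\alpha\mu/3)$ for $\alpha\ge 1$ and $\exp(-\mu/8)$, with $n_+=\lceil 2r/\rho\rceil$ and the event $X_+\le r$.
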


\begin{proof}
Write $\rho := \rho(S)$.

For the lower tail, let $n_- := \lfloor r/(2\rho)\rfloor$. If $n_-=0$, then
the event $N_r(S) < r/(2\rho)$ is impossible. Otherwise define
\[
X_- := \sum_{t=1}^{n_-} B_t(S).
\]
Then $X_- \sim \mathrm{Binomial}(n_-,\rho)$ with mean
$\mu_- := \mathbb{E}[X_-] = n_-\rho \le r/2$. If $N_r(S) < r/(2\rho)$, then
$X_- \ge r$. Since $r = (1+\alpha)\mu_-$ for
$\alpha := r/\mu_- - 1 \ge 1$, the multiplicative Chernoff bound gives
\[
\Pr\left[N_r(S) < \frac{r}{2\rho}\right]
\le \Pr[X_- \ge r]
\le \exp\left(-\frac{\alpha\mu_-}{3}\right)
= \exp\left(-\frac{r-\mu_-}{3}\right)
\le e^{-r/6}.
\]

For the upper tail, let $n_+ := \lceil 2r/\rho\rceil$ and define $X_+ := \sum_{t=1}^{n_+} B_t(S)$.
Then $X_+ \sim \mathrm{Binomial}(n_+,\rho)$ with mean
$\mu_+ := \mathbb E[X_+] = n_+\rho \ge 2r$. If $N_r(S)>2r/\rho$, then,
since $N_r(S)$ is integer and $n_+=\lceil 2r/\rho\rceil$, after $n_+$
draws there have been at most $r$ accepted failures, so $X_+\le r$. Since
$r\le \mu_+/2$, the lower-tail Chernoff bound yields
\[
\Pr\left[N_r(S)>\frac{2r}{\rho}\right]
\le \Pr[X_+\le r]
\le \Pr[X_+\le \mu_+/2]
\le e^{-\mu_+/8}
\le e^{-r/4}.
\]

Finally, if $N_r(S) \in [\,r/(2\rho),\,2r/\rho\,]$, then $\rho \le \frac{2r}{N_r(S)} \le 4\rho$.
A union bound over the two tail events proves our result.
\end{proof}

\begin{algorithm}[t]
\caption{\textsc{CertifyMiss}$(S,r_0,M_0)$}
\label{alg:certify-miss}
\begin{algorithmic}[1]
\State $F \leftarrow \varnothing$, $N \leftarrow 0$.
\While{$N < M_0$ and $|F| < r_0$}
    \State Draw $d \sim D$ and set $N \leftarrow N+1$.
    \State Evaluate $U(d,S)$ by querying experts in $S$ until either one queried expert returns $1$ or all queried experts return $0$.
    \If{$U(d,S)=0$}
        \State append $d$ to $F$.
    \EndIf
\EndWhile
\If{$|F| < r_0$}
    \State return \textsf{STOP}.
\Else
    \State return $(F,N)$.
\EndIf
\end{algorithmic}
\end{algorithm}

\begin{algorithm}[t]
\caption{\textsc{Adaptive-Fail-Greedy-Implementable}$(k,\varepsilon,\delta)$}
\label{alg:adaptive-fail-greedy-impl}
\begin{algorithmic}[1]
\State $r_0 \leftarrow \left\lceil 6\ln(4k/\delta)\right\rceil$.
\State $M_0 \leftarrow \left\lfloor 2r_0k/\varepsilon \right\rfloor$.
\State $S_0 \leftarrow \varnothing$.
\For{$i=0,1,\dots,k-1$}
    \State Run \textsc{CertifyMiss}$(S_i,r_0,M_0)$.
    \If{\textsc{CertifyMiss} returns \textsf{STOP}}
        \State Add any $k-i$ remaining experts and return the resulting committee.
    \EndIf
    \State Let $(F_i,N_i)$ be the returned pair.
    \State Set
    \[
    \widetilde\rho_i \leftarrow \min\left\{1,\frac{2r_0}{N_i}\right\},
    \quad
    \eta_i \leftarrow \frac{\varepsilon}{k\widetilde\rho_i},
    \quad
    \delta_i \leftarrow \frac{\delta}{2k}.
    \]
    \State Run Algorithm~\ref{alg:failcond-elim} on input $(S_i, C\setminus S_i, \eta_i, \delta_i)$, but replace the first $r_0$ accepted failures on line~3 by the stored failures in $F_i$, processed in the order they were observed. Let $\widehat c_{i+1}$ be the returned candidate.
    \State $S_{i+1} \leftarrow S_i \cup \{\widehat c_{i+1}\}$.
\EndFor
\State return $S_k$.
\end{algorithmic}
\end{algorithm}

Lemma~\ref{lem:miss-rate-certification} gives the certification guarantee used by Algorithm~\ref{alg:adaptive-fail-greedy-impl}. On the event considered in the proof below, every executed step $i$ satisfies $\rho_i \le \widetilde\rho_i \le 4\rho_i$.
Thus $[\widetilde\rho_i/4,\widetilde\rho_i]$ is a valid constant-factor confidence interval for the unknown miss rate $\rho_i$.

\begin{theorem}
\label{thm:implementable-failgreedy}
Fix $k \in \{1,\dots,m\}$ and $\varepsilon,\delta \in (0,1)$. Let $\widehat S$
be the output of Algorithm~\ref{alg:adaptive-fail-greedy-impl}, and let
$Q_{\mathrm{cand}}$ denote the total number of candidate queries made inside its
warm-started calls to Algorithm~\ref{alg:failcond-elim}. For each loop iteration $i$ reached by Algorithm~\ref{alg:adaptive-fail-greedy-impl}, let $\rho_i := 1-v(S_i)$, and, for $c\in C\setminus S_i$, let $\Delta_i(c)
:= \max_{c'\in C\setminus S_i}\Delta(c'\mid S_i)-\Delta(c\mid S_i)$.
Then, with probability at least $1-\delta$, $v(\widehat S) \ge (1-1/e)\mathrm{OPT}_k - \varepsilon$.
Moreover,
\[
Q_{\mathrm{cand}}
=
O\left(
\sum_i \sum_{c \in C\setminus S_i}
\frac{\rho_i^2 \ln(emk^2/(\delta\varepsilon))}
{\max\{\Delta_i(c),\varepsilon/k\}^2}
\right),
\]
where the outer sum ranges over the steps that actually call Algorithm~\ref{alg:failcond-elim}.
\end{theorem}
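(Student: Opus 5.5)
The proof works on a single good event $\mathcal E$ of probability at least $1-\delta$. On $\mathcal E$ we run the standard approximate-greedy argument with additive per-step error $\varepsilon/k$, and we bound the candidate queries by calling Theorem~\ref{thm:failcond-elim} at each step.

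\textbf{The good event.} Let $\mathcal E$ be the intersection of the following events over all reached steps $i$. First, the certification event: \textsc{CertifyMiss} returns \textsf{STOP} only if $\rho_i\le\varepsilon/k$; and when it returns $(F_i,N_i)$ we have $\rho_i\le\widetilde\rho_i\le 4\rho_i$. Second, the success event of the warm-started call to Algorithm~\ref{alg:failcond-elim}, with confidence $\delta/(2k)$.

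For STOP: if $\rho_i>\varepsilon/k$, then $M_0\ge 2r_0/\rho_i-1$. Lemma~\ref{lem:miss-rate-certification}, with $r=r_0$ and the tail $N_{r_0}>2r_0/\rho_i$, bounds the probability of STOP by $e^{-r_0/4}$; the rounding in $M_0$ needs only a minor adjustment. For the interval: conditioning on the STOP/non-STOP outcome keeps the accepted failures i.i.d. from $D\mid U(d,S_i)=0$, as argued before Lemma~\ref{lem:miss-rate-certification}. Lemma~\ref{lem:miss-rate-certification} then gives $\rho_i\le 2r_0/N_i\le 4\rho_i$ with failure probability $2e^{-r_0/6}\le \delta/(2k)$, since $r_0=\lceil 6\ln(4k/\delta)\rceil$. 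Truncating at $1$ preserves the interval because $\rho_i\le 1$.

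The same independence fact justifies the warm start, so Theorem~\ref{thm:failcond-elim} applies to each call with accuracy $\eta_i=\varepsilon/(k\widetilde\rho_i)\le \varepsilon/(k\rho_i)$. The one subtlety is that $S_i$ and $\eta_i$ are random. We handle this by conditioning on the history up to step $i$, which fixes them, so each per-step bound holds conditionally. A union bound over at most $k$ steps, with two events per step, gives $\Pr[\mathcal E]\ge 1-\delta$.

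\textbf{Approximation.} On $\mathcal E$, at each executed step Lemma~\ref{lem:approval-failure-identity} and Theorem~\ref{thm:failcond-elim} give
\[
\Delta(\widehat c_{i+1}\mid S_i)=\rho_i q(\widehat c_{i+1}\mid S_i)\ge \max_{c}\Delta(c\mid S_i)-\rho_i\eta_i\ge \max_c\Delta(c\mid S_i)-\varepsilon/k .
\]
By monotone submodularity (Lemma~\ref{lem:approval-submod}), $\max_c\Delta(c\mid S_i)\ge (\mathrm{OPT}_k-v(S_i))/k$. Hence $\mathrm{OPT}_k-v(S_{i+1})\le(1-1/k)(\mathrm{OPT}_k-v(S_i))+\varepsilon/k$. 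Unrolling over $k$ steps yields $v(S_k)\ge(1-1/e)\mathrm{OPT}_k-\varepsilon$.

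If the algorithm stops early at step $i$, then $\rho_i\le\varepsilon/k$, so $v(S_i)\ge 1-\varepsilon/k\ge \mathrm{OPT}_k-\varepsilon$. Padding does not decrease $v$ by monotonicity, so the bound holds in this case too.

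\textbf{Query bound.} On $\mathcal E$, at step $i$ Theorem~\ref{thm:failcond-elim} bounds the candidate queries by
\[
O\Bigl(\sum_{c}\frac{\log(e m/(\delta_i\eta_i))}{\max\{\Delta_q(c),\eta_i\}^2}\Bigr),
\]
where the warm-start failures count as ordinary rounds. Now substitute the following relations.
\begin{itemize}
\item Rescue-rate gaps convert to marginal gaps: $\Delta_q(c)=\Delta_i(c)/\rho_i$, by Lemma~\ref{lem:approval-failure-identity}.
\item The accuracy is within a constant of its ideal value: $\eta_i\ge \varepsilon/(4k\rho_i)$.
\item The logarithm is uniformly bounded: $\eta_i\ge\varepsilon/(4k)$ and $\delta_i=\delta/(2k)$, so $\log(em/(\delta_i\eta_i))=O(\log(emk^2/(\delta\varepsilon)))$.
\end{itemize}
Together these give $\max\{\Delta_q(c),\eta_i\}\ge \tfrac14\max\{\Delta_i(c),\varepsilon/k\}/\rho_i$. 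Each term is therefore $O\bigl(\rho_i^2\log(emk^2/(\delta\varepsilon))/\max\{\Delta_i(c),\varepsilon/k\}^2\bigr)$, and summing over executed steps gives the stated bound.

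The main obstacle is the rigorous handling of adaptivity. The warm-started failure sequence must still be i.i.d. conditional on the certification outcome, and the per-step guarantees must hold conditional on random $S_i$ and $\eta_i$. Everything else is the routine approximate-greedy and gap-substitution calculation above.
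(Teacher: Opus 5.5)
Your proposal is correct and follows the paper's proof essentially step for step: the certification interval from Lemma~\ref{lem:miss-rate-certification}; warm-started calls covered by Theorem~\ref{thm:failcond-elim} at confidence $\delta/(2k)$, justified by the i.i.d.\ accepted-failure fact; an adaptive union bound; the additive-error greedy recursion using $\rho_i\eta_i\le\varepsilon/k$; and the same substitution $\max\{\Delta_q(c),\eta_i\}\ge\tfrac14\max\{\Delta_i(c),\varepsilon/k\}/\rho_i$. Two small tightenings, both as in the paper: an erroneous STOP already forces $N_{r_0}(S_i)>2r_0/\rho_i$, which is a failure of the (unconditional, given the history) certification interval, so the separate $e^{-r_0/4}$ term is unnecessary and would slightly overshoot the per-step budget; and a non-STOP step has $N_i\le M_0$, hence $\widetilde\rho_i\ge\varepsilon/k$ and $\eta_i\in(0,1]$, which is the hypothesis Theorem~\ref{thm:failcond-elim} requires.
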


\begin{proof}
Fix a step $i$ and condition on the history up to the start of that step, so
$S_i$ is fixed. If $\rho_i=0$, then \textsc{CertifyMiss} necessarily returns
\textsf{STOP}, so suppose $\rho_i>0$. Let $\mathcal{E}_i^{\mathrm{cert}}$ be
the event from Lemma~\ref{lem:miss-rate-certification} that
\[
\rho_i \le \frac{2r_0}{N_{r_0}(S_i)} \le 4\rho_i.
\]
By the choice of $r_0$,
\[
\Pr\left[(\mathcal{E}_i^{\mathrm{cert}})^c \,\middle|\, \text{history up to step }i\right]
\le 2e^{-r_0/6}
\le \frac{\delta}{2k}.
\]

On $\mathcal{E}_i^{\mathrm{cert}}$, the certification phase behaves as follows.

First, suppose Algorithm~\ref{alg:adaptive-fail-greedy-impl} stops at step $i$.
Then $N_{r_0}(S_i) > M_0$. If $\rho_i > \varepsilon/k$, then
\[
N_{r_0}(S_i)
\le \frac{2r_0}{\rho_i}
< \frac{2r_0k}{\varepsilon}.
\]
Since $N_{r_0}(S_i)$ is integer and
$M_0 = \lfloor 2r_0k/\varepsilon \rfloor$, this implies
$N_{r_0}(S_i) \le M_0$, a contradiction. Hence every early stop satisfies
\[
\rho_i \le \varepsilon/k.
\]

Second, suppose step $i$ proceeds. Then $N_i = N_{r_0}(S_i) \le M_0$, and so
\[
\frac{2r_0}{N_i}
\ge
\frac{2r_0}{M_0}
\ge
\frac{\varepsilon}{k}.
\]
Therefore $\widetilde\rho_i \ge \varepsilon/k$, which implies
$\eta_i = \varepsilon/(k\widetilde\rho_i) \in (0,1]$. Also, on
$\mathcal{E}_i^{\mathrm{cert}}$,
\[
\rho_i
\le
\widetilde\rho_i
=
\min\left\{1,\frac{2r_0}{N_i}\right\}
\le
4\rho_i.
\]
Hence
\[
\frac{\varepsilon}{4k\rho_i}
\le
\eta_i
=
\frac{\varepsilon}{k\widetilde\rho_i}
\le
\frac{\varepsilon}{k\rho_i}.
\]

Now let $\mathcal E_i^{\rm elim}$ be the event that the warm-started call
to Algorithm~\ref{alg:failcond-elim} at step $i$, if made, satisfies the accuracy and
candidate-query guarantees of Theorem~\ref{thm:failcond-elim}. Conditional on the history up to
step $i$ and on the full certification transcript at that step, the
quantities $N_i,\widetilde\rho_i,\eta_i,\delta_i$ are fixed. Moreover, by
the warm-start observation above, on the event that step $i$ calls
Algorithm~\ref{alg:failcond-elim}, the stored failures $F_i$, followed by the fresh accepted
failures generated during the warm-started call, form an i.i.d. sequence from
$D\mid U(d,S_i)=0$. Since $|S_i|=i<k\le m$, the set $C\setminus S_i$
is nonempty, and Theorem~\ref{thm:failcond-elim} applies. Thus, after integrating over the
certification transcript,
\[
\Pr\left(
(\mathcal E_i^{\rm elim})^c
\,\middle|\,
\text{history up to step }i,\,
\mathcal E_i^{\rm cert},\,
\text{step }i\text{ calls Algorithm~\ref{alg:failcond-elim}}
\right)
\le \delta_i=\frac{\delta}{2k}.
\]
By the adaptive union bound,
\[
\Pr\left[
\begin{array}{l}
\text{some reached step violates }\mathcal E_i^{\rm cert},\text{ or}\\
\text{some reached step that calls Algorithm~\ref{alg:failcond-elim} violates }\mathcal E_i^{\rm elim}
\end{array}
\right]
\le
\sum_{i=0}^{k-1}\frac{\delta}{2k}
+
\sum_{i=0}^{k-1}\frac{\delta}{2k}
=
\delta.
\]
Work on the complementary event.

If the algorithm stops early at step $i$, then $\rho_i \le \varepsilon/k$, and
therefore
\[
\OPT_k - v(S_i)
\le
1-v(S_i)
=
\rho_i
\le
\varepsilon/k
\le
\varepsilon.
\]
After padding $S_i$ with arbitrary remaining experts to size $k$, monotonicity
gives
\[
v(\widehat S) \ge v(S_i) \ge \OPT_k - \varepsilon,
\]
which is stronger than the claimed bound.

It remains to consider the case in which no early stop occurs. Then the algorithm
executes all $k$ additions, so $\widehat S = S_k$. Fix
$S^* \in \arg\max_{|S|=k} v(S)$ and define
\[
g_i := \OPT_k - v(S_i), \quad i=0,1,\dots,k.
\]
By monotonicity and submodularity,
\[
\OPT_k
=
v(S^*)
\le
v(S_i \cup S^*)
\le
v(S_i) + \sum_{c\in S^* \setminus S_i} \Delta(c \mid S_i)
\le
v(S_i) + k \max_{c\in C\setminus S_i}\Delta(c \mid S_i),
\]
hence
\[
\max_{c\in C\setminus S_i}\Delta(c \mid S_i) \ge \frac{g_i}{k}.
\]

Since step $i$ is executed, $\mathcal{E}_i^{\mathrm{elim}}$ and Theorem~\ref{thm:failcond-elim}
give us
\[
q(\widehat c_{i+1} \mid S_i)
\ge
\max_{c\in C\setminus S_i} q(c \mid S_i) - \eta_i.
\]
Applying Lemma~\ref{lem:approval-failure-identity} and using $\widetilde\rho_i \ge \rho_i$, we obtain
\[
\Delta(\widehat c_{i+1} \mid S_i)
=
\rho_i q(\widehat c_{i+1} \mid S_i)
\ge
\max_{c\in C\setminus S_i}\Delta(c \mid S_i) - \rho_i\eta_i
\ge
\max_{c\in C\setminus S_i}\Delta(c \mid S_i) - \frac{\varepsilon}{k}
\ge
\frac{g_i}{k} - \frac{\varepsilon}{k}.
\]
Therefore
\[
g_{i+1}
=
g_i - \Delta(\widehat c_{i+1} \mid S_i)
\le
\left(1-\frac{1}{k}\right)g_i + \frac{\varepsilon}{k}.
\]
Unrolling the recurrence gives
\[
g_k
\le
\left(1-\frac{1}{k}\right)^k \OPT_k
+
\sum_{j=0}^{k-1}\left(1-\frac{1}{k}\right)^j \frac{\varepsilon}{k}
\le
e^{-1}\OPT_k + \varepsilon,
\]
and hence
\[
v(\widehat S) = \OPT_k - g_k \ge (1-1/e)\OPT_k - \varepsilon.
\]

For the candidate-query bound, fix a step $i$ that calls Algorithm~\ref{alg:failcond-elim}, and
let $Q_{\mathrm{cand}}^{(i)}$ be the number of candidate queries made in
this warm-started call. Define
\[
\Delta_q^{(i)}(c)
:=
\max_{c'\in C\setminus S_i} q(c'\mid S_i)-q(c\mid S_i).
\]
Theorem~\ref{thm:failcond-elim} applied to the warm-started call with $A = C\setminus S_i$ and
$\delta_i = \delta/(2k)$ gives us
\[
Q_{\mathrm{cand}}^{(i)}
=
O\left(
\sum_{c \in C\setminus S_i}
\frac{\ln(e|C\setminus S_i|/(\delta_i\eta_i))}
{\max\{\Delta_q^{(i)}(c),\eta_i\}^2}
\right).
\]
Because $|C\setminus S_i| \le m$, $\delta_i = \delta/(2k)$, and
$\eta_i \ge \varepsilon/k$, the logarithmic factor satisfies
\[
\ln(e|C\setminus S_i|/(\delta_i\eta_i))
\le
\ln(2emk^2/(\delta\varepsilon))
=
\mathcal{O}(\ln(emk^2/(\delta\varepsilon))).
\]
Also, Lemma~\ref{lem:approval-failure-identity} implies $\Delta_i(c) = \rho_i \Delta_q^{(i)}(c)$,
and since $\widetilde\rho_i \le 4\rho_i$,
\[
\eta_i = \frac{\varepsilon}{k\widetilde\rho_i} \ge \frac{\varepsilon}{4k\rho_i}.
\]
Therefore
\[
\max\{\Delta_q^{(i)}(c),\eta_i\}
\ge
\frac{1}{\rho_i}\max\left\{\Delta_i(c),\frac{\varepsilon}{4k}\right\},
\]
so
\[
\frac{1}{\max\{\Delta_q^{(i)}(c),\eta_i\}^2}
\le
\frac{\rho_i^2}{\max\{\Delta_i(c),\varepsilon/(4k)\}^2}
\le
16 \cdot \frac{\rho_i^2}{\max\{\Delta_i(c),\varepsilon/k\}^2}.
\]
Substituting this into the previous display and summing over executed steps proves
\[
Q_{\mathrm{cand}}
=
O\left(
\sum_i \sum_{c \in C\setminus S_i}
\frac{\rho_i^2 \ln(emk^2/(\delta\varepsilon))}
{\max\{\Delta_i(c),\varepsilon/k\}^2}
\right). \qedhere
\]
\end{proof}

\paragraph{Committee evaluation cost.}
Theorem~\ref{thm:implementable-failgreedy} counts only candidate queries made after a failure instance has been accepted. The implementable algorithm also spends queries to test whether an unconditional draw is a failure of the current committee.

At any loop iteration $i$ reached by Algorithm~\ref{alg:adaptive-fail-greedy-impl}, the certification phase
examines at most
\[
M_0 = O\left(\frac{k\log(k/\delta)}{\varepsilon}\right)
\]
unconditional draws. More sharply, if $\rho_i>0$, then
\[
\mathbb E\left[\min\{N_{r_0}(S_i),M_0\}\right]
\le
\min\left\{\frac{r_0}{\rho_i},M_0\right\}
=
O\left(\frac{\log(k/\delta)}{\max\{\rho_i,\varepsilon/k\}}\right).
\]
When $\rho_i=0$, the same $O$-bound holds because the left-hand side is at
most $M_0$. If an unconditional draw $d\sim D$ is tested against $S_i$
using order $\sigma_i$, let $Q_{\mathrm{eval}}(d,S_i,\sigma_i)$ denote
the number of binary task-expert queries used to decide whether
$U(d,S_i)=0$. Conditioning on the history at the start of the iteration and
applying Wald's identity to the bounded stopping time
$\min\{N_{r_0}(S_i),M_0\}$, the certification phase contributes
\[
O\left(
\mathbb E[Q_{\mathrm{eval}}(d,S_i,\sigma_i)]
\cdot
\frac{\log(k/\delta)}{\max\{\rho_i,\varepsilon/k\}}
\right)
\]
queries in expectation at step $i$.

At a reached step $i$ that calls Algorithm~\ref{alg:failcond-elim}, condition on the history and
the certification transcript. The warm-started call then needs, in expectation,
at most
\[
O\left(
\frac{\log(emk^2/(\delta\varepsilon))}{\rho_i\eta_i^2}
\right)
=
O\left(
\frac{\widetilde\rho_i^{\,2}k^2\log(emk^2/(\delta\varepsilon))}
{\rho_i\varepsilon^2}
\right)
\]
fresh unconditional draws to generate accepted failures. On the
high-probability event used in Theorem~\ref{thm:implementable-failgreedy}, $\widetilde\rho_i\le 4\rho_i$,
so this is
\[
O\left(
\frac{\rho_i k^2\log(emk^2/(\delta\varepsilon))}{\varepsilon^2}
\right).
\]
Multiplying by $\mathbb E[Q_{\mathrm{eval}}(d,S_i,\sigma_i)]$ gives the same
committee-evaluation term as in the discussion after Theorem~\ref{thm:adaptive-fail-greedy}, up to
universal constants, plus the additive certification overhead above.

\paragraph{Proof of Theorem~\ref{thm:adaptive-fail-greedy}}

    For each loop iteration $i$ reached by Algorithm~\ref{alg:adaptive-fail-greedy}, condition on the
    history up to the moment Algorithm~\ref{alg:adaptive-fail-greedy} checks the early-stop condition at step
    $i$. On the event that Algorithm~\ref{alg:adaptive-fail-greedy} does not stop there and therefore calls
    Algorithm~\ref{alg:failcond-elim} with inputs $(S_i,C\setminus S_i,\eta_i,\delta_i)$,
    Theorem~\ref{thm:failcond-elim} implies that this call simultaneously satisfies its
    $\eta_i$-accuracy guarantee and its candidate-query bound with conditional
    probability at least $1-\delta_i$. Hence the probability that the call made
    at step $i$, if any, violates either guarantee is at most
    $\delta_i=\delta/k$. By the adaptive union bound over the at most $k$
    reached iterations, with probability at least $1-\delta$, every call to
    Algorithm~\ref{alg:failcond-elim} made by Algorithm~\ref{alg:adaptive-fail-greedy} satisfies both guarantees. Work on this event.
    
    If Algorithm~\ref{alg:adaptive-fail-greedy} stops early at step $i$, then $\rho_i \le \eps/k$. Since $OPT_k \le 1$, we have
    \[
    \OPT_k - v(S_i) \le 1 - v(S_i) = \rho_i \le \eps/k \le \eps.
    \]
    After padding $S_i$ with arbitrary remaining experts to size $k$, monotonicity gives
    \[
    v(\widehat S)\ge v(S_i)\ge \OPT_k-\eps,
    \]
    which is stronger than the claimed value bound.
    
    It remains to consider the case in which no early stop occurs. Then Algorithm~\ref{alg:adaptive-fail-greedy} executes all $k$ greedy additions, so $\widehat S=S_k$. Fix $S^*\in\arg\max_{|S|=k} v(S)$ and define
    \[
    g_i:=\OPT_k-v(S_i), \quad i=0,\dots,k.
    \]
    By monotonicity and submodularity,
    \[
    \OPT_k = v(S^*)\le v(S_i\cup S^*)
    \le v(S_i)+\sum_{c\in S^*\setminus S_i}\Delta(c\mid S_i)
    \le v(S_i)+k\max_{c\in C\setminus S_i}\Delta(c\mid S_i),
    \]
    and thus,
    \[
    \max_{c\in C\setminus S_i}\Delta(c\mid S_i)\ge \frac{g_i}{k}.
    \]
    Since step $i$ is executed, we have $\rho_i>\eps/k$, hence $\eta_i=\eps/(k\rho_i)\in(0,1)$. Theorem~\ref{thm:failcond-elim} therefore gives us
    \[
    q(\widehat c_{i+1}\mid S_i)\ge \max_{c\in C\setminus S_i} q(c\mid S_i)-\eta_i.
    \]
    Applying Lemma~\ref{lem:approval-failure-identity} and using $\rho_i\eta_i=\eps/k$, we obtain
    \[
    \Delta(\widehat c_{i+1}\mid S_i)
    = \rho_i q(\widehat c_{i+1}\mid S_i)
    \ge \max_{c\in C\setminus S_i}\Delta(c\mid S_i)-\frac{\eps}{k}
    \ge \frac{g_i}{k}-\frac{\eps}{k}.
    \]
    Therefore
    \[
    g_{i+1}=g_i-\Delta(\widehat c_{i+1}\mid S_i)
    \le \left(1-\frac1k\right)g_i+\frac{\eps}{k}.
    \]
    Unrolling the recurrence gives us
    \[
    g_k\le \left(1-\frac1k\right)^k \OPT_k
    +\sum_{i=0}^{k-1}\left(1-\frac1k\right)^{k-1-i}\frac{\eps}{k}
    \le e^{-1}\OPT_k+\eps,
    \]
    and hence
    \[
    v(\widehat S)=\OPT_k-g_k\ge (1-1/e)\,\OPT_k-\eps.
    \]
    
    For the candidate-query bound, fix any executed step $i$, and let
$Q_{\mathrm{cand}}^{(i)}$ be the number of candidate queries made in the
step-$i$ call to Algorithm~\ref{alg:failcond-elim}. Theorem~\ref{thm:failcond-elim} with $A=C\setminus S_i$,
$\delta_i=\delta/k$, and $\eta_i=\varepsilon/(k\rho_i)$ gives
    \[
    Q_{\mathrm{cand}}^{(i)}
    = \mathcal{O} \left(
    \sum_{c\in C\setminus S_i}
    \frac{\ln(e|C\setminus S_i|/(\delta_i\eta_i))}
    {\max\{\Delta_q^{(i)}(c),\eta_i\}^2}
    \right),
    \]
    where
    \[
    \Delta_q^{(i)}(c):=
    \max_{c'\in C\setminus S_i} q(c'\mid S_i)-q(c\mid S_i).
    \]
    Because $|C\setminus S_i|\le m$, $\delta_i=\delta/k$, and $\rho_i\le 1$,
    \[
    \ln(e|C\setminus S_i|/(\delta_i\eta_i))
    \le \ln(e m k^2/(\delta\eps)).
    \]
    Moreover, Lemma~\ref{lem:approval-failure-identity} implies
    \[
    \Delta_i(c)=\rho_i\Delta_q^{(i)}(c)
    \quad\text{and}\quad
    \eta_i=\frac{\eps}{k\rho_i},
    \]
    so
    \[
    \max\{\Delta_q^{(i)}(c),\eta_i\}
    =
    \frac1{\rho_i}\max\{\Delta_i(c),\eps/k\}.
    \]
    Substituting yields
    \[
    Q_{\mathrm{cand}}^{(i)}
    = \mathcal{O}\left(
    \sum_{c\in C\setminus S_i}
    \frac{\rho_i^2 \ln(e m k^2/(\delta\eps))}
    {\max\{\Delta_i(c),\eps/k\}^2}
    \right).
    \]
    Summing over the executed steps proves the displayed bound for $Q_{\mathrm{cand}}$.

\subsection{Proof of Theorem~\ref{thm:onestep_lb}}

By the query-access convention in the preliminaries, a sampled task is observed only as a handle to its latent binary profile. Hence a failure sample is informative only through oracle answers $u(d,a)$ to queried task-candidate pairs $(d,a)$. It is enough to construct a hard conditional law for $d \mid U(d,S)=0$. We realize it by an unconditional law with $U(d,S)=0$ almost surely, so $\rho(S)=1$.

Fix $c^*\in A$. We prove the slightly stronger statement that the hard law can realize any prescribed gaps
$\gamma_a\in(0,1/4]$, $a\in A\setminus\{c^*\}$. Under the base law, samples are i.i.d.; for each sample $d$, set
\[
u(d,b)=0 \quad \text{for all } b\in C\setminus A,
\]
and let the coordinates $(u(d,a))_{a\in A}$ be independent with
\[
u(d,c^*)\sim \mathrm{Bernoulli}\left(\frac12\right), \quad
u(d,a)\sim \mathrm{Bernoulli}\left(\frac12-\gamma_a\right)
\quad \text{for all } a\in A\setminus\{c^*\}.
\]
Since $A\subseteq C\setminus S$, we have $S\subseteq C\setminus A$, and therefore $U(d,S)=0$ almost surely. Thus this is a valid conditional failure law. Under the base law,
\[
q(c^*\mid S)=\frac12, \quad
q(a\mid S)=\frac12-\gamma_a
\quad \text{for all } a\in A\setminus\{c^*\}.
\]
Consequently,
\[
q^*=\frac12, \quad
\Delta_q(c^*)=0, \quad
\Delta_q(a)=\gamma_a
\quad \text{for all } a\in A\setminus\{c^*\}.
\]

Now fix any $c\in A\setminus\{c^*\}$ with $\gamma_c>\eta$, and define the
alternative conditional law by changing only the distribution of $u(d,c)$ to
\[
u(d,c)\sim \mathrm{Bernoulli}\left(\frac12+\gamma_c\right),
\]
leaving the law of every other coordinate $u(d,a)$ unchanged. 
Under this alternative law, candidate $c$ has rescue rate $1/2+\gamma_c$, whereas every other
candidate in $A$ has rescue rate at most
\[
\frac12 < \frac12+\gamma_c-\eta.
\]
Thus $c$ is the only candidate in $A$ satisfying
\[
q(c\mid S) \ge \max_{a\in A} q(a\mid S)-\eta
\]
under the alternative law.

Let $P_0$ denote the distribution of the algorithm's full transcript under the base law, including
its internal randomness, all queried task-candidate pairs, the observed oracle answers, and the final
output. Let $P_c$ denote the analogous transcript distribution under the alternative conditional law
for candidate $c$. Let
\[
E_c := \{\widehat c = c\}.
\]
Under the base law,
\[
q(c\mid S)=q^*-\gamma_c<q^*-\eta,
\]
so $E_c$ is contained in the failure event of the assumed guarantee. Hence $P_0(E_c)\le \delta$.
Under the alternative law, $c$ is the only candidate satisfying the required guarantee, so
$P_c(E_c)\ge 1-\delta$.
    
    Since $P_0(E_c)\le \delta$ and $P_c(E_c^c)\le \delta$, the Bretagnolle-Huber inequality gives
    \[
    2\delta \ge P_0(E_c)+P_c(E_c^c)
    \ge \frac12 e^{-\mathrm{KL}(P_0\|P_c)}.
    \]
    Hence
    \[
    \mathrm{KL}(P_0\|P_c) \ge \ln\left(\frac{1}{4\delta}\right).
    \]
    Also,
    \[
    \|P_0-P_c\|_{\mathrm{TV}} \ge P_c(E_c)-P_0(E_c) \ge 1-2\delta,
    \]
    so Pinsker's inequality gives
    \[
    \mathrm{KL}(P_0\|P_c) \ge 2(1-2\delta)^2 \ge \frac12.
    \]
    Therefore, for all $\delta \in (0,1/4]$,
    \[
    \mathrm{KL}(P_0\|P_c)
    \ge
    \max\left\{\ln\left(\frac{1}{4\delta}\right), \frac12\right\}
    \ge
    \frac{1}{2\ln 4 + 1}\ln\left(\frac{1}{\delta}\right).
    \]
    
    Let $N_c$ be the number of oracle queries whose expert coordinate is $c$. Repeated queries to the same task-candidate pair, if made, are included in $N_c$, although they add no new information. Apply the adaptive chain rule for KL divergence to the stopped transcript. If $\mathbb{E}_{P_0}[N_c]=\infty$, the desired lower bound for this candidate is immediate, so suppose $\mathbb{E}_{P_0}[N_c]<\infty$.

    Conditional on the past transcript before each query, the algorithm's next action has the same conditional law under $P_0$ and $P_c$; only the conditional law of the next oracle answer can differ. In the construction above, the coordinates $(u(d,a))_{a\in A}$ are independent on each sampled failure instance, and the alternative law changes only the marginal distribution of candidate $c$. Hence queries whose expert coordinate is not $c$ contribute $0$ to the KL divergence. A first query to candidate \(c\) on a sampled task contributes
\[
\mathrm{KL}\left(
\mathrm{Bernoulli}\left(\frac12-\gamma_c\right)
\,\middle\|\,
\mathrm{Bernoulli}\left(\frac12+\gamma_c\right)
\right),
\]
while a repeated query to the same task-candidate pair contributes \(0\), since the answer is already determined by the previous transcript. Therefore,
    \[
    \mathrm{KL}(P_0\|P_c)
    \le
    \mathbb{E}_{P_0}[N_c]\,
    \mathrm{KL}\left(
    \mathrm{Bernoulli}\left(\frac12-\gamma_c\right)
    \,\middle\|\,
    \mathrm{Bernoulli}\left(\frac12+\gamma_c\right)
    \right).
    \]
    
    For $\gamma:=\gamma_c\le 1/4$, set
    \[
    p=\frac12-\gamma,
    \quad
    q=\frac12+\gamma.
    \]
    Using $\ln x\le x-1$,
    \[
    \begin{aligned}
    \mathrm{KL}(\mathrm{Bernoulli}(p)\|
    \mathrm{Bernoulli}(q))
    &=
    p\ln\frac{p}{q}+(1-p)\ln\frac{1-p}{1-q} \\
    &\le
    p\frac{p-q}{q}+(1-p)\frac{q-p}{1-q} \\
    &=
    \frac{(p-q)^2}{q(1-q)} \\
    &=
    \frac{(2\gamma)^2}{(\frac12+\gamma)(\frac12-\gamma)} \\
    &\le \frac{64}{3}\gamma^2,
    \end{aligned}
    \]
    where the last inequality uses $\gamma\le 1/4$. Combining the previous displays gives
    \[
    \mathbb E_{P_0}[N_c]
    \ge
    \frac{3}{64(2\ln 4+1)}
    \cdot
    \frac{\ln(1/\delta)}{\gamma_c^2}
    \]
    for every $c\in A\setminus\{c^*\}$ with $\gamma_c>\eta$.
    
    Finally, for $a\in A$, let $N_a$ be the number of queries made to candidate
    $a$. Since these queries are included in $Q_{\mathrm{cand}}$,
    \[
    Q_{\mathrm{cand}} \ge \sum_{a\in A} N_a.
    \]
    The lower bound above holds under the same base transcript law $P_0$ for every
    $c\in A\setminus\{c^*\}$ with $\gamma_c>\eta$. Since, under the base
    conditional law, $\Delta_q(c)=\gamma_c$ for $c\ne c^*$ and
    $\Delta_q(c^*)=0$, we obtain
    \[
    \mathbb E_{P_0}[Q_{\mathrm{cand}}]
    \ge
    \sum_{c\in A:\,\Delta_q(c)>\eta} \mathbb E_{P_0}[N_c]
    =
    \Omega\left(
    \sum_{c\in A:\,\Delta_q(c)>\eta}
    \frac{\ln(1/\delta)}{\Delta_q(c)^2}
    \right).
    \]
    Thus the base conditional law is the claimed law.

\section{Supplementary Details for Section~\ref{sec:pairwise}}
\subsection{Additional Proofs in Section \ref{sec:pairwise}}

\paragraph{Proof of Lemma \ref{lem:theta-lifts-to-committees}}

Fix a ranking $\pi$. On the event $r_\pi(S)>r_\pi(S')$, the committee $S'$ must be nonempty, since $r_\pi(\varnothing)=m+1$ and $S\neq\varnothing$ implies $r_\pi(S)\le m$.
Let $x\in S'$ attain $r_\pi(S')$. This $x$ cannot belong to $S$, because otherwise
\[
r_\pi(S)\le \mathrm{rank}_\pi(x)=r_\pi(S'),
\]
contradicting $r_\pi(S)>r_\pi(S')$. Hence, on this event, there exists $x\in S'\setminus S$ such that
\[
\mathrm{rank}_\pi(x)=r_\pi(S')<r_\pi(S).
\]
Therefore, pointwise for every $\pi$,
\[
\{r_\pi(S)>r_\pi(S')\}\subseteq
\bigcup_{x\in S'\setminus S}\{\mathrm{rank}_\pi(x)<r_\pi(S)\}.
\]

Since $S$ is $\vartheta$-winning, Definition~\ref{def:theta-winning} gives, for every
$x\in C\setminus S$,
\[
\Pr_{\pi\sim P_D}\left[r_\pi(S)<\mathrm{rank}_\pi(x)\right]
\ge \vartheta.
\]
Because $S$ is nonempty, $x\notin S$, and rankings are strict, the events
$r_\pi(S)<\mathrm{rank}_\pi(x)$ and
$\mathrm{rank}_\pi(x)<r_\pi(S)$ are complementary. Thus
\[
\Pr_{\pi\sim P_D}\left[\mathrm{rank}_\pi(x)<r_\pi(S)\right]
\le 1-\vartheta.
\]
By the union bound,
\[
\Pr_{\pi\sim P_D}\left[r_\pi(S)>r_\pi(S')\right]
\le
\sum_{x\in S'\setminus S}
\Pr_{\pi\sim P_D}\left[\mathrm{rank}_\pi(x)<r_\pi(S)\right]
\le |S'\setminus S|(1-\vartheta)
\le |S'|(1-\vartheta).
\]
Taking complements gives
\[
\Pr_{\pi\sim P_D}\left[r_\pi(S)\le r_\pi(S')\right]
\ge 1-|S'|(1-\vartheta).
\]

Finally, suppose $S\cap S'=\varnothing$. Then $r_\pi(S)\neq r_\pi(S')$ for
every $\pi$: if $S'\neq\varnothing$, this follows from strict rankings, and
if $S'=\varnothing$, then $r_\pi(S')=m+1$ while $S\neq\varnothing$ implies
$r_\pi(S)\le m$. Hence
\[
r_\pi(S)\le r_\pi(S')
\quad\Longleftrightarrow\quad
r_\pi(S)<r_\pi(S').
\]
Using the definition of $\mathrm{WIN}$ from the preliminaries, the tie term is
zero, so
\[
\mathrm{WIN}(S,S')
=
\Pr_{\pi\sim P_D}\left[r_\pi(S)<r_\pi(S')\right]
\ge 1-|S'|(1-\vartheta).
\]

\subsection{Additional Proofs in Section \ref{sec:theta-full-info}}

We first record the full-information optimization landscape for the deterministic $\theta$-objective. Throughout this subsection, the learner is given a finite profile
$P=(\pi_1,\dots,\pi_n)$ of full rankings. For a nonempty proper committee
$S\subsetneq C$ and $x\in C\setminus S$, define
\[
\mathrm{WIN}_P(S,\{x\})
:=
\frac{1}{n}\sum_{t=1}^n
\mathbf{1}\{r_{\pi_t}(S)<\mathrm{rank}_{\pi_t}(x)\}.
\]
We say that $S$ covers $x$ on ranking $\pi_t$ if
$r_{\pi_t}(S)<\mathrm{rank}_{\pi_t}(x)$. Let
\[
\theta_P(S):=\min_{x\in C\setminus S}\mathrm{WIN}_P(S,\{x\})
\]
for every nonempty proper committee $S\subsetneq C$. For a target size
$1\le k<m$, write
\[
\theta^*_{k,P}:=\max_{S\in\mathcal{S}_k}\theta_P(S).
\]

We will use the following finite-profile monotonicity fact: if
$S\subseteq T\subsetneq C$ are nonempty, then $\theta_P(T)\ge \theta_P(S)$.
Indeed, for every $x\in C\setminus T$ and every $t$,
$r_{\pi_t}(T)\le r_{\pi_t}(S)$, so
\[
\begin{aligned}
\theta_P(T)
&=\min_{x\in C\setminus T}\mathrm{WIN}_P(T,\{x\})  \\
&\ge \min_{x\in C\setminus T}\mathrm{WIN}_P(S,\{x\}) \\
&\ge \min_{x\in C\setminus S}\mathrm{WIN}_P(S,\{x\})
=\theta_P(S).
\end{aligned}
\]

\begin{theorem}[Full-information approximability of $\theta$]\label{app:theta-basic}
For every $\gamma \in (0,1)$, there is an algorithm which,
given a finite profile $P=(\pi_1,\dots,\pi_n)$ of full rankings
and a budget $1\le k<m$, outputs $S\in\mathcal S_k$ satisfying $\theta_P(S)\ge (1-\gamma)\theta^*_{k,P}$ in time $m^{O(1/\gamma)}\mathrm{poly}(m,n)$.

Assuming Gap-ETH, there is no EPTAS for this problem: no algorithm can achieve the same guarantee in time $f(1/\gamma)\mathrm{poly}(m,n)$ for an arbitrary computable function $f$.
\end{theorem}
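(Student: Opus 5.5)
The plan has two parts: an enumeration argument for the PTAS, driven by a structural lemma on small $\theta$-winning sets, and an FPT gap reduction for the EPTAS lower bound.

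\emph{Upper bound.} The engine I would use is an existential statement of the kind proved by \citet{charikar2025six} and \citet{song2026few}. I would invoke the following form without proof: for every profile $P$ and every $\gamma\in(0,1)$ there is a committee $T$ with $|T|\le t(\gamma)=O(1/\gamma)$ and $\theta_P(T)\ge 1-\gamma$. This is the step I am least sure of. If the cited results only give $t(\gamma)=O(\gamma^{-1}\log(1/\gamma))$ or similar, the running-time exponent changes only by that factor; I would first check which bound \citet{song2026few} actually proves.

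Given the lemma, I would split into two cases.
\begin{itemize}
\item[(a)] If $k\le t(\gamma)$, enumerate all of $\mathcal S_k$ and compute $\theta_P$ exactly. This takes $\binom{m}{k}\cdot\mathrm{poly}(m,n)\le m^{O(1/\gamma)}\mathrm{poly}(m,n)$ time and returns an optimal committee.
\item[(b)] If $k>t(\gamma)$, enumerate all committees of size $t(\gamma)$ and pick the one maximizing $\theta_P$. By the lemma, this gives $\theta_P\ge 1-\gamma$. Then pad it arbitrarily to size $k$.
\end{itemize}
In case (b), the finite-profile monotonicity fact recorded just before the theorem shows that padding does not decrease $\theta_P$. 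Hence the output satisfies $\theta_P(S)\ge 1-\gamma\ge(1-\gamma)\theta^*_{k,P}$, since $\theta^*_{k,P}\le 1$.

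\emph{Lower bound.} I would give an FPT gap reduction from Max $k$-Coverage, using the Gap-ETH hardness of \citet{manurangsi2020maxkcoverage}. That result says that, for parameter $k$, one cannot in time $f(k)\mathrm{poly}$ distinguish instances where some $k$ sets cover the whole universe from instances where every $k$ sets cover only a noticeably smaller fraction. Let the instance have sets $A_1,\dots,A_M$ over a universe $E$. I would build a profile with:
\begin{itemize}
\item one candidate per set $A_j$;
\item one rival $x_e$ per element $e\in E$;
\item padding candidates that are pushed to the bottom of every ranking, so they are never relevant.
\end{itemize}
For each element $e$, I would add a block of rankings in which the sets containing $e$ sit on top, $x_e$ comes next, and the remaining set-candidates come after $x_e$. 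I would then add cyclic filler rankings among the rivals and set-candidates. The filler has two jobs: it should make every rival $x_{e'}$ with $e'\neq e$ lose to any nonempty committee in most rankings, and it should prevent committees containing rival candidates from being useful.

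The target property of the gadget is as follows.
\begin{itemize}
\item If some $k$ sets cover $E$, then $\theta^*_{k,P}\ge\beta$ for some explicit $\beta$.
\item If every $k$ sets leave some element uncovered, then every committee has $\theta_P\le\beta(1-c/k)$, because the uncovered element's rival $x_e$ wins its own block.
\end{itemize}
The gap must shrink like $1/k$. This is consistent with the PTAS: a constant gap would already be excluded by the upper bound. Choosing $\gamma=c/(2k)$, an algorithm running in time $f(1/\gamma)\mathrm{poly}(m,n)=f'(k)\mathrm{poly}$ would decide the Gap-ETH-hard coverage problem, which is a contradiction.

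The main obstacle is calibrating the filler rankings. They must ensure that (i) non-set candidates never help a committee, (ii) a single uncovered element costs about a $1/k$ fraction of $\theta$ relative to $\beta$ rather than a vanishing amount, and (iii) the profile size stays polynomial. My first attempt would be to give each element block weight about $1/|E|$ and to choose the filler so that $\beta$ is a fixed constant. I would then reduce from the version of Manurangsi's result with covering-versus-missing-$\Omega(1/k)$ gap, or with a perfect-completeness unique-coverage gap, whichever makes the gadget arithmetic go through.
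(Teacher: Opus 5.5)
Your upper bound is the paper's argument. Theorem~2 of \citet{charikar2025six} gives, for every $K\le m$, a committee $U$ with $|U|\le K$ and $\theta_P(U)\ge 1-c_0/K$, where $c_0\approx 9.82$. So $t(\gamma)=\lceil c_0/\gamma\rceil=O(1/\gamma)$, and your two-case enumeration with padding goes through as written.

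\textbf{The gap is in the lower-bound gadget.} Recall that $\theta_P(S)=\min_x\mathrm{WIN}_P(S,\{x\})$ is a minimum over rivals. An uncovered element therefore hurts $\theta_P$ only through the win rate of the rival that is challenged on that element.

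With one rival $x_e$ per element, challenged in its own block, leaving $e$ uncovered lowers $\mathrm{WIN}_P(S,\{x_e\})$ by exactly that block's weight. Under your weighting this is about $1/|E|$. Several uncovered elements do not accumulate, because each lowers a different rival.

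You also cannot overlay the tests in one ranking. Prefixes within a ranking are nested, so a rival placed below $x_e$ has $B_e\cup\{x_e\}$ in its prefix. Each ranking thus cleanly tests essentially one element, and the block weights sum to at most $1$.

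The resulting gap is $\Theta(1/|E|)$. On hard instances $|E|$ is not bounded by any function of $k$: otherwise there are at most $2^{|E|}$ distinct sets and brute force is FPT. Detecting the gap would require $\gamma=O(1/|E|)$. Then $f(1/\gamma)$ is no longer of the form $f'(k)$, and no contradiction with Gap-ETH follows.

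Relatedly, the soundness hypothesis ``some element is uncovered'' is too weak. You need Manurangsi's constant-fraction soundness (every $\kappa$ sets cover at most $(1-\eta)h$ elements), together with a gadget that charges all uncovered elements to a single rival.

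\textbf{The missing idea is a few shared rivals, each challenged against every element.} Here is how the paper does it.
\begin{itemize}
\item It takes dummies $a_1,\dots,a_q$ with $q=\kappa+3$. For every element $i$ and every $j$ it adds a ranking $B_i\succ a_j\succ D\setminus\{a_j\}\succ B\setminus B_i\succ b$.
\item For an outside dummy $a_j$, these rankings count exactly the elements hit by $S\cap B$. Selected dummies sit below $a_j$ and do not help. Since $q>k$, an outside dummy always exists.
\item An extra candidate $b$ gets $M=hq$ copies of $a_j\succ b\succ B\succ D\setminus\{a_j\}$ for each $j$. This forces $b$ into any good committee: if $b\notin S$ then $\mathrm{WIN}_P(S,\{b\})\le (k+1)/(q+1)$. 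With $k=\kappa+1$, at most $\kappa$ slots remain for sets. It also fixes the baseline at $q/(q+1)$.
\end{itemize}
Yes instances then give $\theta^*_{k,P}\ge q/(q+1)$. No instances give $\theta_P(S)\le q/(q+1)-\eta/(q(q+1))$ for every $S\in\mathcal S_k$. Taking $\gamma=\eta/(2q^2)$, a function of $\kappa$ alone, yields the contradiction. The relative gap here is $\Theta(1/\kappa^2)$ rather than your targeted $1/k$, which is harmless.
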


\begin{proof}[Proof of Theorem~\ref{thm:theta-full-info}/\ref{app:theta-basic}]
Let $c_0:=9.8217$ and
\[
    K_\gamma := \left\lceil \frac{c_0}{\gamma} \right\rceil .
\]
We use the structural theorem of Charikar et al.~\cite[Theorem~2]{charikar2025six}:
for every
integer $K \le m$, every finite profile admits a committee $U$ with $|U|\le K$
such that no outside candidate is ranked above every member of $U$ by more than
a $c_0/K$ fraction of the voters. For any $x\in C\setminus U$, strictness of the
rankings implies that the event
\[
        r_{\pi_t}(U) < \mathrm{rank}_{\pi_t}(x)
\]
is the complement of the event that $x$ is ranked above every member of $U$.
Thus the structural theorem gives
\[
        \theta_P(U) \ge 1-\frac{c_0}{K}.
\]

First suppose $k\ge K_\gamma$. Since $k<m$, we have $K_\gamma<m$. Enumerate all
nonempty committees $U\subseteq C$ with $|U|\le K_\gamma$, compute $\theta_P(U)$
for each, and choose one maximizing $\theta_P$. Applying the structural theorem
with $K=K_\gamma$, the enumerated family contains some $U$ with
\[
        \theta_P(U) \ge 1-\frac{c_0}{K_\gamma} \ge 1-\gamma .
\]
Pad $U$ arbitrarily to a committee $S\in\mathcal S_k$. By monotonicity,
\[
        \theta_P(S) \ge 1-\gamma .
\]
Since $\theta^*_{k,P}\le 1$, this implies
\[
        \theta_P(S) \ge (1-\gamma)\theta^*_{k,P}.
\]

Now suppose $k<K_\gamma$. Enumerate all nonempty committees
$U\subseteq C$ with $|U|\le k$, compute $\theta_P(U)$ for each,
and choose one maximizing $\theta_P$; call it $U^*$. Since every
committee of size at most $k$ can be padded to size exactly $k$ without
decreasing its winning value,
\[
    \max_{1\le |U|\le k}\theta_P(U)
    =
    \max_{S\in\mathcal S_k}\theta_P(S)
    =
    \theta^*_{k,P}.
\]
Pad $U^*$ arbitrarily to a committee $S\in\mathcal S_k$. By
monotonicity,
\[
    \theta_P(S)\ge \theta_P(U^*)=\theta^*_{k,P},
\]
so the algorithm returns an optimal size-$k$ committee in this case.

For any fixed committee $U$, the value $\theta_P(U)$ can be computed
in polynomial time by scanning the $n$ rankings and checking, for each
outside expert $x$, how many rankings place the best member of $U$
above $x$. The enumeration size is $m^{O(K_\gamma)}=m^{O(1/\gamma)}$.

We reduce from the Gap-ETH-hard promise version of Max $\kappa$-Coverage due
to \citet{manurangsi2020maxkcoverage}. Let $N$ denote the encoding size of
the Max $\kappa$-Coverage instance. There is a constant $\eta\in(0,1)$
such that, assuming Gap-ETH, for every computable function $f$ there is no
algorithm running in time $f(\kappa)\mathrm{poly}(N)$ that distinguishes
the following two cases: given a universe
$\mathcal U=\{e_1,\dots,e_h\}$, a family of sets
$\mathcal F=\{F_c:c\in B\}$, and an integer $\kappa$,
\[
\textsc{Yes:}\quad
\exists Y\subseteq B,\ |Y|\le \kappa,\quad
\bigcup_{c\in Y}F_c=\mathcal U,
\]
from
\[
\textsc{No:}\quad
\forall Y\subseteq B,\ |Y|\le \kappa,\quad
\left|\bigcup_{c\in Y}F_c\right|\le (1-\eta)h .
\]
For each element $e_i$, write
\[
B_i:=\{c\in B:e_i\in F_c\}.
\]

Given such an instance, construct a finite profile $P$ as follows.
Let
\[
q:=\kappa+3,
\quad
D:=\{a_1,\dots,a_q\},
\]
and introduce one additional candidate $b$.  The candidate set is the
disjoint union
\[
C:=B\dot\cup D\dot\cup \{b\}.
\]
We set the committee size in the $\theta$-instance to be
\[
k:=\kappa+1 .
\]
The profile $P$ has two types of rankings.  We use block notation:
$X\succ Y$ means that every member of $X$ is ranked above every
member of $Y$, with an arbitrary but fixed order inside each block.

First, for every $i\in[h]$ and every $j\in[q]$, include one type-I
ranking
\[
B_i \succ a_j \succ D\setminus\{a_j\}
      \succ B\setminus B_i \succ b .
\]
Second, for every $j\in[q]$, include
\[
M:=hq
\]
copies of the type-II ranking
\[
a_j \succ b \succ B \succ D\setminus\{a_j\}.
\]
Thus the total number of rankings is
\[
n=hq+qM=hq(q+1).
\]
The construction is polynomial in the size of the Max $\kappa$-Coverage
instance.

We first prove completeness.  Suppose there is
$Y\subseteq B$, $|Y|\le \kappa$, covering all elements of
$\mathcal U$.  Consider $S_0:=Y\cup\{b\}$, and pad it arbitrarily
to a committee $S\in\mathcal S_k$.  Padding cannot decrease
$\theta_P$, so it is enough to lower bound $\theta_P(S_0)$.

Let $c\in B\setminus Y$.  In every type-II ranking, the candidate
$b\in S_0$ is ranked above $c$.  Hence $c$ is beaten by $S_0$
in all $qM$ type-II rankings, giving
\[
\mathrm{WIN}_P(S_0,\{c\})\ge \frac{qM}{n}
=\frac{q}{q+1}.
\]
Now let $a_j\in D\setminus S_0$.  In every type-I ranking associated
with element $e_i$, some member of $Y\cap B_i$ is ranked above
$a_j$, because $Y$ covers $e_i$.  Thus all $hq$ type-I rankings
cover $a_j$.  In addition, among the type-II rankings, $b$ is ranked
above $a_j$ in all blocks except the $M$ copies whose first candidate
is $a_j$.  Therefore
\[
\mathrm{WIN}_P(S_0,\{a_j\})
\ge
\frac{hq+(q-1)M}{n}
=
\frac{hq+(q-1)hq}{hq(q+1)}
=
\frac{q}{q+1}.
\]
All outside candidates of $S_0$ are of these two forms, so
$\theta_P(S_0)\ge q/(q+1)$. Since $S_0\subseteq S\subsetneq C$, the
finite-profile monotonicity fact gives
\[
        \theta^*_{k,P} \ge \theta_P(S) \ge \theta_P(S_0) \ge \frac{q}{q+1}.
\]

We now prove soundness.  Suppose the Max $\kappa$-Coverage instance is
a No instance, and let $S\in\mathcal S_k$ be arbitrary.

First consider the case $b\notin S$.  Candidate $b$ is outside the
committee.  It is covered in all $hq$ type-I rankings, since $b$ is
ranked last there.  In type-II rankings, $b$ is covered only in those
blocks whose top dummy $a_j$ belongs to $S$.  Since $|S|=k$, this
contributes at most $kM$ type-II rankings.  Therefore
\[
\mathrm{WIN}_P(S,\{b\})
\le
\frac{hq+kM}{n}
=
\frac{1+k}{q+1}.
\]
Because $k=\kappa+1=q-2$, we get
\[
\mathrm{WIN}_P(S,\{b\})
\le
\frac{q-1}{q+1}
\le
\frac{q}{q+1}-\frac{\eta}{q(q+1)},
\]
where the last inequality uses $\eta\le 1$.

It remains to consider the case $b\in S$.  Let
\[
Y:=S\cap B.
\]
Since $b\in S$ and $|S|=\kappa+1$, we have $|Y|\le \kappa$.  By
the No-instance promise, $Y$ covers at most $(1-\eta)h$ elements, so
there are at least $\eta h$ indices $i$ with $Y\cap B_i=\varnothing$.
Also, since $|S\cap D|\le \kappa$ and $|D|=q=\kappa+3$, there is a
dummy candidate $a_j\in D\setminus S$.

Fix such an $a_j$.  In the $M=hq$ type-II rankings whose top
candidate is $a_j$, no member of $S$ is ranked above $a_j$, so
$a_j$ is uncovered in all these rankings.  Moreover, for every missed
element $e_i$, the type-I ranking
\[
B_i \succ a_j \succ D\setminus\{a_j\}
      \succ B\setminus B_i \succ b
\]
also leaves $a_j$ uncovered: no member of $Y$ lies in $B_i$, all
selected dummy candidates are in $D\setminus\{a_j\}$, and $b$ is
ranked last.  Hence $a_j$ is uncovered in at least
\[
M+\eta h=hq+\eta h
\]
rankings.  Therefore
\[
\mathrm{WIN}_P(S,\{a_j\})
\le
1-\frac{hq+\eta h}{hq(q+1)}
=
\frac{q}{q+1}-\frac{\eta}{q(q+1)}.
\]
Combining the two cases, every $S\in\mathcal S_k$ in a No instance
satisfies
\[
\theta_P(S)
\le
\frac{q}{q+1}-\frac{\eta}{q(q+1)}.
\]

Now suppose, toward a contradiction, that an EPTAS exists.  Run it on
the constructed profile with accuracy parameter
\[
\gamma:=\frac{\eta}{2q^2}.
\]
Its running time is
\[
f(1/\gamma)\mathrm{poly}(m,n)
=
f(2q^2/\eta)\mathrm{poly}(m,n)
=
f'(\kappa)\mathrm{poly}(N),
\]
because $q=\kappa+3$, $\eta$ is constant, and the constructed values of
$m$ and $n$ are polynomially bounded in $N$.

In a Yes instance, the EPTAS returns a committee $\widehat S\in\mathcal
S_k$ with
\[
\theta_P(\widehat S)
\ge
(1-\gamma)\theta_{k,P}^*
\ge
(1-\gamma)\frac{q}{q+1}
=
\frac{q}{q+1}
-
\frac{\eta}{2q(q+1)}
>
\frac{q}{q+1}-\frac{\eta}{q(q+1)}.
\]
In a No instance, every committee $S\in\mathcal S_k$ has
\[
\theta_P(S)\le
\frac{q}{q+1}-\frac{\eta}{q(q+1)}.
\]
Since $\theta_P(\widehat S)$ can be computed exactly in polynomial
time from the full rankings, this distinguishes the Yes and No cases of
Gap Max $\kappa$-Coverage in $f'(\kappa)\mathrm{poly}(N)$ time,
contradicting the Gap-ETH hardness stated above.  Hence no EPTAS for
the $\theta$-optimization problem exists unless Gap-ETH fails.
\end{proof}
\subsection{Additional Proofs from Sections~\ref{subsec: theta exhaustive elicitation} and~\ref{sec:theta-vs-phi}}

\paragraph{Proof of Theorem~\ref{thm:ordinal-erm}}
For each $S \in \mathcal S_k$ and $x \in C \setminus S$, define
\[
Y_t(S,x) := \mathbf 1\{r_{\pi_t}(S) < \mathrm{rank}_{\pi_t}(x)\}.
\]
Then $Y_1(S,x), \dots, Y_T(S,x)$ are i.i.d. Bernoulli random variables with mean
\[
\mathbb E[Y_t(S,x)] = \mathrm{WIN}(S,\{x\}),
\]
where the equality uses $x \notin S$ and the strict-ranking convention from the preliminaries, so the tie term in $\mathrm{WIN}$ is zero.
Let
\[
\widehat{w}_T(S,x)
:=
\frac{1}{T}\sum_{t=1}^T Y_t(S,x).
\]
For fixed $S$ and $x$, Hoeffding's inequality gives
\[
\Pr\left[
\left|\widehat{w}_T(S,x)-\mathrm{WIN}(S,\{x\})\right|
>
\frac{\varepsilon}{2}
\right]
\le
2\exp\left(-\frac{T\varepsilon^2}{2}\right).
\]
There are exactly $(m-k)\binom{m}{k}$ pairs $(S,x)$ with
$S \in \mathcal{S}_k$ and $x \in C\setminus S$. Hence, by a union bound, if
\[
T \ge
\frac{2}{\varepsilon^2}
\log \frac{2(m-k)\binom{m}{k}}{\delta},
\]
then with probability at least $1-\delta$ the event
\[
\mathcal{E}
:=
\left\{
\left|\widehat{w}_T(S,x)-\mathrm{WIN}(S,\{x\})\right|
\le
\frac{\varepsilon}{2}
\text{ for all } S \in \mathcal{S}_k,\ x \in C\setminus S
\right\}
\]
holds.

Condition on $\mathcal{E}$. For every $S \in \mathcal{S}_k$,
\[
\left|\widehat{\theta}_T(S)-\theta(S)\right|
=
\left|
\min_{x \in C\setminus S}\widehat{w}_T(S,x)
-
\min_{x \in C\setminus S}\mathrm{WIN}(S,\{x\})
\right|
\le
\frac{\varepsilon}{2}.
\]
Let
\[
S^* \in \arg\max_{S \in \mathcal{S}_k}\theta(S).
\]
Since $\widehat{S}_{\mathrm{ERM}}$ maximizes $\widehat{\theta}_T$ over $\mathcal{S}_k$,
\[
\theta(\widehat{S}_{\mathrm{ERM}})
\ge
\widehat{\theta}_T(\widehat{S}_{\mathrm{ERM}})-\frac{\varepsilon}{2}
\ge
\widehat{\theta}_T(S^*)-\frac{\varepsilon}{2}
\ge
\theta(S^*)-\varepsilon
=
\theta_k^*-\varepsilon.
\]

It remains to account for the query cost. On each sampled task, the pairwise comparison oracle gives
a comparison oracle for the unknown strict ranking $\pi_t$. A comparison-sorting algorithm therefore
recovers $\pi_t$ using $O(m\log m)$ pairwise comparisons. Repeating this for all $T$ sampled
rankings uses $O(Tm\log m)$ comparisons, which gives the stated bound after substituting the lower
bound on $T$.

\paragraph{Proof of Theorem~\ref{thm:ordinal-lower-main}, part 1}
Let $S=\{s\}$, and write $R:=C\setminus\{s\}$. On each sampled instance, draw bits $(B_x)_{x\in R}$. The ranking places all rivals with $B_x=0$ above $s$, then $s$, then all rivals with $B_x=1$, with uniformly random tie-breaking inside the two rival blocks. Thus $s$ beats $x$ exactly when $B_x=1$.

Under the null distribution $P_0$, all bits are independent $\mathrm{Bernoulli}(1/2)$, so
\[
    \theta(\{s\})=\frac12.
\]
For each $i\in R$, define $P_i$ by changing only
\[
    B_i\sim \mathrm{Bernoulli}\left(\frac12-4\varepsilon\right),
\]
leaving all other bits unbiased. Under $P_i$,
\[
    \theta(\{s\})=\frac12-4\varepsilon.
\]
Let $\widehat\theta$ be the estimate output by the algorithm. Hence an $\varepsilon$-accurate audit must distinguish $P_0$ from every $P_i$ with constant probability. Indeed, the event
\[
\mathcal G_i := \left\{\widehat\theta \ge \frac12 - 2\varepsilon\right\}
\]
has probability at least $2/3$ under $P_0$ and at most $1/3$ under $P_i$.
Let $\mathbb P_0$ and $\mathbb P_i$ denote the full transcript laws of the algorithm under $P_0$ and $P_i$, including its internal randomness and final estimate. Then
\[
\|\mathbb P_0-\mathbb P_i\|_{\mathrm{TV}}
\ge \mathbb P_0(\mathcal G_i)-\mathbb P_i(\mathcal G_i)
\ge \frac13.
\]
Pinsker's inequality gives
\[
\mathrm{KL}(\mathbb P_0\|\mathbb P_i) \ge c
\]
for a universal constant $c>0$. 
Let $M_i$ be the number of sampled instances on which the algorithm makes at least one query involving expert $i$. To upper bound the information about $P_0$ versus $P_i$, reveal $B_i$ on a sampled instance at the moment of the first query involving $i$ on that instance. This can only increase the transcript KL. After $B_i$ is revealed, the remaining bits and the random tie-breaking have the same conditional law under $P_0$ and $P_i$. Thus queries not involving $i$, and all later queries on an instance after this reveal, contribute zero conditional KL. By the adaptive chain rule for KL,
\[
\mathrm{KL}(\mathbb{P}_0\|\mathbb{P}_i)
\le
\mathbb{E}_0[M_i]\,
\mathrm{KL}\left(
\mathrm{Bernoulli}\left(\frac12\right)
\,\middle\|\,
\mathrm{Bernoulli}\left(\frac12 - 4\varepsilon\right)
\right)
\le C\varepsilon^2 \mathbb{E}_0[M_i],
\]
where the last inequality uses $\varepsilon \le 1/16$.
Therefore
\[
    \mathbb E_0[M_i]=\Omega(1/\varepsilon^2)
\]
for every $i\in R$.

Let $N_i$ be the total number of pairwise queries involving rival $i$, and let $Q$ be the total number of pairwise queries. Then $N_i\ge M_i$, and each pairwise query involves at most two rivals, so
\[
    \sum_{i\in R} N_i \le 2Q.
\]
Thus
\[
    \mathbb E_0[Q]
    \ge
    \frac12\sum_{i\in R}\mathbb E_0[N_i]
    =
    \Omega\left(\frac{m-1}{\varepsilon^2}\right).
\]
Since $P_0$ is an admissible ranking distribution, the claimed worst-case lower bound follows.

\paragraph{Proof of Theorem~\ref{thm:ordinal-lower-main}, part 2}
For each $i\in C$, define a ranking distribution $P_i$ as follows. On each sampled instance, draw independent bits $B_1,\dots,B_m$, with
\[
    B_i\sim \mathrm{Bernoulli}\left(\frac12+4\varepsilon\right),
    \quad
    B_j\sim \mathrm{Bernoulli}\left(\frac12\right)
    \quad(j\neq i).
\]
The ranking first lists all candidates with $B_j=1$ in a uniformly random order, followed by all candidates with $B_j=0$ in a uniformly random order. Let $P_0$ denote the same construction with all bits unbiased.

Under $P_i$, for every $j\neq i$,
\[
\begin{aligned}
    \mathbb P[i\succ_\pi j]
    &=
    \mathbb P[B_i=1,B_j=0]
    +
    \frac12\mathbb P[B_i=B_j]  \\
    &=
    \frac12+2\varepsilon.
\end{aligned}
\]
Similarly,
\[
    \mathbb P[j\succ_\pi i]=\frac12-2\varepsilon,
\]
and any two nonspecial candidates beat each other with probability $1/2$. 
Hence
\[
\theta(\{i\}) = \frac12 + 2\varepsilon, \quad \theta(\{j\}) \le \frac12 - 2\varepsilon \quad (j \ne i).
\]
Thus $\theta_1^*=\theta(\{i\})$, and every $j\ne i$ satisfies
\[
\theta(\{j\}) < \theta_1^*-\varepsilon.
\]
Therefore any algorithm satisfying the theorem's guarantee must output $i$ with probability at least
$2/3$ under $P_i$.

Let $\mathbb P_i$ and $\mathbb P_0$ be the laws of the algorithm's full transcript under $P_i$ and $P_0$, including its internal randomness and final output. Let
\[
    A_i:=\{\widehat c=i\}.
\]
Since $\sum_i \mathbb P_0(A_i)=1$, at least $m-1$ indices satisfy $\mathbb P_0(A_i)\le 1/2$. Fix such an $i$. Then $\mathbb P_i(A_i)\ge 2/3$, so
\[
    \|\mathbb P_i-\mathbb P_0\|_{\mathrm{TV}}\ge \frac16.
\]
Pinsker's inequality gives
\[
    \mathrm{KL}(\mathbb P_0\Vert \mathbb P_i)\ge \frac1{18}.
\]

Let $M_i$ be the number of sampled instances on which the algorithm makes at least one query involving expert $i$. The enhanced-oracle argument from part 1, with the sign of the bias reversed, gives
\[
\mathrm{KL}(\mathbb{P}_0\|\mathbb{P}_i)
\le
\mathbb{E}_0[M_i]\,
\mathrm{KL}\left(
\mathrm{Bernoulli}\left(\frac12\right)
\,\middle\|\,
\mathrm{Bernoulli}\left(\frac12 + 4\varepsilon\right)
\right)
\le C\varepsilon^2 \mathbb{E}_0[M_i],
\]
where the last inequality uses $\varepsilon \le 1/16$.
Thus $\mathbb E_0[M_i] = \Omega(1/\varepsilon^2)$ for every index $i$ with $\mathbb P_0(A_i)\le 1/2$, and there are at least
$m-1$ such indices. Let $N_i$ be the total number of pairwise queries
involving expert $i$. Since $N_i \ge M_i$ pathwise and each query involves
at most two experts, $\sum_i N_i \le 2Q$, where $Q$ is the total number of pairwise queries. Therefore, under $P_0$,
\[
\mathbb E_0[Q]
\ge \frac12 \sum_i \mathbb E_0[N_i]
\ge \frac12 \sum_{i:\,\mathbb P_0(A_i)\le 1/2} \mathbb E_0[N_i]
= \Omega\left(\frac{m}{\varepsilon^2}\right).
\]
Since $P_0$ is an admissible ranking distribution, the worst-case expected query complexity is $\Omega\left(\frac{m}{\varepsilon^2}\right)$.

\paragraph{Proof of Proposition~\ref{prop:theta-not-submodular}}
Monotonicity follows because adding experts can only improve the best member of the committee and can only remove outside-rival constraints.

For non-submodularity, let $C=\{a,b,c,d\}$. Put probability $2/3$ on
\[
    b\succ a\succ d\succ c
\]
and probability $1/3$ on
\[
    c\succ a\succ d\succ b.
\]
Then
\[
    \theta(\{a\})=\theta(\{a,c\})=\frac13,
    \quad
    \theta(\{a,b\})=\frac23,
    \quad
    \theta(\{a,b,c\})=1.
\]
Thus adding $c$ to $\{a\}$ gives zero marginal gain, while adding $c$ to the superset $\{a,b\}$ gives marginal gain $1/3$. This violates diminishing returns.

\paragraph{Proof of Lemma~\ref{lem:theta-phi}}
Fix $(\pi,x)$. The map
\[
    S\mapsto \mathbf 1\{S\cap P_\pi(x)\neq \varnothing\}
\]
is a coverage function, hence it is normalized, monotone, and submodular. Since $\Phi_\lambda$ is a nonnegative weighted average of these functions, it has the same three properties.

For the representation of $\theta$, observe that for every proper committee $S$,
\[
    \min_{\lambda\in\Delta(C)}\Phi_\lambda(S)
    =
    \min_{\lambda\in\Delta(C)}
    \sum_{x\in C}\lambda_x g_x(S)
    =
    \min_{x\in C}g_x(S),
\]
because the minimum of a linear function over the simplex is attained at an extreme point. If $x\in S$, then $g_x(S)=1$. If $x\notin S$, then $g_x(S)=\mathrm{WIN}(S,\{x\})$. Since $S\subsetneq C$, there is at least one outside candidate, and therefore
\[
    \min_{x\in C}g_x(S)
    =
    \min_{x\in C\setminus S}\mathrm{WIN}(S,\{x\})
    =
    \theta(S).
\]

\subsection{Failure-conditioned weighted ordinal greedy}
\label{app:weighted-ordinal-oracle}

\begin{algorithm}[H]
\caption{\textsc{Ordinal-Fail-Cond-Elim}$(S,A,\lambda,\eta,\delta)$}
\label{alg:ordinal-fail-cond-elim}
\begin{algorithmic}[1]
\Require Committee $S$ with $\rho_\lambda(S)>0$; nonempty candidate set
$A\subseteq C\setminus S$; rival weights $\lambda\in\Delta(C)$; accuracy
$\eta\in(0,1]$; confidence $\delta\in(0,1)$.
\State $A_1\gets A$.
\For{$r=1,2,\dots$}
    \State Draw $(\pi_r,x_r)\sim P_D\times\lambda$ conditioned on $S\cap P_{\pi_r}(x_r)=\varnothing$.
    \Comment{Implemented by rejection sampling.}
    \ForAll{$c\in A_r$}
        \If{$c=x_r$}
            \State $X_{r,c}\gets 1$.
        \Else
            \State $X_{r,c}\gets \mathrm{Query}(c,x_r;\pi_r)$.
        \EndIf
        \State $\widehat q_r(c)\gets r^{-1}\sum_{s=1}^r X_{s,c}$.
    \EndFor
    \State $\mathrm{rad}_r\gets \sqrt{\frac{1}{2r}\log\frac{4|A|r^2}{\delta}}$.
    \State $c^*_r\in\arg\max_{c\in A_r}\widehat q_r(c)$.
    \State
    \[
        A_{r+1}
        \gets
        \left\{
            c\in A_r:
            \widehat q_r(c)+\mathrm{rad}_r
            \ge
            \widehat q_r(c^*_r)-\mathrm{rad}_r-\eta
        \right\}.
    \]
    \If{$\mathrm{rad}_r\le \eta/4$ or $|A_{r+1}|=1$}
        \State \Return any $\widehat c\in\arg\max_{c\in A_{r+1}}\widehat q_r(c)$.
    \EndIf
\EndFor
\end{algorithmic}
\end{algorithm}

\begin{algorithm}[H]
\caption{\textsc{Weighted-Ordinal-Fail-Greedy}$(k,\lambda,\varepsilon,\delta)$}
\label{alg:weighted-ordinal-fail-greedy}
\begin{algorithmic}[1]
\State $S_0\gets \varnothing$.
\For{$i=0,1,\dots,k-1$}
    \State Draw
            \[
                L=\left\lceil \frac{128k^2}{\varepsilon^2}\log\frac{4k}{\delta}\right\rceil
            \]
            tentative pairs $(\pi,x)\sim P_D\times\lambda$, and test whether
        $S_i\cap P_\pi(x)=\varnothing$ by scanning $s\in S_i$: if $s=x$, declare
        the pair covered without an oracle call; otherwise query
        $\mathrm{Query}(s,x;\pi)$, declaring the pair covered as soon as the
        answer is $1$. If no scan declares the pair covered, record a failure.
    \State Let $\widehat\rho_i$ be the empirical failure frequency and set
    \[
        \overline\rho_i
        :=
        \min\left\{
            1,\,
            \widehat\rho_i+\frac{\varepsilon}{16k}
        \right\}.
    \]
    \If{$\overline\rho_i\le \varepsilon/(4k)$}
        \State Extend $S_i$ arbitrarily to some $S\in\mathcal S_k$, and return $S$.
    \EndIf
    \State $\eta_i\gets \varepsilon/(4k\overline\rho_i)$, $\delta_i\gets \delta/(2k)$.
    \State
    \[
        \widehat c_{i+1}
        \gets
        \textsc{Ordinal-Fail-Cond-Elim}
        (S_i,C\setminus S_i,\lambda,\eta_i,\delta_i).
    \]
    \State $S_{i+1}\gets S_i\cup\{\widehat c_{i+1}\}$.
\EndFor
\State \Return $S_k$.
\end{algorithmic}
\end{algorithm}

\paragraph{Proof of Lemma~\ref{lem:ordinal-marginal-identity}}
The pair $(\pi,x)$, with $\pi\sim P_D$ and
$x\sim\lambda$, is newly covered by adding $c$ exactly when
$S\cap P_\pi(x)=\varnothing$ and $c\in P_\pi(x)$. Therefore
\[
\Phi_\lambda(S\cup\{c\})-\Phi_\lambda(S)
=
\Pr_{\pi\sim P_D,\,x\sim\lambda}
\left[S\cap P_\pi(x)=\varnothing,\ c\in P_\pi(x)\right].
\]
If $\rho_\lambda(S)>0$, this probability equals
$\rho_\lambda(S)q_\lambda(c\mid S)$ by the definition of $q_\lambda$. If
$\rho_\lambda(S)=0$, the probability is $0$, which is also
$\rho_\lambda(S)q_\lambda(c\mid S)$ under the convention in Section~5.4.

\paragraph{Proof of Theorem~\ref{thm:weighted-ordinal-oracle} and the detailed form}
Fix $k\in\{1,\dots,m\}$,
$\lambda\in\Delta(C)$, and $\varepsilon,\delta\in(0,1)$, and let
\[
\Phi^*_{\lambda,k}:=\max_{S\in\mathcal S_k}\Phi_\lambda(S).
\]
Algorithm~\ref{alg:weighted-ordinal-fail-greedy} outputs a committee $\widehat S\in\mathcal S_k$ satisfying, with
probability at least $1-\delta$,
\[
\Phi_\lambda(\widehat S)\ge (1-1/e)\Phi^*_{\lambda,k}-\varepsilon .
\]
Let $S_i$ denote the committee at the start of iteration $i$ of Algorithm~\ref{alg:weighted-ordinal-fail-greedy}, whenever that iteration is reached, and write
\[
\rho_i^\lambda := \rho_\lambda(S_i)=1-\Phi_\lambda(S_i).
\]
Let $\tau\le k$ be the number of iterations that pass the stopping test and call Algorithm~\ref{alg:ordinal-fail-cond-elim}. For $i<\tau$ and $c\in C\setminus S_i$, define the true marginal gap
\[
\Delta_i^\lambda(c)
:=
\max_{a\in C\setminus S_i}
\bigl(\Phi_\lambda(S_i\cup\{a\})-\Phi_\lambda(S_i)\bigr)
-
\bigl(\Phi_\lambda(S_i\cup\{c\})-\Phi_\lambda(S_i)\bigr).
\]
If $Q_{\rm cand}$ denotes the number of candidate-versus-rival comparisons made on accepted failed pairs, then on the same event,
\[
Q_{\rm cand}
=
O\left(
\sum_{i=0}^{\tau-1}
\sum_{c\in C\setminus S_i}
\frac{(\rho_i^\lambda+\varepsilon/k)^2
\log(emk^2/(\delta\varepsilon))}
{\max\{\Delta_i^\lambda(c),\varepsilon/k\}^2}
\right).
\]
The rejection-sampling and committee testing overheads are described after the proof.

For each reached iteration $i$, conditional on the history up to the start of that iteration, the failure indicators of the $L$ tentative pairs used to form $\widehat\rho_i$ are i.i.d. Bernoulli with mean $\rho_i^\lambda$. Since
\[
\bar\rho_i=\min\left\{1,\widehat\rho_i+\frac{\varepsilon}{16k}\right\},
\]
the choice
\[
L=\left\lceil \frac{128k^2}{\varepsilon^2}\log\frac{4k}{\delta}\right\rceil
\]
and Hoeffding's inequality give
\[
\Pr\left(
\left.
\left|\widehat\rho_i-\rho_i^\lambda\right|>\frac{\varepsilon}{16k}
\,\right|\,\text{history}
\right)
\le \frac{\delta}{2k}.
\]
Thus, by an adaptive union bound over the at most $k$ reached iterations, with probability at least $1-\delta/2$,
\[
\forall\text{ reached }i,\quad
\rho_i^\lambda\le \bar\rho_i\le \rho_i^\lambda+\frac{\varepsilon}{8k}.
\]

Now consider a call to Algorithm~\ref{alg:ordinal-fail-cond-elim} at iteration $i$, and condition on the history before the call. The accepted pairs $(\pi_r,x_r)$ are i.i.d. from $P_D\times\lambda$ conditioned on
\[
S_i\cap P_{\pi_r}(x_r)=\varnothing.
\]
For each accepted pair set
\[
X_{r,c}:=\mathbf 1\{c\in P_{\pi_r}(x_r)\}.
\]
This is exactly the value obtained on lines 5--9 of Algorithm~\ref{alg:ordinal-fail-cond-elim}. The
conditioning implies $x_r\notin S_i$. If $c=x_r$, then
$c\in P_{\pi_r}(x_r)$, so the value is $1$ without an oracle call. If
$c\ne x_r$, then, since the ranking is strict,
\[
\mathbf 1\{c\in P_{\pi_r}(x_r)\}
=
\mathbf 1\{c\succ_{\pi_r} x_r\}
=
\mathrm{Query}(c,x_r;\pi_r).
\]
For each fixed $c$, the variables $(X_{r,c})_{r\ge 1}$ are i.i.d.
Bernoulli with mean $q_\lambda(c\mid S_i)$, and the active sets in
Algorithm~\ref{alg:ordinal-fail-cond-elim} are nested. These are exactly the facts used in the proof of
Theorem~\ref{thm:failcond-elim}, so that proof applies with $q(c\mid S)$ replaced by
$q_\lambda(c\mid S_i)$. Moreover, the number of actual
candidate-versus-rival comparisons is no larger than the number of active
observations counted there, since the case $c=x_r$ uses no oracle call.

Since $\delta_i=\delta/(2k)$, an adaptive union bound over the at most $k$ calls implies that, with probability at least $1-\delta/2$, every call returns an $\eta_i$-optimal rescue candidate and satisfies the corresponding candidate-query bound. Work on the intersection of this event and the failure-rate concentration event above.

If Algorithm~\ref{alg:weighted-ordinal-fail-greedy} stops at step $i$, then
\[
    \rho_i^\lambda
    \le
    \overline\rho_i
    \le
    \frac{\varepsilon}{4k}.
\]
Thus
\[
    \Phi_\lambda(S_i)
    =
    1-\rho_i^\lambda
    \ge
    1-\frac{\varepsilon}{4k}.
\]
After arbitrary padding to size $k$, Lemma~\ref{lem:theta-phi} preserves this lower bound
by monotonicity. Since $\Phi^*_{\lambda,k}\le 1$ and
$\varepsilon/(4k)\le \varepsilon$, the padded committee $S$ satisfies
\[
\Phi_\lambda(S)\ge 1-\varepsilon
\ge (1-1/e)\Phi^*_{\lambda,k}-\varepsilon .
\]

Otherwise, the algorithm executes a greedy step. 
On the confidence event, any executed step satisfies
\[
\rho_i^\lambda\ge \bar\rho_i-\frac{\varepsilon}{8k}>\frac{\varepsilon}{8k}>0,
\]
and $\eta_i=\varepsilon/(4k\bar\rho_i)\in(0,1]$. Thus the call to Algorithm~\ref{alg:ordinal-fail-cond-elim} is
well defined.

Conditional on $S_i\cap P_\pi(x)=\varnothing$, the observations queried by Algorithm~\ref{alg:ordinal-fail-cond-elim} are Bernoulli with means $q_\lambda(c\mid S_i)$. Therefore the returned candidate $\widehat c_{i+1}$ satisfies
\[
    q_\lambda(\widehat c_{i+1}\mid S_i)
    \ge
    \max_{c\in C\setminus S_i}q_\lambda(c\mid S_i)-\eta_i.
\]
By Lemma~\ref{lem:ordinal-marginal-identity},
\[
    \Phi_\lambda(S_i\cup\{c\})-\Phi_\lambda(S_i)
    =
    \rho_i^\lambda q_\lambda(c\mid S_i).
\]
Since $\rho_i^\lambda\le \overline\rho_i$, the marginal loss from using $\widehat c_{i+1}$ instead of an exact best marginal element is at most
\[
    \rho_i^\lambda\eta_i
    \le
    \overline\rho_i\cdot
    \frac{\varepsilon}{4k\overline\rho_i}
    =
    \frac{\varepsilon}{4k}.
\]
By Lemma~\ref{lem:theta-phi}, $\Phi_\lambda$ is monotone and submodular. Hence, for every executed step $i$,
\[
\Phi_\lambda(S_{i+1})-\Phi_\lambda(S_i)
\ge
\max_{c\in C\setminus S_i}
\bigl(\Phi_\lambda(S_i\cup\{c\})-\Phi_\lambda(S_i)\bigr)
-\frac{\varepsilon}{4k}.
\]
If no early stopping occurs, the standard additive-error greedy recursion gives
\[
\Phi_\lambda(S_k)
\ge
\left(1-\left(1-\frac1k\right)^k\right)\Phi^*_{\lambda,k}
-\frac{\varepsilon}{4}
\ge
(1-1/e)\Phi^*_{\lambda,k}-\varepsilon.
\]
If early stopping occurs, the preceding stopping argument already proves the claimed guarantee.

It remains to prove the query bound. Fix an executed step $i$ and
$c\in C\setminus S_i$. The rescue rate gap appearing in Theorem~\ref{thm:failcond-elim} is
\[
\max_{a\in C\setminus S_i} q_\lambda(a\mid S_i)-q_\lambda(c\mid S_i)
=
\frac{\Delta_i^\lambda(c)}{\rho_i^\lambda},
\]
where $\rho_i^\lambda>0$ by the previous paragraph. Algorithm~\ref{alg:ordinal-fail-cond-elim} separates
gaps only down to scale
\[
\eta_i=\frac{\varepsilon}{4k\bar\rho_i}.
\]
On the confidence event,
\[
\bar\rho_i \le \rho_i^\lambda+\frac{\varepsilon}{8k}
\le \rho_i^\lambda+\frac{\varepsilon}{k}.
\]
Therefore
\[
\eta_i
=
\frac{\varepsilon}{4k\bar\rho_i}
\ge
\frac{\varepsilon/k}{4(\rho_i^\lambda+\varepsilon/k)},
\]
and, since $\rho_i^\lambda>0$,
\[
\frac{\Delta_i^\lambda(c)}{\rho_i^\lambda}
\ge
\frac{\Delta_i^\lambda(c)}{\rho_i^\lambda+\varepsilon/k}.
\]
Consequently,
\[
\max\left\{\frac{\Delta_i^\lambda(c)}{\rho_i^\lambda},\eta_i\right\}
\ge
\frac{1}{4}\cdot
\frac{\max\{\Delta_i^\lambda(c),\varepsilon/k\}}
{\rho_i^\lambda+\varepsilon/k}.
\]
Also, since $|C\setminus S_i|\le m$, $\delta_i=\delta/(2k)$, and
$\eta_i\ge \varepsilon/(4k)$, the logarithmic factor in Theorem~\ref{thm:failcond-elim} is
$O(\log(emk^2/(\delta\varepsilon)))$. Applying the candidate-query bound from
Theorem~\ref{thm:failcond-elim} and summing over candidates and executed greedy steps gives
\[
Q_{\mathrm{cand}}
=
O\left(
\sum_{i=0}^{\tau-1}\sum_{c\in C\setminus S_i}
\frac{
(\rho_i^\lambda+\varepsilon/k)^2\log(emk^2/(\delta\varepsilon))
}{
\max\{\Delta_i^\lambda(c),\varepsilon/k\}^2
}
\right).
\]

\paragraph{Committee testing overhead.}
The bound in Theorem~\ref{thm:weighted-ordinal-oracle} counts only candidate-versus-rival comparisons on accepted failed pairs. To generate such pairs, the algorithm uses rejection sampling. 
If $R_i$ accepted failures are used at step $i$, then, conditional on
the accepted-pair transcript, rejection sampling uses
$R_i/\rho_i^\lambda$ tentative pairs in expectation whenever
$\rho_i^\lambda>0$. Each tentative pair costs at most $|S_i|$ pairwise
comparisons to test whether $S_i\cap P_\pi(x)=\varnothing$. The failure-rate estimation step uses $O(
        |S_i|\frac{k^2}{\varepsilon^2}
        \log (\frac{k}/{\delta})
    )$ committee testing comparisons at step $i$.

\subsection{Auditing and finite-family learning}
\label{app:theta-audit}

\paragraph{Proof of Theorem~\ref{thm:theta-audit-pool-main} and the detailed form}
Fix $\varepsilon,\delta\in(0,1)$. Given rankings
$\pi_1,\dots,\pi_T$, define, for any nonempty proper committee
$A\subsetneq C$ and any $x\in C\setminus A$,
\[
\widehat w_T(A,x)
:=
\frac1T\sum_{t=1}^T
\mathbf{1}\{r_{\pi_t}(A)<\mathrm{rank}_{\pi_t}(x)\},
\quad
\widehat\theta_T(A):=\min_{x\in C\setminus A}\widehat w_T(A,x).
\]

First fix a nonempty proper committee $S\subsetneq C$. If $T\ge \frac{1}{2\varepsilon^2}\log\frac{2(m-|S|)}{\delta}$, then $T(m-1)$ pairwise comparisons suffice to output
$\widehat\theta_T(S)$ such that
\[
|\widehat\theta_T(S)-\theta(S)|\le \varepsilon
\]
with probability at least $1-\delta$.

More generally, let $\mathcal F\subseteq 2^C$ be any nonempty finite
family of nonempty proper committees, and define
\[
N_{\mathcal F}:=\sum_{A\in\mathcal F}(m-|A|).
\]
If $T\ge \frac{2}{\varepsilon^2}\log\frac{2N_{\mathcal F}}{\delta}$, then, after recovering $T$ full rankings, any $\widehat S\in\arg\max_{A\in\mathcal F}\widehat\theta_T(A)$ satisfies
\[
\theta(\widehat S)\ge \max_{A\in\mathcal F}\theta(A)-\varepsilon
\]
with probability at least $1-\delta$. Recovering the rankings by
comparison sorting uses $O(Tm\log m)$ pairwise comparisons.

For any nonempty proper committee $S$ and $x\in C\setminus S$, define
\[
\widehat w_T(S,x):=\frac1T\sum_{t=1}^T
\mathbf 1\{r_{\pi_t}(S)<\mathrm{rank}_{\pi_t}(x)\},
\quad
\widehat\theta_T(S):=\min_{x\in C\setminus S}\widehat w_T(S,x).
\]
For the fixed-committee audit below, the same quantity is observed without
recovering the full ranking: if $s_t=\top_{\pi_t}(S)$, then
\[
\mathbf 1\{r_{\pi_t}(S)<\mathrm{rank}_{\pi_t}(x)\}
=
\mathbf 1\{s_t\succ_{\pi_t}x\}.
\]

For the fixed-committee audit, draw $\pi_1,\dots,\pi_T\sim P_D$. For each $t$, find the best member $s_t=\top_{\pi_t}(S)$ using $|S|-1$ pairwise comparisons, and then compare $s_t$ against every $x\in C\setminus S$. For a fixed outside rival $x$, the resulting observation is Bernoulli with mean $\mathrm{WIN}(S,\{x\})$. Hoeffding's inequality and a union bound over $m-|S|$ outside rivals imply
\[
    \max_{x\in C\setminus S}
    \left|
        \widehat w_T(S,x)-\mathrm{WIN}(S,\{x\})
    \right|
    \le \varepsilon
\]
with probability at least $1-\delta$ under the stated sample size. Taking minima over $x\in C\setminus S$ changes the estimate by at most the same uniform error. Each sample uses $|S|-1$ comparisons to find $s_t$ and $m-|S|$ comparisons against outside rivals, for a total of $m-1$ comparisons.

For the finite-family statement, for every $S\in\mathcal F$ and $x\in C\setminus S$, the empirical win rate is an average of $T$ independent Bernoulli variables. Hoeffding's inequality and a union bound over $N_{\mathcal F}
    =
    \sum_{S\in\mathcal F}(m-|S|)$
constraints imply that all these win rates are within $\varepsilon/2$ of their expectations with probability at least $1-\delta$. On this event, every $\widehat\theta_T(S)$ is within $\varepsilon/2$ of $\theta(S)$, and empirical maximization over $\mathcal F$ returns $\widehat S$ with $\theta(\widehat S)
    \ge
    \max_{S\in\mathcal F}\theta(S)-\varepsilon$.
Recovering each full ranking by comparison sorting costs $O(m\log m)$ pairwise comparisons, giving $O(Tm\log m)$ total comparisons.

\begin{theorem}[Gap-adaptive audit of a proposed committee]
\label{thm:gap-adaptive-audit}
Fix $\varepsilon,\delta\in(0,1)$ and a nonempty proper committee
$S\subsetneq C$, and let $M:=m-|S|$. For $x\in C\setminus S$, let $\mu_x:=\mathrm{WIN}(S,\{x\})$, $\theta(S)=\min_{x\in C\setminus S}\mu_x$, and $\Gamma_x:=\mu_x-\theta(S)$.
There is an adaptive pairwise comparison audit which maintains a confidence
interval $[\underline\theta_t,\overline\theta_t]$ for $\theta(S)$ on an
event of probability at least $1-\delta$. Stopping the audit when $\overline\theta_t-\underline\theta_t\le 2\varepsilon$ 
and returning $\widehat\theta(S)
        :=\frac{\underline\theta_t+\overline\theta_t}{2}$ 
gives $|\widehat\theta(S)-\theta(S)|\le \varepsilon$
on this event. On the same event, the number of pairwise comparisons made
up to this stopping time is
\[
O\left(
        (|S|-1)\frac{\log(M/(\delta\varepsilon))}{\varepsilon^2}
        +
        \sum_{x\in C\setminus S}
        \frac{\log(M/(\delta\max\{\Gamma_x,\varepsilon\}))}
             {\max\{\Gamma_x,\varepsilon\}^2}
\right).
\]
\end{theorem}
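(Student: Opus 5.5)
The audit is a best-arm-style \emph{minimum-mean estimation} problem over the $M$ outside rivals, run on shared rankings. It has two components that are charged separately. On each fresh sample $\pi_t\sim P_D$ we first identify $s_t=\top_{\pi_t}(S)$ with $|S|-1$ comparisons (a tournament over $S$). We then compare $s_t$ only against the rivals $x$ that are still \emph{active}. By the identity $\mathbf 1\{r_{\pi_t}(S)<\mathrm{rank}_{\pi_t}(x)\}=\mathbf 1\{s_t\succ_{\pi_t}x\}$ from the proof of Theorem~\ref{thm:theta-audit-pool-main}, each such comparison is a Bernoulli$(\mu_x)$ sample, i.i.d.\ across $t$. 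We maintain the empirical means $\widehat\mu_{t}(x)$ and anytime radii $\mathrm{rad}_t=\sqrt{\frac{1}{2t}\log\frac{4Mt^2}{\delta}}$, exactly as in Algorithm~\ref{alg:failcond-elim}. The Hoeffding and union-bound argument in the proof of Theorem~\ref{thm:failcond-elim} then gives an event $E$ with $\Pr[E]\ge 1-\delta$ on which $|\widehat\mu_t(x)-\mu_x|\le \mathrm{rad}_t$ for all $x$ and all $t$. Because active sets are nested, every active rival has been sampled in every round so far, so all active rivals share the same count $t$.

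The elimination rule drops $x$ once $\widehat\mu_t(x)-\mathrm{rad}_t>\min_{y\in A_t}(\widehat\mu_t(y)+\mathrm{rad}_t)$, i.e.\ once $x$ is certifiably not the minimizer. We set $\overline\theta_t:=\min_{y\in A_t}(\widehat\mu_t(y)+\mathrm{rad}_t)$ and $\underline\theta_t:=\min_{y\in A_t}(\widehat\mu_t(y)-\mathrm{rad}_t)$. Validity on $E$ follows from the argument in Step~1 of Theorem~\ref{thm:failcond-elim}, with the inequalities reversed. A minimizer $x^*$ of $\mu$ is never eliminated, since $\widehat\mu_t(x^*)-\mathrm{rad}_t\le\theta(S)\le \mu_y\le \widehat\mu_t(y)+\mathrm{rad}_t$ for every $y$. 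Hence $\overline\theta_t\ge$ the true minimum over $A_t$, which equals $\theta(S)$. Likewise, every active $y$ has $\widehat\mu_t(y)-\mathrm{rad}_t\le \mu_y$, and $x^*\in A_t$, so $\underline\theta_t\le\theta(S)$. Eliminated rivals have $\mu_x\ge\theta(S)$ and do not matter. Since all active rivals share one radius, $\overline\theta_t-\underline\theta_t=2\mathrm{rad}_t$, and the stopping rule fires once $\mathrm{rad}_t\le\varepsilon$. Returning the midpoint then gives error at most $\varepsilon$.

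For the count, the stopping round is $t^\star=O(\log(M/(\delta\varepsilon))/\varepsilon^2)$ by the calculation in Step~3 of Theorem~\ref{thm:failcond-elim}. This bounds the tournament cost by $(|S|-1)t^\star$, which is the first term. For a rival with $\Gamma_x\ge 2\varepsilon$ (say), as in Step~4 of Theorem~\ref{thm:failcond-elim}: once $\mathrm{rad}_t<\Gamma_x/4$, on $E$ we have $\widehat\mu_t(x)-\mathrm{rad}_t\ge \mu_x-2\mathrm{rad}_t>\theta(S)+2\mathrm{rad}_t\ge \widehat\mu_t(x^*)+\mathrm{rad}_t\ge\overline\theta_t$. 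So $x$ is eliminated after $O(\log(M/(\delta\Gamma_x))/\Gamma_x^2)$ rounds. Rivals with $\Gamma_x<2\varepsilon$ are queried at most $t^\star$ times, which is $O(\log(M/(\delta\varepsilon))/\max\{\Gamma_x,\varepsilon\}^2)$. Summing over $x$ gives the second term.

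The main obstacle is the bookkeeping for the per-sample $|S|-1$ tournament cost. The audit cannot stop finding $s_t$ just because a few rivals remain, so that cost persists until $t^\star$ and is not gap-adaptive. This is why it appears with $\varepsilon^{-2}$ rather than gaps. A secondary subtlety is checking that $\theta(S)$ is the minimum over active rivals even though the minimizer may be non-unique; this is handled by keeping any $x^*$ achieving the minimum.
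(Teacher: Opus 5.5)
Your proposal is correct and follows the paper's proof essentially step for step. It uses the same per-ranking cost of $|S|-1$ comparisons to find $\top_{\pi_t}(S)$, the same anytime Hoeffding event with radius $\sqrt{\log(4Mn^2/\delta)/(2n)}$, the same elimination rule $L_x(t)>\overline\theta_t$ with the same safety argument, and the same two-case count over rivals (large gap: eliminated at scale $\Gamma_x$; small gap: queried until the audit stops). The one small streamlining is your observation that active rivals share a sample count, so $\overline\theta_t-\underline\theta_t=2\,\mathrm{rad}_t$ exactly and the stopping time is deterministic. The paper instead bounds the width through a surviving true minimizer, but this does not change the argument in any essential way.
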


\begin{proof}
Let $A_1:=C\setminus S$. The audit maintains an active set $A_t$ of outside rivals that may still attain the minimum. In round $t=1,2,\dots$, draw a fresh ranking $\pi_t\sim P_D$, find $s_t:=\top_{\pi_t}(S)$ using $|S|-1$ pairwise comparisons. For the analysis, define for every
outside rival $x\in C\setminus S$, $Y_{t,x}:=\mathbf 1\{s_t\succ_{\pi_t}x\}$.
The algorithm queries this bit only for the currently active rivals
$x\in A_t$. For each fixed $x$, the sequence $(Y_{t,x})_{t\ge 1}$ is
i.i.d. Bernoulli with mean $\mu_x=\mathrm{WIN}(S,\{x\})$, by the definition of $\mathrm{WIN}$ and the strict-ranking convention from the preliminaries. Let $\widehat\mu_{x,n}:=\frac1n\sum_{s=1}^n Y_{s,x}$, and let $n_x(t)$ denote the number of observations collected from $x$ up to
round $t$. Since the active sets only shrink, if $x$ is still active after
round $t$, then the empirical mean of the observations actually collected
from $x$ is $\widehat\mu_{x,n_x(t)}$.

For $n\ge 1$, define the anytime radius
\[
r(n):=\sqrt{\frac{\log(4Mn^2/\delta)}{2n}},
\quad M:=m-|S|.
\]
By Hoeffding's inequality and a union bound over $x\in C\setminus S$ and
$n\ge 1$, with probability at least $1-\delta$, the event $\mathcal E:=
\left\{
\forall x\in C\setminus S,\ \forall n\ge 1:
|\widehat\mu_{x,n}-\mu_x|\le r(n)
\right\}$ holds.

Work on $\mathcal E$. For each active rival $x$, set
\[
L_x(t):=\widehat\mu_{x,n_x(t)}-r(n_x(t)),
\quad
U_x(t):=\widehat\mu_{x,n_x(t)}+r(n_x(t)).
\]
Let
\[
    \underline\theta_t:=\min_{x\in A_t}L_x(t),
    \quad
    \overline\theta_t:=\min_{x\in A_t}U_x(t).
\]
The algorithm eliminates an active rival $x$ whenever $L_x(t)>\overline\theta_t$.
This elimination rule is safe on $\mathcal E$: if $y\in A_t$ attains $\overline\theta_t=U_y(t)$, then $\mu_y\le U_y(t)=\overline\theta_t < L_x(t)\le \mu_x$, so $x$ cannot be a minimizer of $\mu_z$. Therefore no true minimizer is ever eliminated, and
\[
    \theta(S)=\min_{x\in C\setminus S}\mu_x
             =\min_{x\in A_t}\mu_x
             \in [\underline\theta_t,\overline\theta_t]
\]
for every round $t$.

The algorithm stops once $\overline\theta_t-\underline\theta_t\le 2\varepsilon$ and returns the midpoint $\widehat\theta(S):=\frac{\underline\theta_t+\overline\theta_t}{2}$.
On $\mathcal E$, this gives $|\widehat\theta(S)-\theta(S)|\le \varepsilon$.

It remains to bound the number of comparisons. Fix $x\in C\setminus S$, and let $\Gamma_x:=\mu_x-\theta(S),
    \quad
    h_x:=\max\{\Gamma_x,\varepsilon\}$.
There is a universal constant $K$ such that after $K\frac{\log(M/(\delta h_x))}{h_x^2}$ observations of rival $x$, its confidence radius is at most $h_x/8$.
Let $z$ be any true minimizer, so $\mu_z=\theta(S)$. By the safety
argument above, $z$ is never eliminated.

If $\Gamma_x\ge\varepsilon$, then once both $x$ and $z$ have radius at most
$\Gamma_x/8$,
\[
L_x(t)\ge \mu_x-\Gamma_x/4
= \theta(S)+3\Gamma_x/4
> \theta(S)+\Gamma_x/4
\ge U_z(t)
\ge \overline\theta_t,
\]
so $x$ is eliminated.

If $\Gamma_x<\varepsilon$, then $x$ can only be queried until the audit
stops. After $O ({\log(M/(\delta\varepsilon))}/{\varepsilon^2} )$ fresh rankings, every active rival has radius at most $\varepsilon/8$,
because active rivals are queried in every round. Since a true minimizer
$z$ is still active,
\[
\underline\theta_t\ge \theta(S)-\varepsilon/4,
\quad
\overline\theta_t\le U_z(t)\le \theta(S)+\varepsilon/4.
\]
Hence $\overline\theta_t-\underline\theta_t\le 2\varepsilon$, so the audit
stops. Therefore each rival $x$ is queried at most
\[
O\left(
\frac{\log(M/(\delta\max\{\Gamma_x,\varepsilon\}))}
{\max\{\Gamma_x,\varepsilon\}^2}
\right)
\]
times.

The stopping argument above also bounds the total number of fresh rankings
on which $\top_\pi(S)$ must be found by $O({\log(M/(\delta\varepsilon))}/{\varepsilon^2})$.
Each such ranking costs $|S|-1$ comparisons to find $\top_\pi(S)$. Summing the rival-comparison costs over $x\in C\setminus S$ gives the claimed bound.
\end{proof}

\begin{theorem}[Active finite-family \texorpdfstring{$\theta$}{theta}-learning]
\label{thm:active-finite-family-theta-learning}
    Fix $\varepsilon,\delta\in(0,1)$. Let $\mathcal F\subseteq 2^C$ be a nonempty
    finite family of nonempty proper committees, and set $N_{\mathcal F}:=\sum_{S\in\mathcal F}(m-|S|)$, $\theta^*_{\mathcal F}:=\max_{S\in\mathcal F}\theta(S)$.
    For $S\in\mathcal F$, define the committee gap $\Lambda_S := \theta^*_{\mathcal F}-\theta(S)$, and for $x\in C\setminus S$, define the rival gap $\Gamma_{S,x} := \mathrm{WIN}(S,\{x\})-\theta(S)$.
    There is an adaptive pairwise query algorithm which returns $\widehat S\in\mathcal F$ satisfying $\theta(\widehat S)\ge \theta^*_{\mathcal F}-\varepsilon$ with probability at least $1-\delta$. 
    On the same event, its query complexity is
    \[
    O\left(
        \sum_{S\in\mathcal F} (|S|-1)\frac{\log(N_{\mathcal F}/(\delta\alpha_S))}{\alpha_S^2}
        +
        \sum_{S\in\mathcal F}\sum_{x\in C\setminus S}
        \frac{\log(N_{\mathcal F}/(\delta H_{S,x}))}{H_{S,x}^2}
    \right),
    \]
    where $\alpha_S := \max\{\varepsilon,\Lambda_S\}$ and $H_{S,x} := \max\{\varepsilon,\Lambda_S,\Gamma_{S,x}\}$.
\end{theorem}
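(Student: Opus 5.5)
The plan is to run a two-level successive-elimination scheme. At the outer level we eliminate committees in $\mathcal F$; at the inner level we eliminate rivals inside each surviving committee, reusing the anytime audit of Theorem~\ref{thm:gap-adaptive-audit}.

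In round $t$ we draw a fresh ranking $\pi_t\sim P_D$. For every still-active committee $S$ we find $s_t(S)=\top_{\pi_t}(S)$ with $|S|-1$ comparisons. We then query $Y_{t,S,x}=\mathbf 1\{s_t(S)\succ_{\pi_t}x\}$ for every rival $x$ that is still active for $S$. Rivals are indexed by pairs $(S,x)$, $N_{\mathcal F}$ in total. We use the radius $r(n)=\sqrt{\log(4N_{\mathcal F}n^2/\delta)/(2n)}$, and a union bound over all pairs and all $n$ gives a good event $\mathcal E$ of probability at least $1-\delta$. On $\mathcal E$ every empirical mean lies within $r(n)$ of $\mathrm{WIN}(S,\{x\})$. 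For each active $S$ we form the interval $[\underline\theta_t(S),\overline\theta_t(S)]$ exactly as in the proof of Theorem~\ref{thm:gap-adaptive-audit}, including its safe inner rival elimination, so that $\theta(S)$ lies in this interval on $\mathcal E$. The outer rule removes $S$ when $\overline\theta_t(S)<\max_{S'}\underline\theta_t(S')$. The algorithm stops when one committee remains, or when every active $S$ has $\overline\theta_t(S)-\underline\theta_t(S)\le\varepsilon/2$, and it returns the active committee with the largest $\underline\theta_t$. Correctness on $\mathcal E$ follows in the usual way. An optimal $S^\star$ is never eliminated, since $\overline\theta_t(S^\star)\ge\theta^*_{\mathcal F}\ge\underline\theta_t(S')$ for all $S'$. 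At stopping, the returned $\widehat S$ satisfies $\theta(\widehat S)\ge\underline\theta_t(\widehat S)\ge\underline\theta_t(S^\star)\ge\overline\theta_t(S^\star)-\varepsilon/2\ge\theta^*_{\mathcal F}-\varepsilon$.

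For the query bound there are two sample-count claims.

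\textbf{First}, committee $S$ stays active for only $O(\log(N_{\mathcal F}/(\delta\alpha_S))/\alpha_S^2)$ rounds. If $\Lambda_S>\varepsilon$, the argument goes as follows. Once all active rivals of $S$ and of $S^\star$ have radius at most $\Lambda_S/8$, the interval width for each of them is at most about $\Lambda_S/2$. Here one uses the fact from the audit proof that the interval is controlled by a true minimizer, which is never eliminated. This separates $\overline\theta(S)$ from $\underline\theta(S^\star)$. If $\Lambda_S\le\varepsilon$, the global stopping rule triggers at scale $\varepsilon$. The same scale bounds how long the top-member search is paid, which yields the first sum.

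\textbf{Second}, rival $x$ of $S$ is queried only while both $S$ and $x$ are active. Rival elimination happens at scale $\Gamma_{S,x}$, as in Theorem~\ref{thm:gap-adaptive-audit}, and committee elimination happens at scale $\alpha_S$. So the number of queries is governed by the larger of these scales, $H_{S,x}=\max\{\varepsilon,\Lambda_S,\Gamma_{S,x}\}$. Converting the required radius into a sample count uses the same $\log r_0$ bookkeeping as Step~3 of Theorem~\ref{thm:failcond-elim}, which gives the factor $\log(N_{\mathcal F}/(\delta H_{S,x}))/H_{S,x}^2$.

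The main obstacle is the committee-elimination step. The width of $[\underline\theta_t(S),\overline\theta_t(S)]$ depends on the lower bounds of all active rivals of $S$, including near-minimizing rivals whose radii only shrink through repeated sampling. We must show the width falls below $\Lambda_S/2$ within $O(\log(\cdot)/\alpha_S^2)$ rounds, and the tension is between rivals that are eliminated early and rivals that are queried every round. I would handle this by observing that any rival still active after $n$ rounds of $S$'s participation has been sampled $n$ times, because rivals, once active, are queried in every round in which their committee is active. Hence all active radii equal $r(n)$, and the width is at most $2r(n)+(\text{spread among active near-minimizers})\le 4r(n)$ by the audit's elimination rule.
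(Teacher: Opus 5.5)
Your proof is correct, but it is organized differently from the paper's.

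\textbf{The paper's route.} It treats Theorem~\ref{thm:gap-adaptive-audit} as a black box. Each $S\in\mathcal F$ gets its own persistent, pausable audit at confidence $\delta/N_{\mathcal F}$, run on its own fresh rankings. A phased race with doubling accuracy $\beta_j=2^{-j}$ refines every active audit to width $c\beta_j$. It then eliminates committees with an $\varepsilon$ slack ($U_S<\max_T L_T-\varepsilon$) and stops on an $\varepsilon$-certificate ($L_S\ge\max_T U_T-\varepsilon$). The query bound follows by arguing that $S$ is never refined past a constant multiple of $\alpha_S$, and by appealing to the audit's internal rival elimination at scale $H_{S,x}$.

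\textbf{Your route.} You run one synchronized elimination loop in which a single shared ranking per round feeds every active committee. The payoff is your observation that every active pair $(S,x)$ then has the same sample count $t$. This actually makes each interval width exactly
$\overline\theta_t(S)-\underline\theta_t(S)=\min_x(\widehat\mu_x+r(t))-\min_x(\widehat\mu_x-r(t))=2r(t)$.
So the ``spread'' term in your last paragraph vanishes. You should use this exact value rather than the looser $4r(t)$, which at radius $\Lambda_S/8$ only gives a non-strict separation. With it, the elimination times are explicit:
\begin{itemize}
\item $S$ leaves once $4r(t)<\Lambda_S$.
\item Rival $x$ of $S$ leaves once $4r(t)<\Gamma_{S,x}$, comparing $L_x\ge\mu_x-2r(t)$ with $U_z\le\theta(S)+2r(t)$ for a never-eliminated minimizer $z$.
\item Everything stops once $r(t)\le\varepsilon/4$.
\end{itemize}
This yields the $\alpha_S$ and $H_{S,x}$ scales directly, and even $\varepsilon/2$ accuracy.

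\textbf{Comparison.} Your argument is more self-contained and makes the per-rival $H_{S,x}$ accounting fully explicit, a step the paper asserts only briefly. The paper's version buys modularity: the per-committee audits need not share samples or be synchronized, and can be treated as interchangeable black boxes. The price is extra bookkeeping for the phases and the constant $c$.
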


\begin{proof}
Run a phased racing algorithm over committees, but keep the audit states
persistent across phases. For each $S \in \mathcal F$, initialize one copy of the audit process from Theorem~\ref{thm:gap-adaptive-audit} with confidence $\delta_S := \frac{\delta}{N_F}$.
Since $N_{\mathcal F}\ge |\mathcal F|$, the confidence events of all
committee audits hold simultaneously with probability at least $1-\delta$.
We work on this joint event. The intervals produced by the audits are
anytime-valid on this event, and each audit may be paused and resumed
without discarding its previous samples or its eliminated outside rivals.

Let $\beta_j:=2^{-j}$ for $j=0,1,2,\dots$, and set $A_0:=\mathcal F$. In phase $j$, for each active committee
$S\in A_j$, resume its audit until its current confidence interval
$[L_S,U_S]$ satisfies $U_S-L_S\le c\beta_j$, where $c>0$ is a sufficiently small universal constant.

Eliminate committees whose upper confidence bound is already too small by
setting $A'_j
        :=
        \{
        S\in A_j:
        U_S\ge \max_{T\in A_j} L_T-\varepsilon
        \}$.
If there exists $S\in A'_j$ satisfying $L_S\ge \max_{T\in A'_j} U_T-\varepsilon$, then stop and return such an $S$. Otherwise set $A_{j+1}:=A'_j$ and
continue.

On the joint confidence event, no committee with value at least
$\theta_{\mathcal F}^*-\varepsilon$ is eliminated. Indeed, if
$\theta(S)\ge \theta_{\mathcal F}^*-\varepsilon$, then
\[
        U_S\ge \theta(S)
        \ge \theta_{\mathcal F}^*-\varepsilon
        \ge \max_{T\in A_j} L_T-\varepsilon,
\]
where the last inequality uses $L_T\le \theta(T)\le \theta_{\mathcal F}^*$.
Hence an optimal committee remains active in every phase.

If the stopping rule returns $S$, let $S^*$ be an optimal committee
that is still active. Since $S^*\in A'_j$, we have
\[
        \theta_{\mathcal F}^*
        =\theta(S^*)
        \le U_{S^*}
        \le \max_{T\in A'_j} U_T
        \le L_S+\varepsilon
        \le \theta(S)+\varepsilon.
\]
Thus $\theta(S)\ge \theta_{\mathcal F}^*-\varepsilon$.

It remains to bound the number of queries. Fix $S\in\mathcal F$, and recall
\[
        \Lambda_S:=\theta_{\mathcal F}^*-\theta(S),
        \quad
        \alpha_S:=\max\{\varepsilon,\Lambda_S\}.
\]
We justify the scale at which a committee can still be active. Let $B_j:=\max_{T\in A_j} L_T$, and let $S^*$ be an optimal committee that is still active. Since
$U_T-L_T\le c\beta_j$ for all $T\in A_j$,
\[
B_j\ge L_{S^*}\ge \theta^*_{\mathcal F}-c\beta_j.
\]
If a committee $S\in A_j$ is not eliminated in phase $j$, then
\[
\theta(S)+c\beta_j\ge U_S\ge B_j-\varepsilon
\ge \theta^*_{\mathcal F}-c\beta_j-\varepsilon,
\]
and hence $\Lambda_S\le \varepsilon+2c\beta_j$.
Also, if $c\beta_j\le \varepsilon$, then the stopping rule fires. Indeed,
take $S_j\in\arg\max_{T\in A_j} L_T$. Then $S_j\in A'_j$, and for every
$T\in A'_j$,
\[
U_T\le L_T+c\beta_j\le L_{S_j}+c\beta_j\le L_{S_j}+\varepsilon.
\]
Thus $L_{S_j}\ge \max_{T\in A'_j}U_T-\varepsilon$.

These two facts imply the claimed scale. If $\Lambda_S\le 2\varepsilon$,
then
\[
\alpha_S=\max\{\varepsilon,\Lambda_S\}\in[\varepsilon,2\varepsilon],
\]
so making $\beta_j$ a sufficiently small constant multiple of
$\varepsilon$, equivalently of $\alpha_S$, forces the stopping rule to
fire. If $\Lambda_S>2\varepsilon$, then $\alpha_S=\Lambda_S$, and making
$\beta_j$ a sufficiently small constant multiple of $\alpha_S$ gives
\[
\Lambda_S>\varepsilon+2c\beta_j.
\]
Hence $S$ cannot survive that elimination step unless the algorithm has
already stopped. Therefore the persistent audit for $S$ is never refined
past an accuracy scale that is a constant multiple of $\alpha_S$.

For any scale $a\in(0,1]$, our choice
$\delta_S=\delta/N_{\mathcal F}$ and the bound $m-|S|\le N_{\mathcal F}$
give
\[
\log\frac{m-|S|}{\delta_S a}
\le
\log\frac{N_{\mathcal F}^2}{\delta a}
\le
2\log\frac{N_{\mathcal F}}{\delta a},
\]
so this change only affects universal constants inside the $O(\cdot)$
notation.

Applying the audit bound of Theorem~\ref{thm:gap-adaptive-audit} to committee $S$ with accuracy scale $O(\alpha_S)$ and confidence $\delta_S$, and using
$m-|S|\le N_{\mathcal F}$, gives a top-of-committee comparison contribution
\[
        O\left(
        (|S|-1)\frac{\log(N_{\mathcal F}/(\delta\alpha_S))}
                     {\alpha_S^2}
        \right).
\]
For each $x\in C\setminus S$, the internal rival-elimination rule of the
audit stops querying $x$ once the relevant scale reaches $H_{S,x}:=\max\{\varepsilon,\Lambda_S,\Gamma_{S,x}\}$.
Thus the rival-comparison contribution for this $x$ is
\[
        O\left(
        \frac{\log(N_{\mathcal F}/(\delta H_{S,x}))}{H_{S,x}^2}
        \right).
\]
Summing these bounds over $x\in C\setminus S$ and then over
$S\in\mathcal F$ yields the stated query bound.
\end{proof}

\subsection{Minimax wrapper}
\label{app:minimax-wrapper}

\paragraph{Proof of Theorem~\ref{thm:minimax-theta-wrapper} and the detailed form}
Fix $1\le k<m$ and $\varepsilon,\delta\in(0,1)$. There is an adaptive
pairwise query algorithm that makes $R=O({\log m}/{\varepsilon^2})$
calls to the fixed-$\lambda$ oracle of
Theorem~\ref{thm:weighted-ordinal-oracle}, each with accuracy $\varepsilon/4$ and
confidence $\delta/(2R)$, and uses an additional
\[
    O\left(
        R\cdot \frac{m}{\varepsilon^2}
        \log\frac{mR}{\delta}
    \right)
\]
pairwise comparisons to estimate the rival-loss vectors used by multiplicative
weights. It outputs a distribution $p$ over size-$k$ committees such that, with
probability at least $1-\delta$,
\[
    \forall x\in C,\quad
    \mathbb E_{S\sim p}[g_x(S)]
    \ge (1-1/e)\theta_k^*-\varepsilon .
\]
Consequently, if the oracle calls produce committees
$S_1,\dots,S_R$ and $S^+:=\bigcup_{t=1}^R S_t$, then $|S^+|\le kR$ and, with the convention $\theta(C)=1$, $\theta(S^+)\ge (1-1/e)\theta_k^*-\varepsilon$.

Let $\alpha := (1-1/e)\theta^*_k$.
Consider the zero-sum game whose row player chooses a committee $S\in\mathcal S_k$, whose column player chooses a rival $x\in C$, and whose payoff is $g_x(S)$. If
$S^*\in\arg\max_{S\in\mathcal S_k}\theta(S)$, then for every $\lambda\in\Delta(C)$,
\[
    \max_{S\in\mathcal S_k}\Phi_\lambda(S)
    \ge
    \Phi_\lambda(S^*)
    =
    \sum_{x\in C}\lambda_x g_x(S^*)
    \ge
    \theta^*_k.
\]

We run multiplicative weights over the rival set $C$, using the estimated violation
vectors defined below as gain vectors. In round $t$, let
$\lambda_t\in\Delta(C)$ be the current rival distribution.
Conditional on the past, $\lambda_t$ is fixed, so Theorem~\ref{thm:weighted-ordinal-oracle} applies
to this oracle call with confidence parameter $\delta/(2R)$.
We call the fixed-$\lambda$ oracle of Theorem~\ref{thm:weighted-ordinal-oracle} with $\lambda=\lambda_t$, accuracy $\varepsilon/4$, and confidence $\delta/(2R)$. On the oracle success event, the returned committee $S_t$ satisfies
\[
    \Phi_{\lambda_t}(S_t)
    \ge
    (1-1/e)
    \max_{S\in\mathcal S_k}\Phi_{\lambda_t}(S)
    -
    \frac{\varepsilon}{4}
    \ge
    \alpha-\frac{\varepsilon}{4}.
\]

Define the true rival violation vector $\ell_t(x):=1-g_x(S_t)\in[0,1]$.
Then
\[
    \langle \lambda_t,\ell_t\rangle
    =
    1-\Phi_{\lambda_t}(S_t)
    \le
    1-\alpha+\frac{\varepsilon}{4}.
\]

To update multiplicative weights, we estimate the vector $(g_x(S_t))_{x\in C}$.
Conditional on the past and on $S_t$, draw $T_{\rm audit}$ independent fresh task
handles, with latent rankings $\pi_{t,1},\dots,\pi_{t,T_{\rm audit}}\sim P_D$.
Set $\widehat g_t(x)=1$ for $x\in S_t$. For $x\notin S_t$, use the same sampled
rankings and set
\[
\widehat g_t(x)
:=
\frac{1}{T_{\rm audit}}
\sum_{j=1}^{T_{\rm audit}}
\mathbf 1\left\{\top_{\pi_{t,j}}(S_t)\succ_{\pi_{t,j}} x\right\}.
\]
For $x\notin S_t$, the summand has mean
$g_x(S_t)=\mathrm{WIN}(S_t,\{x\})$. Hence, by Hoeffding's inequality and a union
bound over all $t\in[R]$ and $x\in C$, taking
\[
T_{\rm audit}
=
O\left(\frac{1}{\varepsilon^2}\log\frac{mR}{\delta}\right)
\]
fresh rankings per round ensures that, with probability at least $1-\delta/2$,
\[
\forall t\in[R],\ \forall x\in C,\quad
|\widehat g_t(x)-g_x(S_t)|\le \frac{\varepsilon}{8}.
\]
Operationally, for each sampled ranking we first find $\top_\pi(S_t)$ using
$|S_t|-1$ pairwise comparisons and then compare it with every $x\in C\setminus S_t$.
Thus each audited ranking costs $(|S_t|-1)+(m-|S_t|)=m-1$ pairwise comparisons, so the total audit cost is
\[
O\left(R\cdot \frac{m}{\varepsilon^2}\log\frac{mR}{\delta}\right).
\]

Let $\widehat \ell_t(x):=1-\widehat g_t(x)$.
Since $\widehat\ell_t\in[0,1]^C$, choosing $R=O ({\log m}/{\varepsilon^2})$ with a sufficiently large universal constant in the $O(\cdot)$, the standard
multiplicative-weights regret bound gives, simultaneously for every $x\in C$,
\[
\frac1R\sum_{t=1}^R \widehat\ell_t(x)
\le
\frac1R\sum_{t=1}^R \langle \lambda_t,\widehat\ell_t\rangle
+\frac{\varepsilon}{4}.
\] 
On the intersection of the oracle-success event and the audit-success event, for every
$x\in C$,
\[
\begin{aligned}
    \frac1R\sum_{t=1}^R \ell_t(x)
    &\le
    \frac1R\sum_{t=1}^R \widehat\ell_t(x)
    +
    \frac{\varepsilon}{8}  \\
    &\le
    \frac1R\sum_{t=1}^R
    \langle \lambda_t,\widehat\ell_t\rangle
    +
    \frac{\varepsilon}{4}
    +
    \frac{\varepsilon}{8}  \\
    &\le
    \frac1R\sum_{t=1}^R
    \langle \lambda_t,\ell_t\rangle
    +
    \frac{\varepsilon}{4}
    +
    \frac{\varepsilon}{4}  \\
    &\le
    1-\alpha+\frac{3\varepsilon}{4}.
\end{aligned}
\]
Thus
\[
    \frac1R\sum_{t=1}^R g_x(S_t)
    =
    1-
    \frac1R\sum_{t=1}^R \ell_t(x)
    \ge
    \alpha-\frac{3\varepsilon}{4}
    \ge
    \alpha-\varepsilon.
\]
Let $p$ be the uniform distribution over $S_1,\dots,S_R$. Then, for every $x\in C$,
\[
    \mathbb E_{S\sim p}[g_x(S)]
    =
    \frac1R\sum_{t=1}^R g_x(S_t)
    \ge
    (1-1/e)\theta^*_k-\varepsilon.
\]
A union bound over the oracle calls and the audit event gives total failure probability at most $\delta$.

For the deterministic bicriteria statement, define $S^+ := \bigcup_{t=1}^R S_t$.
Then $|S^+|\le kR$. If $S^+=C$, the claim holds under the convention $\theta(C)=1$. Otherwise, fix any $x\in C\setminus S^+$. Since $S_t\subseteq S^+$ for every $t$, monotonicity of $g_x$ gives
\[
    g_x(S^+)\ge g_x(S_t)
    \quad\text{for every }t.
\]
Therefore
\[
    g_x(S^+)
    \ge
    \frac1R\sum_{t=1}^R g_x(S_t)
    \ge
    (1-1/e)\theta^*_k-\varepsilon.
\]
For $x\notin S^+$, the definition of $g_x$ gives
$g_x(S^+)=\mathrm{WIN}(S^+,\{x\})$. Therefore, since $S^+\neq C$,
\[
\theta(S^+)
=
\min_{x\in C\setminus S^+} g_x(S^+)
\ge
(1-1/e)\theta_k^*-\varepsilon.
\]

\section{Supplementary Details for the Experiments}
\label{app:experiments}
This appendix gives the full details for the illustrative experiments summarized in Section~\ref{sec:experiments}. Since the main contribution of the paper is theoretical, the experiments are designed to test the qualitative algorithmic predictions of the theory rather than to provide a comprehensive benchmark of LLM ensembling systems.

Figure~\ref{fig:experiment-livebench} compares four committee selection
methods on two benchmarks at every committee size
$k\in\{3,\dots,7\}$, both for the full candidate pool and for the
pool with the strongest singletons removed.
This appendix gives the data construction, query accounting,
hyperparameter sweeps, and oracle computations behind the six panels.

%-------------------------------------------------------------------
\subsection{Methods and query accounting}
\label{app:methods}

The four lines that appear in every panel of
Figure~\ref{fig:experiment-livebench} are:
\begin{itemize}
  \item \textbf{$\OPT_{\mathrm{test}}$} (gray, dashed) — the test-set
    optimum, evaluated either by exhaustive enumeration or by an
    integer program that maximizes the binary covering objective.
    This is an oracle reference, not a query-bounded method.
  \item \textbf{AFG / Minimax-lottery} (green, solid) — our methods.
    On binary feedback we run \textsc{Adaptive-Fail-Greedy}
    (Algorithm~\ref{alg:adaptive-fail-greedy}); on pairwise feedback we run the
    Minimax wrapper from Theorem~\ref{thm:minimax-theta-wrapper} with
    \textsc{Weighted-Ordinal-Fail-Greedy}
    (Algorithm~\ref{alg:weighted-ordinal-fail-greedy}) as the per-round
    subroutine, and report the value of the lottery output (uniform
    distribution over the $R$ inner committees).
  \item \textbf{ERM} (pink, solid) — the random-sample baseline. 
    Sampled-ERM samples $N$ size-$k$ committees uniformly without replacement and returns the one with the largest empirical training value. On binary feedback, scoring one sampled committee on the
training split costs at most $k n_{\rm train}$ binary cell reads, or fewer if committee evaluation is short-circuited after the first success. On pairwise feedback, scoring one sampled committee by
$\theta_{\rm train}$ costs $n_{\rm train}(m-1)$ pairwise comparisons if rankings are elicited through
comparisons: $k-1$ comparisons to identify the committee's best member on a task, followed by
$m-k$ comparisons against outside rivals. In our finite-matrix implementation, these costs are charged
as reads of the corresponding stored training outcomes.
    % ERM samples $N$ size-$k$ committees (each consisting of $k$ distinct
    % candidates drawn uniformly without replacement) and returns the
    % one with the largest empirical training-set value. On binary
    % feedback each sampled committee is scored against every train
    % task, costing $N\,n_{\mathrm{train}}\,(m-k)$ comparisons; on
    % pairwise feedback the cost is the same with rankings replacing
    % tasks.
  \item \textbf{Top-$k$ / Borda-Top-$k$} (blue, dashed) — the
    deterministic ``pick the strongest singletons'' baseline. On
    binary feedback Top-$k$ chooses the $k$ candidates with highest
    solo accuracy on the train split, at full-info cost
    $m\,n_{\mathrm{train}}$. On pairwise feedback Borda-Top-$k$
    chooses the $k$ candidates with lowest mean rank, at full-info
    cost $\lceil\log_2 m!\rceil\,n_{\mathrm{train}}$ (the
    comparison-sort cost of producing the per-question rankings the
    Borda score is computed from). Both costs are independent of $k$.
\end{itemize}
A query is one cell-read against the underlying train data: a
$(\text{candidate},\text{task})$ outcome bit on binary feedback, or a
$(\text{candidate},\text{ranking})$ comparison on pairwise feedback.
The reported $Q$ is a count of read \emph{operations}, not distinct
cells, so it can exceed the number of train cells when methods
re-query during rejection sampling or successive elimination, or when
budget-driven baselines switch to with-replacement sampling. All
runs use confidence parameter $\delta=0.1$.

%-------------------------------------------------------------------
\subsection{Binary feedback experiment}
\label{app:experiments-binary}

The binary panels (Figures~\ref{fig:lbb-binary-k-sweep}, \ref{fig:lbb-binary-k-sweep-masked}, and~\ref{fig:lbb-binary-q-frontier})
are based on a multilingual extractive QA benchmark.

\medskip
\noindent
\textbf{Data.}
We combine five datasets: JSQuAD~\citep{kurihara2022jglue} for
Japanese, TyDi-QA~\citep{clark2020tydiqa} for Korean and Arabic, and
XQuAD~\citep{artetxe2020xquad} for Hindi and Thai, contributing
$n=1{,}976$ questions in total. The candidate pool starts from five
language specialists (Swallow-7b~\citep{fujii2024swallow},
KoAlpaca-Polyglot-12.8B~\citep{koalpaca2023,ko2023polyglot},
AceGPT-7B~\citep{huang2024acegpt}, Gajendra-v0.1~\citep{gajendra2024},
and OpenThaiGPT-1.0.0-7b~\citep{yuenyong2024openthaigpt}). For each
base model we construct the Cartesian product of five answer framings
$\{\texttt{bare},\texttt{terse},\texttt{step},\texttt{verify},\texttt{quote}\}$
and four instruction-language registers
$\{\texttt{en-formal},\texttt{en-casual},\texttt{native-formal},\texttt{native-casual}\}$,
giving 20 system-prompt variants per model. Of the nominal
$5\times 20=100$ candidates, eleven Gajendra variants timed out at 20
minutes per evaluation and were dropped, leaving $m=89$ candidates.
Models are served with vLLM~\citep{kwon2023vllm} and orchestrated
through Inspect AI~\citep{aisi2024inspect}.

\medskip
\noindent
\textbf{Scoring.}
We score each $(\text{model},\text{question})$ cell with a strict
span-level rule: the post-``Answer:'' span is extracted by the last
regex match, normalized (lowercased, whitespace-collapsed, edge
punctuation stripped), and compared with the gold target list. A cell
scores $1$ iff the normalized span equals or contains a normalized
gold target. The resulting $89\times 1976$ matrix
$U\in\{0,1\}^{m\times n}$ is the input to all four binary methods.

\medskip
\noindent
\textbf{Train/test split.}
Questions are split $75/25$ via a single uniform random permutation
(seed 0): $n_{\mathrm{train}}=1{,}482$ and $n_{\mathrm{test}}=494$.
The same split is used by every method. We evaluate committees by
$v_{\mathrm{test}}(\widehat S)=\mathbb E_{i\sim\mathrm{test}}
\bigl[\max_{c\in \widehat S}U[c,i]\bigr]$.

\medskip
\noindent
\textbf{Algorithms and hyperparameters.}
AFG implements the variant of Algorithm~\ref{alg:adaptive-fail-greedy-impl}
that exposes one accuracy parameter $\varepsilon$. We sweep
$\varepsilon\in\{3.0,2.0,1.5,1.0,0.65,0.35,0.13,0.10,0.05,0.02\}$ at
fixed $\delta=0.1$. For the high-$Q$ tail of
Figure~\ref{fig:lbb-binary-q-frontier} we add two supplementary AFG runs:
one at $\varepsilon\in\{0.22,0.25\}$ to densify the
$Q\approx3\times10^{6}$ bucket, and one at
$\varepsilon\in\{0.01,0.005,0.002,0.001\}$ with the inner-loop cap
raised to $\texttt{max\_rounds\_per\_step}=1.5\times10^{6}$
(default $2\times10^{5}$) to push beyond $Q=10^{7}$. ERM samples
$N$ size-$k$ committees uniformly and returns the empirical
$v_{\mathrm{train}}$-max; we sweep $N$ to span four decades of $Q$.
Top-$k$ has no hyperparameter; we report its value at the canonical
full-info cost $Q=m\,n_{\mathrm{train}}\approx 1.32\times 10^{5}$ and
plot it as a horizontal reference. We use $50$ seeds per
(algorithm, $k$, hyperparameter) cell.

\medskip
\noindent
\textbf{Oracle.}
We compute $\OPT_{\mathrm{test}}$ at every $k\in\{3,\dots,7\}$ by
exhaustive enumeration. With $m=89$,
$\binom{89}{3}=113{,}564$ and
$\binom{89}{7}\approx6.9\times 10^{9}$. We enumerate
combinations in chunks with a Numba-JIT kernel; total wall-time
across all $k$ is under five minutes on a single node. As a
cross-check we also solve a max-coverage ILP via the HiGHS solver in
\texttt{scipy.optimize.milp}: the ILP and brute-force values agree
to all four reported decimals.

\medskip
\noindent
\textbf{Top-5 mask (Figure~\ref{fig:lbb-binary-k-sweep-masked}).}
The masked variant removes the five candidates with highest solo
train accuracy before running any method. On this benchmark all
five are Swallow-7b prompt variants — namely
\texttt{ja-bare-native-casual\_\_swallow} (solo accuracy $0.43$),
\texttt{ja-terse-native-formal\_\_swallow} ($0.41$),
\texttt{ja-bare-en-casual\_\_swallow} ($0.40$),
\texttt{ja-bare-native-formal\_\_swallow} ($0.40$), and
\texttt{ja-bare-en-formal\_\_swallow} ($0.39$). Since Swallow is the
Japanese specialist, masking strips the dominant single-language
strategy and forces every method to compose a committee from the
remaining $m-5=84$ candidates. $\OPT_{\mathrm{test}}$ is recomputed on
the masked pool; in this dataset OPT is unchanged from the unmasked
optimum because the optimal committees were already drawn from
mid-tier solo-accuracy candidates that span all five languages.

\medskip
\noindent
\textbf{Q-frontier construction (Figure~\ref{fig:lbb-binary-q-frontier}).}
The $x$-axis is the AFG / ERM query budget $Q$ on a log scale; we
restrict to $Q\in[10^{5},3\times10^{8}]$. Top-$k$ is drawn as a
horizontal reference at its full-info value. $\OPT_{\mathrm{test}}$
is the test optimum at $k=3$ (the brute-force and ILP values agree
to four decimals). $(Q,v_{\mathrm{test}})$ pairs are pooled across
$\varepsilon$ values, snapped to the nearest decade in $\log_{10}Q$,
and aggregated to seed-level mean $\pm$ 95\% CI based on the
empirical standard error over $50$ seeds per ($\varepsilon$, $Q$-bin).

\medskip
\noindent
\textbf{Supplementary baselines.}
We also swept two budget-parameterized baselines that are not plotted
in Figure~\ref{fig:lbb-binary-k-sweep}: \textsc{UniformGreedy} draws
$Q/(km)$ tasks per step and adds the candidate with largest empirical
marginal coverage; \textsc{UCB-Greedy} runs a UCB bandit over
candidate marginal gains at each step. Table~\ref{tab:binary-baselines}
reports their mean $v_{\mathrm{test}}$ at $Q\approx10^{6}$
alongside the methods plotted in
Figure~\ref{fig:lbb-binary-k-sweep}. UniformGreedy and UCB-Greedy land
within $\pm 0.01$ of AFG at every $k$, we omit these two curves from the headline figure to avoid clutter. Their closeness to AFG indicates
that, on this finite multilingual QA matrix, several greedy query-allocation rules find similarly complementary committees; the advantage of AFG is the instance-dependent query guarantee rather than a large separation from every greedy heuristic on this dataset.

\begin{table}[!htb]
  \centering
  \small
  \begin{tabular}{lccccc}
    \toprule
    Method & $k=3$ & $k=4$ & $k=5$ & $k=6$ & $k=7$ \\
    \midrule
    AFG (ours)        & 0.6430 & 0.7018 & 0.7365 & 0.7672 & 0.7933 \\
    UniformGreedy     & 0.6393 & 0.7028 & 0.7400 & 0.7668 & 0.7870 \\
    UCB-Greedy        & 0.6378 & 0.6937 & 0.7340 & 0.7613 & 0.7827 \\
    ERM               & 0.5855 & 0.6565 & 0.6986 & 0.7302 & 0.7558 \\
    Top-$k$           & 0.5623 & 0.5914 & 0.6140 & 0.6353 & 0.6522 \\
    \midrule
    $\OPT_{\mathrm{test}}$ (oracle) & 0.6883 & 0.7611 & 0.7996 & 0.8300 & 0.8543 \\
    \bottomrule
  \end{tabular}
  \caption{Binary feedback at $Q\approx10^{6}$: mean
    $v_{\mathrm{test}}$ across $50$ seeds. Standard errors are
    $\le 0.005$ for every cell.}
  \label{tab:binary-baselines}
\end{table}

%-------------------------------------------------------------------
\subsection{Pairwise feedback experiment}
\label{app:experiments-pairwise}

The pairwise panels
(Figures~\ref{fig:lbb-livebench-k-sweep}, \ref{fig:lbb-livebench-k-sweep-masked}, and \ref{fig:lbb-livebench-q-frontier})
are based on the public LiveBench leaderboard~\citep{white2025livebench},
which scores frontier LLMs on six categories of open-ended tasks
(reasoning, mathematics, coding, language, instruction following, and
data analysis) with verifiable, programmatically-graded ground truth.

\medskip
\noindent
\textbf{Data.}
We start from the LiveBench \texttt{model\_judgment} table, which
contains $60{,}372$ per-task LLM judgments over $195$ models, $494$
questions, and $7$ task types. Three task types
(\texttt{LCB\_generation}, \texttt{coding\_completion}, and
\texttt{typos}) have only two unique scores ($\{0,1\}$) and are
excluded since they reduce to the binary-feedback setting we already
exercise on the multilingual QA benchmark. The remaining four task
types have $7$, $11$, $7$, and $453$ unique score levels for
\texttt{connections}, \texttt{paraphrase}, \texttt{story\_generation},
and \texttt{plot\_unscrambling} respectively. Restricting to models
that scored on $\ge 80\%$ of (task, question) pairs and to questions
with full coverage by those models yields a dense $37\times 215$
score submatrix used downstream.

\medskip
\noindent
\textbf{Score-to-rank conversion.}
For each question $i$, we set
$\mathrm{rank}_{\pi_i}(c)=1+|\{c':s_{i,c'}>s_{i,c}\}|$ — competition
ranks where ties share a rank. With this convention, the algorithms'
covering predicate
$\mathrm{rank}_{\pi_i}(c)\le\mathrm{rank}_{\pi_i}(x)$ is equivalent
to $s_{i,c}\ge s_{i,x}$ on the underlying scores, without random
tie-breaking. The median number of distinct ranks per question is
$13$ on \texttt{plot\_unscrambling} and $\le 3$ on the other three
task types; the diversity signal therefore comes primarily from
across-task heterogeneity (different models win different tasks),
with the within-task ordering acting as a tier partition. This is
an experimental weak-ranking extension of the strict-ranking model
used in the theory: ties are treated as weak wins, so a candidate
covers rival $x$ on question $i$ exactly when its score is at least
$x$'s score.

\medskip
\noindent
\textbf{Train/test split.}
The $215$ questions are split $50/50$, stratified by task type,
using random seed $28$: $n_{\mathrm{train}}=107$ and
$n_{\mathrm{test}}=108$. The same split is used by every method.
We evaluate committees on the held-out test ranks via
\[
\theta_{\mathrm{test}}(\widehat S)=
\min_{x\notin \widehat S}\,
\mathbb E_{i\sim\mathrm{test}}\left[
\mathbf 1\left\{\min_{c\in \widehat S}\mathrm{rank}_i(c)\le\mathrm{rank}_i(x)\right\}\right].
\]

\medskip
\noindent
\textbf{Algorithms and hyperparameters.}
\textsc{WOFG} (the per-step inner subroutine of the Minimax wrapper)
sweeps $\varepsilon\in\{3.0,1.5,0.65,0.35,0.13,0.05,0.02\}$. The
miss-rate upper bound $\bar\rho_i$ uses Hoeffding's inequality
applied to $200$ tentative pairs $(\pi,x)\sim P_D\times \lambda$ per
step, before conditioning on failure; the per-step inner loop is
truncated at $\texttt{max\_rounds\_per\_step}=10{,}000$. The Minimax
wrapper sweeps the round count $R\in\{1,3,5,10,25,50\}$ with each
inner WOFG call set to $\varepsilon=0.35$ (plus a few cells with
larger $\varepsilon$ to densify the low-$Q$ tail), learning rate
$\eta_{\mathrm{MW}}=\sqrt{\ln m/R}$ as in
Theorem~\ref{thm:minimax-theta-wrapper}. Per-round rival losses $g_x(S_t)$
are estimated full-info from the train rank matrix at cost
$n_{\mathrm{train}}(m-1)$ per round, rather than via Hoeffding
sampling. We report $\theta_{\mathrm{test}}$ of the lottery output —
the randomized strategy that draws one committee uniformly at random
from $\{S_1,\dots,S_R\}$ — rather than the bicriteria union
$\bigcup_t S_t$, which can have size up to $kR$ and violates the
cardinality constraint. ERM sweeps the sample count
$N\in\{10,10^{2},10^{3},10^{4}\}$, choosing $N$ per panel to land at
the requested $Q$. Borda-Top-$k$ has no hyperparameter. All runs
use $\delta=0.1$ and $50$ seeds per (algorithm, $k$, hyperparameter)
cell.

\medskip
\noindent
\textbf{Oracle.}
For every $k\in\{3,\dots,7\}$ we compute $\OPT_{\mathrm{test}}$ by
exhaustive enumeration of all $\binom{37}{k}$ committees on the test
rank matrix; the largest count is $\binom{37}{7}\approx10.3$
million, which a vectorized chunked scan completes in under two
seconds on a single node. The values are tabulated in
Figures \ref{fig:lbb-livebench-k-sweep} and \ref{fig:lbb-livebench-k-sweep-masked} as
the topmost gray-dashed curve and used as the horizontal reference
in Figure~\ref{fig:lbb-livebench-q-frontier}.

\medskip
\noindent
\textbf{Top-5 mask (Figure~\ref{fig:lbb-livebench-k-sweep-masked}).}
The masked variant removes the five candidates with lowest mean
train rank: \texttt{o1-preview-2024-09-12} (mean rank $1.4$),
\texttt{grok-3-beta} ($4.4$), \texttt{gpt-4o-2024-08-06} ($5.1$),
\texttt{step-2-16k-202411} ($5.2$), and
\texttt{claude-3-5-sonnet-20241022} ($5.4$). We mask top-5 rather
than top-3 because LiveBench's frontier has rank-1 standing well
apart from a tightly-clustered tier of four runners-up
($4.4$–$5.4$); smaller masks (top-1, top-3) leave most of that
cluster intact. The masked pool keeps $m-5=32$ candidates;
$\OPT_{\mathrm{test}}$ is recomputed on it.

\medskip
\noindent
\textbf{Q-frontier construction (Figure~\ref{fig:lbb-livebench-q-frontier}).}
The $x$-axis is the budget $Q$ on a log scale; the plotted range is
$Q\in[10^{5},3\times10^{7}]$. Borda-Top-$k$ is drawn as a
horizontal line at its full-info $\theta_{\mathrm{test}}$ value
(independent of $Q$); $\OPT_{\mathrm{test}}$ is the test-rank optimum
(also independent of $Q$). The Minimax-lottery curve sweeps
$(R,\varepsilon_{\mathrm{inner}})$ to produce a range of $Q$ values;
ERM sweeps the sample count $N$ to land at the same half-decadal
$Q$ targets ($Q\in\{10^{5},3{\cdot}10^{5},10^{6},3{\cdot}10^{6},10^{7},3{\cdot}10^{7}\}$).
For both methods, $(Q,\theta_{\mathrm{test}})$ pairs are pooled
across hypers, snapped to half-decadal log buckets, and aggregated
to mean $\pm$ 95\% CI over the $\geq10$ seeds per bucket.

\medskip
\noindent
\textbf{Supplementary baselines.}
We also tracked three Minimax-wrapper variants and a single-pass
inner subroutine that are not plotted in
Figure~\ref{fig:lbb-livebench-k-sweep}:
\textsc{Minimax-bicriteria} reports the union $S^{+}=\bigcup_{t=1}^{R}S_t$
across all $R$ inner committees; the resulting set has size up to
$kR$ and so violates the cardinality-$k$ constraint, but it is what
the wrapper guarantees in expectation.
\textsc{Minimax-bestsingle} reports the inner committee with the
largest train-$\theta$, i.e.\ $\arg\max_t \widehat\theta_{\mathrm{train}}(S_t)$;
this is size-$k$ but has no formal $\theta$-guarantee from the
wrapper (it is just $\max$ over the WOFG outputs).
\textsc{WOFG} (single pass) is one call to the per-step inner
subroutine without the multiplicative-weights wrapper.
Table~\ref{tab:livebench-baselines} reports their mean
$\theta_{\mathrm{test}}$ at $Q\approx10^{6}$ alongside the
methods plotted in Figure~\ref{fig:lbb-livebench-k-sweep}. The bicriteria
union is consistently above the lottery by at most $0.005$ ($0.001$
at $k=3$, growing to $0.005$ at $k=7$), with the gap attributable
to its size-up-to-$kR$ relaxation; bestsingle and WOFG land within
$\pm 0.006$ of the lottery at every $k$ where data is available.
Single-pass WOFG at $k=3$ never reaches $Q\ge10^{6}$ in our
$\varepsilon$ sweep (the largest budget realized is $\approx
5.5\times10^{5}$), so the corresponding cell in
Table~\ref{tab:livebench-baselines} is left blank.

\begin{table}[!htb]
  \centering
  \small
  \begin{tabular}{lccccc}
    \toprule
    Method & $k=3$ & $k=4$ & $k=5$ & $k=6$ & $k=7$ \\
    \midrule
    Minimax-lottery (ours)   & 0.9537 & 0.9544 & 0.9579 & 0.9589 & 0.9616 \\
    Minimax-bicriteria       & 0.9539 & 0.9559 & 0.9603 & 0.9601 & 0.9669 \\
    Minimax-bestsingle       & 0.9537 & 0.9541 & 0.9562 & 0.9594 & 0.9669 \\
    WOFG (single pass)       & ---    & 0.9537 & 0.9560 & 0.9611 & 0.9617 \\
    ERM                      & 0.9433 & 0.9500 & 0.9533 & 0.9531 & 0.9565 \\
    Borda-Top-$k$            & 0.8889 & 0.9074 & 0.9352 & 0.9537 & 0.9537 \\
    \midrule
    $\OPT_{\mathrm{test}}$ (oracle) & 0.9537 & 0.9722 & 0.9815 & 0.9907 & 0.9907 \\
    \bottomrule
  \end{tabular}
  \caption{Pairwise feedback at $Q\approx10^{6}$: mean
    $\theta_{\mathrm{test}}$ across $50$ seeds.
    \textsc{Minimax-bicriteria} relaxes the cardinality constraint
    (committee size up to $kR$); the other Minimax variants and WOFG
    are size-$k$.}
  \label{tab:livebench-baselines}
\end{table}

%-------------------------------------------------------------------
\subsection{Compute resources}
\label{app:compute}

The experiments split cleanly into a one-time \emph{data-generation}
stage on GPU (running the candidate LLMs over the multilingual QA
benchmark to populate the binary $U$ matrix) and an
\emph{algorithm-evaluation} stage on CPU (every committee selection
sweep, oracle, and plot). The pairwise side does not need a GPU
because we consume LiveBench's already-published score matrix.

\medskip
\noindent
\textbf{Data-generation (GPU).}
For the binary side, we serve each of the $m=89$ candidate LLMs
(five base models $\times$ $20$ prompt variants) on a single
NVIDIA H200~SXM GPU (141~GB HBM3e) and run inference over the
$1{,}976$-question benchmark with
vLLM~\citep{kwon2023vllm} orchestrated through Inspect
AI~\citep{aisi2024inspect}. Answers are short extractive spans and
each per-question call completes in $1$–$2$~seconds, so a full
candidate's evaluation finishes in under one GPU-hour; the full pool
requires roughly $\mathbf{60}$–$\mathbf{100}$~\textbf{GPU-hours} on
H200 hardware. We did not record per-run wall-clock times, so this
estimate is based on representative per-question latencies measured
on a few sample candidates. The pairwise (LiveBench) side reuses
the public \texttt{model\_judgment} table and consumes \emph{zero}
GPU compute.

\medskip
\noindent
\textbf{Algorithm evaluation (CPU).}
All committee selection sweeps, oracle enumerations, and plot
generation run on a single AWS \texttt{c7i.48xlarge} instance
($192$ vCPUs of Intel Sapphire Rapids, $384$~GB RAM, EBS-only
storage). Concretely:
\begin{itemize}
  \item \textbf{Binary AFG / ERM / Top-$k$ / UniformGreedy /
    UCB-Greedy sweeps.} Per-cell cost is dominated by AFG's inner
    rejection-sampling loop, which runs at a few thousand
    candidate-vs-task reveals per CPU-second; a $50$-seed cell at
    $\varepsilon=0.05$ takes $\sim1$~minute single-threaded.
    Across all $\varepsilon$ values, all five $k$, and all five
    methods, the full sweep is approximately $\mathbf{2}$–$\mathbf{4}$
    \textbf{CPU-hours} (parallelized across cores in well under
    $5$~minutes wall-time).
  \item \textbf{Binary $\OPT_{\mathrm{test}}$ (exhaustive +
    cross-check ILP).} Numba-JIT enumeration of all
    $\binom{89}{k}$ size-$k$ committees on the test split scales
    linearly in $\binom{m}{k}$; the $k=7$ scan
    ($\sim6.9\times 10^9$ committees, $\approx48$~GB int8 array)
    is the bottleneck and runs in roughly $5$~minutes
    parallelized across the $192$ vCPUs. Total OPT cost across all
    five $k$ is under $\mathbf{10}$ \textbf{CPU-minutes}. The
    cross-check ILP via the HiGHS solver in
    \texttt{scipy.optimize.milp} adds a few seconds per $k$.
  \item \textbf{Pairwise sweeps (LiveBench).} The candidate pool is
    smaller ($m=37$), so every cell is faster. A full
    Minimax-lottery sweep ($R\in\{1,3,5,10,25,50\}$ over five $k$,
    $50$ seeds each) finishes in approximately
    $\mathbf{20}$–$\mathbf{40}$ \textbf{CPU-minutes} across cores
    ($\le2$ minutes wall-time). ERM and Borda each run in under a
    minute. $\OPT_{\mathrm{test}}$ on the test ranks completes in
    under two seconds per $k$ (largest scan
    $\binom{37}{7}\approx10.3$M committees).
  \item \textbf{Masked sweeps.} Re-running the algorithms on the
    masked candidate pools (top-5 on both sides) duplicates the
    per-pool cost and adds a recomputed OPT. The masked binary OPT
    at $k=7$ requires
    $\binom{84}{7}\approx4.5\times 10^{9}$ enumerations, taking
    roughly $\sim4$~minutes parallelized.
\end{itemize}
End-to-end, reproducing every CSV in \texttt{data/} from the binary
$U$ matrix and the LiveBench rank matrix takes on the order of
$\mathbf{10}$ \textbf{CPU-hours} on the
\texttt{c7i.48xlarge} (well under one wall-clock hour parallelized).
Plot generation is sub-second per panel.

\medskip
\noindent
\textbf{Memory and storage.}
The largest in-RAM array is the binary OPT enumeration at $k=7$
($\approx 48$~GB int8 for $\binom{89}{7}$ committees), which fits
comfortably in the $384$~GB RAM of \texttt{c7i.48xlarge}; chunked
parallel scanning keeps per-worker peak memory below $1$~GB. All
intermediate CSVs total under $50$~MB; the binary $U$ matrix is
$\sim200$~KB at int8.

\medskip
\noindent
\textbf{Compute used beyond what is reported.}
The reported figure represents a small fraction of the total
research compute. Earlier iterations included: (i) preliminary
algorithm sweeps at coarser $\varepsilon$ grids and smaller seed
counts to calibrate hyperparameter ranges; (ii) an earlier MTEB-based
pairwise experiment (now superseded by the LiveBench panel reported
here) with comparable per-sweep cost; (iii) ablations on the masked
pool at intermediate mask sizes (top-1, top-3, top-5) only some of
which appear in the paper; (iv) several $Q$-binning and
seed-aggregation choices for the $Q$-frontier that we trialled
before settling on half-decadal log buckets. Conservatively, total
research compute is roughly $\mathbf{3}$–$\mathbf{5}\times$ the
end-to-end reproduction cost given above on the CPU side, and an
additional $\sim50\%$ overhead on the GPU side from
prompt-variant calibration runs that did not make the final pool.

%-------------------------------------------------------------------
\subsection{Reading the figures}
\label{app:figure-readout}

\medskip
\noindent
\textbf{$k$-sweeps at $Q\approx10^{6}$
(Figures~\ref{fig:lbb-binary-k-sweep} and \ref{fig:lbb-livebench-k-sweep}).}
On binary feedback (Figure~\ref{fig:lbb-binary-k-sweep}), the gap between
AFG and $\OPT_{\mathrm{test}}$ stays $\approx0.05$–$0.06$ across
all $k\in\{3,\dots,7\}$, while the gap from ERM to
$\OPT_{\mathrm{test}}$ holds at $\approx0.10$ across all $k$, and
the gap from Top-$k$ widens from $0.13$ at $k=3$ to $0.20$ at $k=7$.
On pairwise feedback (Figure~\ref{fig:lbb-livebench-k-sweep}),
Minimax-lottery sits at $\theta_{\mathrm{test}}=0.954$ at $k=3$ —
exactly matching $\OPT_{\mathrm{test}}$ — and climbs to $0.962$ at
$k=7$, $0.029$ below the OPT ceiling of $0.991$. Borda-Top-$k$
exhibits a steep left-side climb: it rises from
$\theta_{\mathrm{test}}=0.889$ at $k=3$ to $0.954$ at $k=7$ as it
adds progressively more diverse leaders, trailing Minimax by
$0.05$–$0.07$ at $k\le4$ and narrowing to within $0.01$ at
$k\ge6$. ERM tracks Minimax with a roughly constant
$0.005$–$0.01$ gap below across all $k$.

\medskip
\noindent
\textbf{Mask-the-leaders (Figures~\ref{fig:lbb-binary-k-sweep-masked} and \ref{fig:lbb-livebench-k-sweep-masked}).}
Masking the strongest singletons is the regime where ``pick the
best models'' is structurally weak. On binary feedback
(Figure~\ref{fig:lbb-binary-k-sweep-masked}), masking the five strongest
Swallow variants forces every method to compose its committee
across language families rather than within Japanese alone. AFG
keeps its $\approx0.05$–$0.06$ gap to $\OPT_{\mathrm{test}}$
across $k$ (mask-5 AFG: $0.64$ at $k=3$ rising to $0.79$ at $k=7$;
OPT: $0.69 \to 0.85$). Top-$k$ at $k=3$ rises from $0.56$
unmasked to $0.60$ masked — the \emph{opposite} of the small-mask
regime: dropping the entire Swallow front-tier compels Top-$k$ to
pick diverse-language singletons. The AFG–Top-$k$ gap therefore
holds at $\approx0.04$–$0.08$ across $k$ rather than widening.
ERM trails AFG by $0.04$–$0.06$, narrowing as $k$ grows.
On pairwise feedback (Figure~\ref{fig:lbb-livebench-k-sweep-masked}),
masking the five frontier leaders causes Borda-Top-$k$ to plateau
near $\theta_{\mathrm{test}}=0.88$–$0.90$ for all
$k\in\{3,\dots,7\}$ — its next-tier picks add little marginal
coverage. Minimax-lottery climbs steadily from $0.914$ at $k=3$ to
$0.964$ at $k=7$, opening a Minimax–Borda margin that starts at
$0.034$ at $k=3$, peaks at $0.077$ at $k=5$, and settles around
$0.066$ at $k=7$. ERM tracks Minimax $0.01$–$0.03$ below across all
$k$. $\OPT_{\mathrm{test}}$ on the masked pool rises from $0.935$ at
$k=3$ to $0.982$ at $k=7$; the residual Minimax–OPT gap stays
$\le 0.02$ at every $k$ — tighter than in the unmasked panel
because the post-mask diversity tier is closer in mean rank to the
post-mask oracle.

\medskip
\noindent
\textbf{$Q$-frontier at $k=3$
(Figures~\ref{fig:lbb-binary-q-frontier} and \ref{fig:lbb-livebench-q-frontier}).}
The frontier panels test sample efficiency. On binary feedback
(Figure~\ref{fig:lbb-binary-q-frontier}), AFG passes the Top-$k$
horizontal reference between $Q=3\times10^{4}$ and $Q=10^{5}$
and saturates near $v_{\mathrm{test}}=0.66$ for
$Q\ge10^{6}$ — within $0.03$ of $\OPT_{\mathrm{test}}=0.6883$.
ERM rises more slowly: it reaches $v_{\mathrm{test}}=0.638$ at
$Q=10^{8}$, matching AFG's $Q=10^{6}$ value at a $100\times$
larger budget.
On pairwise feedback (Figure~\ref{fig:lbb-livebench-q-frontier}),
Minimax-lottery climbs from $\theta_{\mathrm{test}}=0.939$ at
$Q=10^{5}$ and exactly meets $\OPT_{\mathrm{test}}=0.954$ at
$Q=10^{6}$, plateauing on the ceiling for the rest of the sweep.
Borda-Top-$k$ sits at the horizontal reference value $0.889$;
Minimax-lottery passes it before $Q=10^{5}$. ERM, by contrast,
reaches $\theta_{\mathrm{test}}=0.952$ only at
$Q=3\times10^{7}$, the budget at which Minimax-lottery has
already saturated for $1.5$ decades.

\end{document}